\providecommand{\Ocomp}{\ensuremath{\mathcal{O}}}
\DeclareMathOperator{\weight}{w}
\DeclareMathOperator{\cl}{cl}
\newcommand{\prref}[1]{\prettyref{#1}}
\theoremstyle{plain}
\newcommand{\letters}{\ensuremath{B}}
\newcommand{\variables}{\ensuremath{\Omega}}
\newcommand{\morphism}{\ensuremath{h}}
\newcommand{\ie}{i.e.,\xspace}
\newcommand{\eg}{e.g.,\xspace}
\newcommand{\IFF}{if and only if\xspace}
\newcommand{\homo}{homomorphism\xspace}
\newcommand{\morph}{morphism\xspace}
\newcommand{\set}[2]{\left\{#1\mathrel{\left|\vphantom{#1}\vphantom{#2}\right.}#2\right\}}
\newcommand{\oneset}[1]{\left\{\mathinner{#1}\right\}}
\newcommand{\os}{\oneset}
\newcommand{\sm}{\setminus}
\newcommand{\sse}{\subseteq}
\newcommand{\vdmatrix}[4]{\left(\begin{smallmatrix}#1 & #2\\ #3 & #4\end{smallmatrix}\right)}
\newcommand{\abs}[1]{\left|\mathinner{#1}\right|}
\newcommand{\Abs}[1]{\left\Vert\mathinner{#1}\right\Vert}
\newcommand{\N}{\ensuremath{\mathbb{N}}}
\newcommand{\Z}{\ensuremath{\mathbb{Z}}}
\newcommand{\B}{\ensuremath{\mathbb{B}}}
\newcommand{\M}{\ensuremath{\mathbb{M}}}
\newcommand{\PSPACE}{\ensuremath{\mathsf{PSPACE}}}
\newcommand{\npoly}{\ensuremath{\mathsf{npoly}}}
\newcommand{\poly}{\ensuremath{\mathsf{poly}}}
\newcommand{\NPSPACE}{\ensuremath{\mathsf{NPSPACE}}}
\newcommand{\EXPSPACE}{\ensuremath{\mathsf{EXPSPACE}}}
\newcommand{\NP}{\ensuremath{\mathsf{NP}}}
\newcommand{\Bn}{\B^{m\times m}}
\newcommand{\Mn}{{\M}_{2m}}
\renewcommand{\phi}{\varphi}
\newcommand{\sig}{\sigma}
\newcommand{\sol}[1]{\ensuremath{\sig(#1)}}
\newcommand{\mysolution}{\ensuremath{\sig}}
\newcommand{\Sig}{\Sigma}
\newcommand{\Gam}{\GG}
\newcommand\GG{\Gamma}
\newcommand\OO{\Omega}
\newcommand{\cS}{\mathcal{S}}
\newcommand{\ov}[1]{\overline{#1}}
\newcommand{\inv}[1]{\ov{#1}}
\newcommand{\oi}[1]{{#1}^{-1}}
\newcommand{\invol}{\overline{\,^{\,^{\,}}}}
\newcommand{\algofont}[1]{\textnormal{\textsc \selectfont\sffamily  #1}}
\newcommand{\algncr}{\algofont{CompNCr}}
\newcommand{\algcr}{\algofont{CompCr}}
\newcommand{\algonephase}{\algofont{TransformEq}}
\begin{document}

\pagestyle{plain}

\title{Finding All Solutions of Equations in Free Groups and Monoids with Involution\titlerunning{Solutions of Equations}}

\author{%
  Volker Diekert\inst{1} \and
   Artur Je\.z\inst{2,3}\fnmsep\thanks{Supported by Humboldt Research Fellowship for Postdoctoral Researchers}
	\and
   Wojciech Plandowski\inst{4} 
   }
\authorrunning{V.~Diekert, A.~Je\.z, W.~Plandowski}

\institute{%
  Institut f\"ur Formale Methoden der Informatik,
  University of Stuttgart, Germany \and 
  Institute of Computer Science, University of Wroclaw, Poland \and
	Max Planck Institute f\"ur Informatik, Saarbr\"ucken, Germany \and
	Institute of Informatics, University of Warsaw, Poland}

\maketitle
\begin{abstract}
  The aim of this paper is to present a \PSPACE{} algorithm which yields a finite 
  graph of exponential size and which 
  describes the set of all solutions of equations in free groups as well as
	the set of all solutions of equations in free monoids with involution in the presence of 
  rational constraints. This became possible due to the recently invented \emph{recompression} technique of the second author.

He successfully applied the recompression technique for pure word equations without involution or rational constraints.
In particular, his method could not be used as a~black box for free groups (even without rational constraints).
Actually, the presence of an involution (inverse elements) and rational constraints complicates the situation
and some additional analysis is necessary.
Still, the recompression technique is general enough to accommodate both extensions.
In the end, it simplifies proofs that solving word equations is in \PSPACE{} (Plandowski 1999)
and the corresponding result for equations in free groups with rational constraints (Diekert, Hagenah and Guti{\'e}rrez 2001).
As a byproduct we obtain a~direct proof that it is decidable in \PSPACE{} whether or not the solution set is finite.
\footnote{A preliminary version of this paper was presented as an invited talk at CSR 2014 in Moscow, June 7--11, 2014.}
\end{abstract}

\section*{Introduction}\label{sec:intro}
A word equation is a simple object. It consists of a pair $(U,V)$ of words over constants and variables and a solution is a substitution of the variables by words in constants such that $U$ and $V$ become identical words. 
The study of word equations has a long tradition. Let \emph{WordEquation} be the 
problem of deciding whether a given word equation has a solution. 
It is fairly easy to see that WordEquation reduces to Hilbert's 10th Problem
(in Hilbert's famous list presented in 1900 for his address
at the International Congress of Mathematicians).
Hence in the mid 1960s the Russian school of mathematics outlined the roadmap to prove undecidability of  Hilbert~10 th Problem
via undecidability of WordEquation.
The program failed in the sense that Matiyasevich proved Hilbert's 10th Problem to be undecidable in 1970, but by a completely different method, which employed number theory.
The missing piece in the proof of the undecidability of Hilbert's 10th Problem
was based on methods due to Robinson, Davis,  and Putnam~\cite{mat93}.
On the other hand, in 1977 Makanin showed in a seminal paper~\cite{mak77} that WordEquation is decidable! 
The program went a different way, but its outcome were two major achievements in mathematics. Makanin's algorithm became famous since it  settled a long standing problem and also because his algorithm had an extremely complex termination proof. In fact, his paper showed that the 
existential theory of equations in free monoids is decidable. This is close to the borderline of decidability
as already the $\forall\exists^{3}$ positive theory of free monoids is undecidable~\cite{dur95}. 
Furthermore Makanin extended his results to free groups and showed that the existential and positive theories in free groups are decidable~\cite{mak82,mak84}.
Later Razborov was able in \cite{raz87} (partly shown also in \cite{raz93}) to describe the set of all solutions for systems of equations in free groups (see also \cite{KMII98} for a  description of Razborov's work). This line of decidability results culminated in the proof of Tarski's conjectures by Kharlampovich and
Myasnikov  in a series of papers ending in \cite{KMIV06}. In particular, they  showed that the  theory of free groups is decidable. In order to prove this fundamental result the description of  all solutions
of an equation in a free group is crucial.

Another branch of research was to extend Makanin's result to more general algebraic structures including free partially commutative monoids \cite{mat97lfcs,dmm99tcs}, free partially commutative monoids with involution, graph groups (also known as right-angled Artin groups) \cite{dm06}, graph products \cite{DiekertLohrey08}, and hyperbolic groups \cite{rs95,DahmaniGui10}. 
In all these cases the existential theory of equations is decidable. 
Proofs used the notion of \emph{equation with rational constraints}, which was first developed in the habilitation of Schulz~\cite{sch91}.
The concept of equation with rational constraints is used also throughout the present paper.

In parallel to these developments there were drastic improvements in the 
complexity of deciding Wordequation. It is fairly easy to see that 
the problem is \NP-hard. Thus, \NP{} is a lower bound. 
First estimations for the time complexity on Makanin's algorithm for free monoids led to a tower of several exponentials, but it was lowered over time to \EXPSPACE{} in \cite{gut98focs}. On the the other hand it was shown in \cite{kp96} that 
Makanin's scheme for solving equations in free groups is not primitive recursive.
(Already in the mid 1990 this statement was somehow puzzling and counter-intuitive, as it suggested a strange crossing of complexities:
The  existential theory in free monoids seemed to be easier than the one in free groups,
whereas it was already known at that time that the positive theory in free monoids is undecidable, but decidable in free groups.)
The next important step was done by Plandowski and Rytter,
whose approach~\cite{pr98icalp} was the first essentially different than Makanin's original solution.
The main idea was to apply compression to WordEquation and the result was that the length-minimal solution
of a word equation compresses well, in the sense that {L}empel-{Z}iv encoding, which is a popular practical standard of compression,
of such a solution is exponentially smaller than the solution itself (if the solution is at least exponential in the length of the equation).
This yielded an $\npoly(n,\log N)$ algorithm for WordEquation,
note that at that time the only available bound on $N$ was the triply exponential bound by Makanin.
Still, this result prompted Plandowski and Rytter to formulate a (still open) conjecture that WordEquation is \NP-complete.

Soon after a doubly exponential bound on $N$ was shown by Plandowski~\cite{Pl2},
this bound in particular used the idea of representing the solutions in a compressed form (in fact, the equation as well is kept in a compressed form)
as well as employing a novel type of factorisations.
Exploiting better the interplay between factorisations and compression
Plandowski showed that WordEquation is in \PSPACE{}, \ie
it can be solved in polynomial space and exponential time~\cite{pla04jacm}.
His method was quite different from Makanin's approach and more symmetric.
Furthermore, it could be also used  to generate all solutions of a given word equation~\cite{Pl5},
however, this required non-trivial extensions of the original method.

Using Plandowski's method Guti{\'e}rrez showed that satisfiability of equations in free groups is in \PSPACE{} \cite{gut2000stoc},
which led Diekert, Hagenah and Guti{\'e}rrez to the result that the existential theory of equations with rational 
constraints in free groups is \PSPACE-complete \cite{dgh05IC}.
Without constraints \PSPACE{} is still the best upper bound, although the existential theories for equations in free monoids (with involution) and
free groups are believed to be \NP{} complete.
Since this proof generalized Plandowski's satisfiability result~\cite{pla04jacm},
it is tempting to also extend the generator of all solutions~\cite{Pl5}.
Indeed, Plandowski claimed that his method applies also to free groups with rational constraints, but he found a gap in his generalization~\cite{Pl6}.

However in 2013 another substantial progress in solving word equations was done due to
a powerful recompression technique by Je\.z~\cite{wordequations}. His new proof that WordEquation is in \PSPACE{} 
simplified the existing proofs drastically.
In particular, this approach could be used to describe the set of all solutions rather easily,
so the previous construction of Plandowski~\cite{Pl5} was simplified as well.
 
What was missing however was the extension to include free monoids with involution and therefore free groups and another missing block was the 
the presence of rational constraints. 
Both extensions are the subject of the present paper. 

\subsubsection*{Outline}
We first follow the approach of~\cite{dgh05IC} how to (bijectively) transform 
the set of all solutions of an equation with rational constraints over a free group in polynomial time
into a set of all solutions of an equation with regular constraints over a free monoid with involution,
see Section~\ref{subsec: free groupmonoid}.
Starting at that point in Section~\ref{sec:groas} we formulate the main technical claim of the paper:
existence of a procedure that transforms equations over the free monoid and (roughly speaking) keeps the set of solutions
as well as does not increase the size of the word equation;
in particular in this section we make all the intuitive statements precise.
Moreover, we show how this procedure can be used to create
\PSPACE{}-transducer which produces a finite graph
(of exponential size) describing all solutions and which is nonempty \IFF the equation has at least one solution. Moreover, the graph also 
encodes whether or not there are finitely many solutions, only. 
The technique of recompression simplifies thereby~\cite{dgh05IC} and it yields the important new feature that we can describe all solutions.

\section{Preliminaries}
As already mentioned, the general plan is to reduce the problem of word equation with regular constraints over free group
to the problem of word equation with regular constraints over a free monoid with an involution
and give an algorithm for the latter problem.
In this section we first introduce all notions regarding the word equation over the free monoid, see Section~\ref{subsec: free monoid},
and only afterwards the similar notions for a free group together with the reduction of the latter scenario to the former one,
see Section~\ref{subsec: free groupmonoid}.

\subsection{Word equations over a free monoid with involution}
\label{subsec: free monoid}
Let $A$ and $\OO$ be two finite disjoint sets, called
{\em the alphabet of constants} and {\em the alphabet of variables} (or \emph{unknowns}), respectively.
For the purpose of this paper $A$ and $\OO$ are endowed with an \emph{involution}, which is
is a mapping $\invol$ such that 
$\overline{\overline{x}} = x$ for all elements.
In particular, an involution is a~bijection.
If involution is defined for a monoid, then  we additionally require that $\overline{xy}=\overline{y}\,\overline{x}$ for all its elements $x, y$.
This applies in particular to a free monoid $A^*$
over a set with involution: For a word $w = a_1 \cdots a_m$  we thus have $\ov{w} = \ov{a_m} \cdots \ov{a_1}$. If $\ov a = a$ for all $a \in A$
then $\ov{w}$ simply means to read the word from right-to-left. 
It is sometimes useful to consider \emph{involution closed sets}, \ie such that $\inv S = S$.

A {\em word equation} is a pair $(U,V)$
of words over $A\cup\OO$, often denoted by $U=V$.
A {\em solution} \mysolution{}
of a word equation $U=V$ is a substitution $\sig$ of unknowns in $\OO$ by
words over constants, such that the replacement of unknowns by the substituted words in $U$ and in $V$ give the same word.
Moreover, as we work with involutions we additionally demand that the solution satisfies 
$\sig(\ov X) = \ov{\sig(X)}$ for all $X \in \OO$.
If an equation 
does not have simultaneous occurrences of $X$ and $\ov X$ where 
$X \neq \ov X$ then this additional requirement is vacuous. 
A solution is \emph{non-empty}, if $\sol X \neq \epsilon$ for every variable $X$ such that $X$ or $\inv X$ occurrs in the equation.
During the proof we will consider only non-empty solutions.
This is non-restrictive, as we can always non-deterministically guess the variables that are assigned $\epsilon$ by a solution
and remove such variables form the equation.
On the other hand, it is useful to assume that a solution assigns $\epsilon$ to each variable $X$ such that $X$, nor $\inv X$ occur in the equation:
during the algorithm we remove the variables that are assigned $\epsilon$ in the solution.
Nevertheless, we need to know the substitution for such $X$, as we create the set of all solutions by backtracking.

\begin{example}
Let $\OO=\os{X,Y, \ov X, \ov Y}$ and $A=\os{a,b}$ with  $b = \ov a$. 
Then $XabY=YbaX$ behaves as a  word equation without involution 
One of its solutions is the substitution
$\sig(X)=bab$, $\sig(Y)=babab$. Under this substitution we have $\sig(X)ab \sig(Y)=bababbabab=\sig(Y)ba \sig(X)$. It can be proved that 
the solution set of the equation $XabY=YbaX$ is closely related to Sturmian words \cite{IlPl}.
\end{example}

The notion of word equation immediately generalizes to a system of word 
equations $(U_1, V_1), \ldots, (U_s, V_s)$.
In this case a solution $\sig$ must satisfy all $(U_i, V_i)$ simultaneously.
However, such a system can be reduced to a single equation
$(U_1a \cdots U_s a U_1b \cdots U_s b, V_1a \cdots V_s a V_1b \cdots V_s b)$
where $a$, $b$ are fresh constants with $a \neq b$.
Furthermore, this reduction remains valid when additionally regular constraints are introduced,
such constraints are properly defined below.

Lastly, we always assume that the involution on $\OO$ is without fixed points:
otherwise for a variable $X$ such that $\inv X = X$ we can introduce a fresh variable $X'$, set $\inv X = X'$ and add an equation $X = X'$,
which ensures that $\sol X = \inv{\sol X}$.
In this way we can avoid some case distinctions.

\subsubsection*{Constraints}
Let $\mathcal C$ be a class of formal languages, then a system of 
{\em word equations with constraints in $\mathcal C$} is given by a finite list $(U_i,V_i)_{i}$
of word equations and a finite list of constraints of type 
$X \in L$ (resp.{} $X \notin L$) where $X \in \OO$ and $L \sse A^*$ with $L \in \mathcal C$.
For a solution we now additionally demand that $\sig(X) \in L$ 
(resp.{} $\sig(X) \notin L$) for all constraints.

Here, we focus on rational and recognizable (or regular) constraints and 
we assume that the reader is familiar with basic facts in formal language theory. 
The classes of rational and recognizable subsets are
defined for every monoid $M$~\cite{eil74}, and they are incomparable, in general.  \emph{Rational} sets (or
languages) are defined inductively as follows. 
\begin{itemize}
	\item All finite subsets of $M$ are rational.
	\item If $L_1,L_2 \subseteq M$ are rational, then the union $L_1 \cup L_2$, the concatenation $L_1 \cdot L_2$, and the generated submonoid 
$L_1^*$ are rational.
\end{itemize}
A subset
$L \subseteq M$ is called \emph{recognizable}, if  there is a
homomorphism $\rho$ to some finite monoid $E$ such that $L =
\rho^{-1}\rho(L)$.  We also say that $\rho$ (or $E$) \emph{recognizes} $L$ in this case. 
Kleene's Theorem states that in finitely generated
free monoids both classes coincide, and we follow the usual convention
to call a rational
subset of a free monoid {\em
  regular}. If $M$ is  generated by some finite set $\Gam \sse M$  (as it always the case in this paper) then every rational set is the image 
  of a regular set $L$ under the canonical \homo from $\Gam^*$ onto $M$;
  and every recognizable set of $M$  is rational. 
  (These statements are trivial consequences of Kleene's Theorem.)   
 Therefore,  
  throughout we assume that a rational (or regular) language is specified by a nondeterministic finite automaton, \emph{NFA} for short.

Consider a list of  $k$ regular languages $L_i \sse \Sig^*$ each of them being  specified 
by some NFA with  $m_i$ states. The disjoint union of these automata 
yields a single NFA with $m = m_1 + \cdots + m_k$ states which accepts all $L_i$
by choosing appropriate initial and final sets for each $L_i$;
we may assume that the NFA has state set $\os{1,\ldots m}$. Then each constant $a\in A$ defines a 
Boolean $m\times m$ matrix $\tau(a)$ where the entry $(p,q)$ is $1$ if
$(p,a,q)$ is a transition and $0$ otherwise. This yields a 
\homo $\tau : A^* \to \Bn$ such that $\tau$ recognizes $L_i$ for all 
$1 \leq i \leq k$.

Moreover, for each $i$ there is a row vector 
$I_i \in \B^{1\times n}$ and a column vector $F_i \in \B^{n\times 1}$
such that we have $w \in L_i$ \IFF $I_i \cdot \tau(w) \cdot F_i = 1$. 

For a matrix $P$ we let $P^{T}$ be its transposition.
There is no reason that  $\tau(\ov a)=\tau(a)^{T}$,
hence $\tau$ is not necessarily a homomorphism which respects the involution.
So, as done in~\cite{dgh05IC}, we  let $\Mn \subseteq \B^{2m \times 2m}$
denote the following monoid with
involution:
$$\Mn= \set{\vdmatrix P00Q}{ P,Q \in \Bn} \text { with }  \overline{\vdmatrix  P00Q} =
\vdmatrix {Q^T}00{P^T}.
$$
Define  $\rho(a) = \vdmatrix{\tau(a)}{0}{0}{\tau(\ov a)^{T}}$. Then the \homo 
$\rho : A^* \to \Mn$ respects the involution. Moreover 
$\rho$ recognizes all $L_i$ and $\ov {L_i} =  \set{\ov w}{w \in L_i}$.

Consider regular constraints $X \in L$ and $X \notin L'$.
As $\rho$ recognises both $L$ and $L'$, the conditions $\sol X \in L$ and $\sol X \notin L'$
are equivalent to $\rho(\sol X) \in \rho(L)$ and $\rho(\sol X) \notin \rho(L')$.
As the image of $\rho$ is a subset of $\Mn$, there are only finitely many elements in it.
Thus all regular constraints on $X$ boil down to restrictions of possible values of $\rho(\sol X)$.
To be more precise, if all positive constraints on $X$ are $(L_i)_{i \in I}$ and all negative are
$(L_i')_{i \in I'}$, all those constraints are equivalent to
$$
\rho(\sol X) \in \bigcap_{i \in I} \rho(L_i) \cap \bigcap_{i \in I'} (\Mn \setminus \rho(L_i')) \enspace .
$$
Thus, as a preprocessing step our algorithm guesses the $\rho(\sol X)$, which we shall shortly denote as $\rho(X)$,
moreover this guess needs to satisfy
\begin{itemize}
	\item $\rho(\overline{X})=\overline{\rho(X)}$
	\item $\rho(X) \in \rho(L)$ for each positive constraint $L$ on $X$;
	\item $\rho(X) \notin \rho(L')$ for each negative constraint $L'$ on $X$.	
\end{itemize}
In the following we are interested only in solutions for which $\rho(\sol X) = \rho(X)$.
Note that, as $\rho$ is a function, each solution of the original system corresponds to a solution for an exactly one such a guess,
thus we can focus on generating the solutions for this restricted problem.

We now give a precise definition of the main problem
we are considering in the rest of the paper:
\begin{definition}
  An {\em equation $E$ with constraints} is a tuple
$E=(A,\Omega, \rho ; U=V)$
containing the following items:
\begin{itemize}
\item An alphabet of constants with involution $A$.
\item An alphabet of variables with involution without fixed points $\OO$.
\item A mapping $\rho: A\cup \Omega \to \Mn$ such that 
      $\ov{\sigma(x)} = \sig(\ov x)$ for all
      $x \in A \cup \Omega$.
\item The word equation $U=V$ where $U,V \in (A \cup \Omega)^*$.
\end{itemize}
A {\em solution} of  $E$
  is a 
  \homo $\sigma: (A \cup \Omega)^* \to A^*$
  leaving the constants from $A$ invariant such that the following
  conditions are satisfied:
\[
\begin{array}{rclll}
 \sigma(U) & = & \sigma(V)\, , && \\
  \ov{\sigma(X)} & = & \sig(\ov X) & \textrm{for all} & X\in \Omega,\\
 \rho(\sigma(X)) & = & \rho(X) & \textrm{for all} & X\in \Omega.\\
 \end{array}
\]
  The {\em input size} of $E$ is given by 
$\Abs E  = \abs A + \abs \OO + \abs {UV} + m$.
\end{definition}

In the following, when this does not cause a confusion, we denote both the size of the instance and the length of the equation by $n$.
Note that we can always increase the size of the equation by repeating it several times.

The measure of size of the equation is accurate enough with respect to polynomial 
time and/or space. For example note that if an NFA has $m$ states then 
the number of transitions is bounded by $m \abs A$. Note also that
$\abs A$ can be much larger than the sum over the lengths of the equations and inequalities plus
the sum of the number of states of the NFAs in the lists for the constraints.

As already noted, by a convention, when a variable $X$ and its involution $\inv X$ are not present in the equation,
each solution assigns $\epsilon$ to both $X$ and $\inv X$.
In particular, this assignment should satisfy the constraint, \ie $\rho(X) = \rho(\epsilon)$
for each variable not present in the solution.
Note that the input equation can have variables that are not present in the equation and have constraints other than $\rho(X) = \rho(\epsilon)$,
however, such a situation can be removed by a simple preprocessing.

\subsubsection{Equations during the algorithm.}
During the procedure we will create various other equations and introduce new constants.
Still, the original alphabet $A$ never changes and new constants shall represent words in $A^*$.
As a consequence, we will work with equations over $\letters \cup \variables$,
where $B$ is the smallest alphabet containing $A$ and all constants in $UV\ov{UV}$.
We shall call such $B$ the \emph{alphabet of} $(U, V)$.
Note that $\abs B \leq \abs A + 2 \abs {UV}$ and we therefore we can ignore  $\abs B$ for the complexity.

Ideally, a solution of $(U, V)$ assigns to variables words over the alphabet of $(U, V)$, call it $\letters$.
However, as our algorithm transforms the equations and solutions,
it is sometimes more convenient to allow also solutions that assign words from some $\letters' \supset \letters$.
A solution is \emph{simple} it if uses only constants from $\letters$,
by default we consider simple solutions.
Whenever we consider a non-simple solution, we explicitly give the alphabet over which this is a solution.

To track the meaning of constants outside $A$, we additionally require that a solution (over an alphabet $\letters'$)
supplies some homomorphism $\morphism: \letters' \mapsto A^*$, which is constant on $A$
and compatible with $\rho$, in the sense that $\rho(b) = \rho(\morphism(b))$ for all $b \in B$.
(Due to its nature, we also assume that $\morphism(b)$ contains at least two constants for $b \in \letters' \setminus A$.)
Thus, in the following, a solution is a pair $(\mysolution, \morphism)$.
In particular, given an equation $(U, V)$ the $\morphism (\sol U)$ corresponds to a solution of the original equation.

A \emph{weight} of a solution $(\mysolution,\morphism)$ of an an equation $(U,V)$ is
\begin{equation}
\label{eq: weight}
\weight(\mysolution,\morphism)= |U| + |V| + \sum_{X \in \variables} |UV|_X \abs{\morphism(\sol X)} \enspace ,
\end{equation}
where $|UV|_X$ denotes the number of occurrences of $X$ in $U$ and $V$ together.
The main property of such defined weight is that it decreases during the run of the algorithm,
using this property we shall guarantee a termination of the algorithm:
each next equation in the sequence will have a smaller weight, which ensures that we do not cycle.

Given a non-simple solution $(\mysolution,\morphism)$ we can replace all constants $c \notin \letters$
(where $\letters$ is the alphabet of the equation) in all \sol X by $\morphism(c)$
(note, that as $\rho(c) = \rho(\morphism(c))$, the $\rho(X)$ is preserved in this way).
This process is called a \emph{simplification} of a solution and the obtained substitution $\mysolution'$ is a simplification of \mysolution.
It is easy to show that $\mysolution'$ is a solution and that $\morphism(\mysolution'(U)) = \morphism(\mysolution(U))$,
so in some sense both \mysolution{} and $\mysolution'$ represent the same solution of the original equation.
Lastly, \mysolution{} and $\mysolution'$ have the same weight, see Lemma~\ref{lem: simplify the solution}.
Thus, in some sense we can always simplify the solution.

As a final note observe that $\morphism$ is a technical tool used in the analysis,
it is not stored, nor transformed by the algorithm,
nor it is used in the graph representation of all solutions.

\begin{lemma}
\label{lem: simplify the solution}
Suppose that $(\mysolution,\morphism)$ is a solution of the equation $(U, V)$.
Then the simplification $(\mysolution',\morphism)$ of $(\mysolution,\morphism)$ is also a solution of $(U, V)$,
$\morphism(\mysolution'(U)) = \morphism(\mysolution(U))$ and $\weight(\mysolution,\morphism) = \weight(\mysolution,\morphism)$.
\end{lemma}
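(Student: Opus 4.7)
The plan is to view the simplification as a homomorphism and then exploit its interaction with $\morphism$, $\rho$, and the involution. Concretely, let $\letters$ be the alphabet of the equation and let $\letters'$ be the alphabet over which $(\mysolution,\morphism)$ is a solution. Define $\pi:(\letters')^*\to \letters^*$ to be the homomorphism that is the identity on $\letters$ and sends each $c\in \letters'\setminus \letters$ to $\morphism(c)\in A^*\subseteq \letters^*$. Then by definition $\mysolution'(X)=\pi(\mysolution(X))$ for every variable $X$. So everything reduces to three small facts about $\pi$:
\begin{itemize}
\item $\rho\circ\pi=\rho$, because $\rho(c)=\rho(\morphism(c))$ for all $c\in \letters'$;
\item $\pi$ commutes with the involution, because this is trivial on $\letters$ and follows from $\morphism(\ov c)=\ov{\morphism(c)}$ on $\letters'\setminus \letters$;
\item $\morphism\circ\pi=\morphism$, because $\morphism$ is constant on $A$ (so $\morphism$ applied to the $A$-word $\morphism(c)$ returns $\morphism(c)$), while $\pi$ is the identity on $\letters$.
\end{itemize}

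From these three facts the lemma is essentially immediate. First, $(\mysolution',\morphism)$ is a solution: the word equation is preserved since $\mysolution'(U)=\pi(\mysolution(U))=\pi(\mysolution(V))=\mysolution'(V)$; the constraint $\rho(\mysolution'(X))=\rho(X)$ follows from $\rho\circ\pi=\rho$ together with $\rho(\mysolution(X))=\rho(X)$; and $\ov{\mysolution'(X)}=\mysolution'(\ov X)$ follows from $\pi$ commuting with the involution and the corresponding property of $\mysolution$. Second, the identity $\morphism(\mysolution'(U))=\morphism(\mysolution(U))$ is just $\morphism\circ\pi=\morphism$ applied to $\mysolution(U)$. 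Third, for the weights, the same identity applied to each $\mysolution(X)$ yields $\abs{\morphism(\mysolution'(X))}=\abs{\morphism(\mysolution(X))}$, so the two sums in \prref{eq: weight} agree term by term (noting the correction that the intended equality is $\weight(\mysolution',\morphism)=\weight(\mysolution,\morphism)$).

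The only place requiring a moment of care is the involution clause, since the text introducing $\morphism$ states explicitly only that it is constant on $A$ and $\rho$-compatible. However, because $\morphism$ is meant to explain the meaning of freshly introduced constants as words in $A^*$ and such constants always come in involution-paired form, the property $\morphism(\ov c)=\ov{\morphism(c)}$ is the natural compatibility requirement, and indeed is the only nontrivial ingredient in checking $\ov{\mysolution'(X)}=\mysolution'(\ov X)$. If the reader prefers, one can simply fold this into the standing assumptions on $\morphism$; once it is available, the whole argument is a clean homomorphism chase and no case analysis on individual constants is needed.
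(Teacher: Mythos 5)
Your argument is correct and follows essentially the same route as the paper's proof: view the simplification as the replacement, in every $\sol X$, of each constant outside the equation's alphabet by its $\morphism$-image, and then observe that the word identity, the constraints, and the weight are all preserved because $\morphism\circ\morphism=\morphism$ (your $\morphism\circ\pi=\morphism$). If anything, your write-up is slightly more complete than the paper's: you explicitly verify the involution clause $\ov{\mysolution'(X)}=\mysolution'(\ov X)$ and the constraint $\rho(\mysolution'(X))=\rho(X)$, which the paper's proof only gestures at. Your remark that the required compatibility $\morphism(\ov c)=\ov{\morphism(c)}$ is not stated verbatim in the paper's standing assumptions on $\morphism$ is accurate; it is implicit in how $\morphism$ is built during the algorithm (fresh constants are always introduced in involution-closed pairs with matching $\morphism$-values), but it is cleaner to record it as a blanket assumption, as you suggest.
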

\begin{proof}
Let $\letters$ be the alphabet of the equation and $\letters'$ the alphabet of the solution $\mysolution$.
Consider any constant $b \in \letters' \setminus \letters$.
As it does not occur in the equation, all its occurrences in \sol U and \sol V come from the varibles, \ie from some \sol X.
Then replacing all occurrences of $b$ in each \sol X by the same string $w$ preserves the equality of $\sol U = \sol V$,
thus $\mysolution'$ is also a solution.
Since we replace some constants $b$ with $\morphism(b)$ (and $\morphism \circ \morphism = \morphism$),
clearly $\morphism(\sol X) = \morphism(\mysolution'(X))$ for each variable.
Furthermore, as $\rho(c) = \rho(\morphism(c))$ we have that $\morphism(\sol X) = \morphism(\mysolution'(X))$.
Thus, $\morphism(\mysolution'(U)) = \morphism(\mysolution(U))$ and $\weight(\mysolution,\morphism) = \weight(\mysolution,\morphism)$,
as claimed.
\qed
\end{proof}

\subsection{Word equations with rational constraints over free groups.}
\label{subsec: free groupmonoid}
By $F(\Gam)$ we denote the free group over a finite set $\Gam$. We let 
$A= \Gam \cup \oi \Gam$. Set also $\ov x = \oi x$ for all $x \in F(\Gam)$.
Thus, in (free) groups we identify $\oi x$ and $\ov x$.
By a classical result of Benois \cite{ben69} 
rational subsets of $F(\Gam)$ form an effective Boolean algebra. 
That is: if $L$ is rational and specified by some NFA then  $F(\Gam)\sm L$ is rational; and we can effectively find the corresponding NFA. There might be an exponential blow-up in the NFA size, though. This is the main reason to allow 
  negative constraints $X \notin L$, so we can avoid explicit complementation.

\begin{proposition}[\cite{dgh05IC}]\label{prop:dghftom05}
Let $F(\Gam)$ be a free group and $A= \Gam \cup \oi \Gam$ be the corresponding set with involution as above. 
There is polynomial time transformation which takes as input a system 
$\cS$ 
of equations (and inequalities) with rational constraints over $F(\Gam)$ and outputs 
a word equation with regular constraints $\cS'$ over $A$ 
which is solvable \IFF $\cS'$ is solvable in $F(\Gam)$. 

More precisely, 
let $\phi: A^* \to F(\Gam)$ be the canonical \morph of the free monoid with involution $A^*$ onto the free $F(\Gam)$.
Then the set of all solutions for $\cS'$ is bijectively mapped via $\sig' \mapsto \phi \circ \sig'$ onto the set of all solutions of $\cS$. 
\end{proposition}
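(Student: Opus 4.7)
The plan is to use the unique freely reduced word in $A^*$ as the canonical representative of each element of $F(\Gam)$. Let $\mathrm{FR} \sse A^*$ denote the regular language of words containing no factor $a \ov a$ with $a \in A$; then $\phi$ restricts to a bijection $\mathrm{FR} \to F(\Gam)$ respecting involution (since $\phi(\ov w) = \phi(w)^{-1}$). I would first build $\cS'$ by replacing each variable $X \in \OO$ of $\cS$ with a variable of the same name ranging over $A^*$, together with the regular constraint $X \in \mathrm{FR}$. Each rational constraint $X \in L$ on $F(\Gam)$ is then pulled back to the regular constraint $X \in \phi^{-1}(L) \cap \mathrm{FR}$, and each negative constraint $X \notin L$ to $X \notin \phi^{-1}(L) \cap \mathrm{FR}$. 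Benois' theorem~\cite{ben69} supplies an NFA for $\phi^{-1}(L)$ in polynomial time from an NFA for $L$, and intersection with $\mathrm{FR}$ is again polynomial; crucially, negative constraints are preserved as negative, so no explicit complementation (with its potential exponential blow-up) is needed.

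The main obstacle is translating each equation $U = V$ over $F(\Gam)$ into a word equation $U' = V'$ over $A^*$. The difficulty is that even when each $\sig'(X) \in \mathrm{FR}$, the two sides $\sig'(U)$ and $\sig'(V)$ need not be literally equal in $A^*$ when $\sig = \phi \circ \sig'$ solves $\cS$; they only reduce to the same freely reduced word, because cancellations can take place at the boundaries between adjacent occurrences of variables and constants. I would follow \cite{dgh05IC} and make these cancellations explicit: at each boundary in $U$ and $V$, introduce auxiliary variables representing the (possibly empty) cancelling suffix of the left neighbour and the cancelling prefix of the right neighbour, split each variable occurrence accordingly into pieces $L \cdot M \cdot R$, and add regular constraints enforcing that the concatenation remains freely reduced, that $L$ is the involution-image of the neighbouring $R$ so the two literally annihilate when adjacent in $U'V'$, and that all occurrences of the same $X$ give the same reduced word (achieved by tying the middles and extending splits to cover all occurrence-specific cut points). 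The total number of auxiliary variables and the length $|U'V'|$ remain linear in $|UV|$, giving the polynomial bound.

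For the bijection $\sig' \mapsto \phi \circ \sig'$: any solution $\sig'$ of $\cS'$ gives $\sig = \phi \circ \sig'$ satisfying the original rational constraints by construction, and satisfying $U = V$ in $F(\Gam)$ by applying $\phi$ to the identity $\sig'(U') = \sig'(V')$ (auxiliary variables disappear under $\phi$ because adjacent boundary pieces are mutual involution-images). Conversely, from a solution $\sig$ of $\cS$ one recovers the unique $\sig'$ by taking the freely reduced representative of each $\sig(X)$ and splitting it according to the actual cancellation pattern in the unique reduction of $\sig(U) = \sig(V)$; the constraints on the auxiliary variables are then automatically met. The hardest part will be verifying that the cancellation encoding is both sound and complete — that the regular constraints on auxiliary variables capture exactly the freely reduced cancellation patterns of genuine solutions, without permitting spurious ones, and that the lift of a given $\sig$ is uniquely determined so the map is truly bijective.
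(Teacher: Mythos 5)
Your overall strategy (freely reduced representatives, Benois for the constraints, tracking cancellation with auxiliary variables) is the right one, but the paper takes a substantially cleaner route and your version has a genuine gap in the middle.

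The paper, following \cite{dgh05IC}, first \emph{triangulates}: every equation $u_1\cdots u_k = v_1\cdots v_\ell$ over $F(\Gam)$ is rewritten, by introducing intermediate product variables, into a system of equations of the form $XYZ=1$. Only then does it encode cancellation, using the ``tripod'' fact that for reduced $X,Y,Z$ with $XYZ=1$ in $F(\Gam)$ there exist unique $P,Q,R\in A^*$ with $X=P\ov Q$, $Y=Q\ov R$, $Z=R\ov P$ and all three concatenations reduced. This localises cancellation to a single fixed shape per equation: three fresh auxiliary variables, three word equations, a handful of ``no cancellation at the seam'' regular constraints, and uniqueness for free. Your proposal skips triangulation and tries to encode cancellation directly at every boundary of the original $U=V$, splitting each variable occurrence into $L\cdot M\cdot R$ with $R_i=\ov{L_{i+1}}$. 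That model is too local: when some middle $M_i=\eps$ (the occurrence cancels entirely), the leftover pieces of its two neighbours can \emph{further} cancel against each other, and your constraints do not capture this. In general cancellation can cascade through an unbounded number of consecutive occurrences, so a single $L/M/R$ split per occurrence cannot be both sound and complete; you flag the difficulty yourself but do not resolve it. Triangulation is precisely the device that kills this problem, since a triangular equation admits no cascading. Without it you would need an unbounded recursive splitting, ruining the polynomial bound and the bijection.

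Two smaller points. First, $\phi^{-1}(L)$ is \emph{not} regular in $A^*$ in general (take $L=\os{1}$: its full preimage is a Dyck-type language); what Benois gives you, and what you actually need, is that the set of \emph{reduced} words mapping into $L$, i.e.\ $\phi^{-1}(L)\cap\mathrm{FR}$, is regular with an NFA of polynomial size. So state it that way rather than as ``an NFA for $\phi^{-1}(L)$ intersected with $\mathrm{FR}$.'' Second, the proposition covers inequalities as well; these are handled in the paper by introducing a fresh variable $X$ with $U=XV$ and the rational constraint $X\in F(\Gam)\sm\os{1}$, a step your proposal omits.
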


\prref{prop:dghftom05} in particular shows that the description of all solutions of a system of equations and inequalities (with rational constraints)
over a free group can be efficiently reduced to solving
the corresponding task for word equations with regular constraints in a free monoid with involution. For convenience of the reader let us remark that the proof of \prref{prop:dghftom05} is fairly straightforward. It is based on the fact that 
 $XYZ=1$ in $F(\Gam)$ is equivalent with the existence of words
 $P,Q,R \in A^*$ such that $X=P\ov Q$,  $Y=Q\ov R$,
 and $Z=R\ov P$. Indeed, if $XYZ=1$ in $F(\Gam)$ then we can represent
 $X$, $Y$, and $Z$ by reduced words and the existence of $P,Q,R$ follows because $F(\Gam)$ is a free group. The other direction is trivial and holds for non reduced words as well.

\subsubsection*{Input size.}
The input size for the reduction is given by the sum over the lengths of the equations and inequalities plus the size of $\Gam$ plus 
the sum of the number of states of the NFAs in the lists for the constraints.
As in the case of word equations over free monoid, the measure is accurate enough with respect to polynomial time and or space.
Note that $\abs \Gam$ can be much larger than the sum over the lengths of the equations and inequalities
plus the sum of the number of states of the NFAs in the lists for the constraints.
Recall that we encode $X\neq 1$ by a rational constraint, which introduces an NFA with $2\abs \Gam + 1$ states.
Since $\abs \Gam$ is part of the input, this does not cause any problem.
The output size remains at most quadratic in the input size.

\subsection{Existential theory for free groups}
We can easily extend the algorithm for word equations over free groups with rational constraints to existential theory
of free groups with rational constraints.
As a first step note that we can eliminate the disjunction by non-deterministic guesses.
Secondly, as the singleton $\os 1 \sse F(\Gam)$ is, by definition, rational,
the set $F(\Gam)\sm \os 1$ is rational, too. 
Therefore an inequality $U\neq V$ can be handled by a new fresh variable
$X$ and writing $U=X V \; \& \; X \in F(\Gam)\sm \os 1$ instead of $U\neq V$.

\subsection{Linear Diophantine systems}\label{sec:lDs}
We shall consider linear Diophantine systems with solutions over natural numbers.
Formally, such a system is given by an $m \times n$ matrix
$A$ with coefficients in $\Z$ and an $m\times 1$ vector $b \in \Z^m$. 
We write $Ax =b$ and its set of solutions is given by 
the set $\set{x \in \N^n}{Ax =b}$. We say that $Ax =b$ is satisfiable over $\N$ if the set of solutions is non-empty. 
Note that while we could also allow inequalities,
a system of inequalities $Ax \geq b$ can be reduced to equalities by introducing fresh variables $y$
and rewriting the system as $Ax - y = b$.
Looking for solutions in $\N^n$ makes the problem $\NP$-hard. 
Actually, we use the following well-known proposition.

\begin{proposition}\label{prop:lDs}
The following two problems are $\NP$-complete.\\
{\bf Input.} $Ax = b$ where $A \in \Z^{n\times n}$ and $b \in \Z^{n\times 1}$ and coefficients are written in binary.\\
{\bf Question 1.} Is the set $\set{x \in \N^n}{Ax =b}$ non-empty?\\
{\bf Question 2.} Is the set $\abs{\set{x \in \N^n}{Ax =b}}$ infinite?
\end{proposition}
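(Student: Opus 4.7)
The plan is to handle both questions uniformly: NP membership via a small-solution bound for Diophantine systems, and NP-hardness via reductions from \textsc{SubsetSum} (for Question~1) and from Question~1 itself (for Question~2).

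For membership in \NP{}, I would invoke the classical theorem of Papadimitriou: if $Ax=b$ admits a solution in $\N^n$, then it admits one whose entries are bounded in magnitude by $n\cdot(m\cdot a)^{\Oh(n)}$ where $a$ is the maximum absolute value of an entry of $A$ or $b$; in particular such a solution is representable in polynomially many bits of the binary-encoded input. This directly supplies a polynomial-size certificate for Question~1. For Question~2 I would observe that $\set{x \in \N^n}{Ax = b}$ is infinite \IFF it is non-empty \emph{and} the associated homogeneous system $Ay = 0$ admits some $y \in \N^n \sm \os 0$. Applying the small-solution bound both to the inhomogeneous system and to the homogeneous one (viewed as $Ay = 0$ with a fixed non-zero coordinate forced to be at least $1$) yields a pair $(x,y)$ of polynomial bit-length that can be guessed and verified in polynomial time.

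For \NP-hardness of Question~1, I would reduce from \textsc{SubsetSum}: given integers $a_1,\dots,a_n$ and target $t$, write down the equation $\sum_{i=1}^n a_i x_i = t$ together with auxiliary equations $x_i + x_i' = 1$ for $1\leq i\leq n$ using fresh slack variables $x_i' \in \N$. The slack equations force $x_i \in \os{0,1}$, so the resulting Diophantine system is satisfiable over $\N$ \IFF some subset of the $a_i$ sums to $t$. For \NP-hardness of Question~2, I would Karp-reduce from Question~1: given $Ax=b$, adjoin two fresh variables $y,z$ and the single equation $y-z=0$. Any solution of the augmented system projects to a solution of the original, and conversely, from any solution $x_0$ of the original the family $(x_0,k,k)_{k\in\N}$ exhibits infinitely many solutions of the augmented system. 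Hence the augmented system has an infinite solution set \IFF the original is non-empty.

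The main obstacle is the small-solution bound underlying the \NP{} upper bound: proving it from scratch requires a careful Cramer-style estimate on vertices of the associated polyhedron together with a rounding argument, but it is a standard result that I would cite rather than reprove. Everything else is routine: the two reductions are clearly polynomial-time computable, and verification of $Ax=b$, $Ay=0$, and $y\neq 0$ on polynomial-bit certificates is straightforward.
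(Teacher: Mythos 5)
Your proof is correct, and the NP-hardness argument for Question~2 (Karp-reducing from Question~1 by adjoining $y - z = 0$) coincides exactly with the paper's. Where you diverge is the NP membership of Question~2: the paper reduces Question~2 back to Question~1 by doubling the variables --- it constructs the system $Ax=b$, $Ax'=b$, $x'=x+y$, $\sum x_i' = \sum x_i + z + 1$, so that satisfiability of this larger system is equivalent to infinitude of the original solution set --- whereas you characterize infinitude directly as ``non-empty \emph{and} $Ay=0$ has a non-zero solution in $\N^n$'' and then produce a polynomial-size certificate $(x,y)$ via the small-solution bound. Both arguments pivot on the same core fact (infinitely many solutions force, by Dickson's Lemma, a pair $x<x'$, equivalently a positive homogeneous solution $y = x'-x$), so they are really two presentations of one idea; the paper's version packages the observation as a black-box reduction to Question~1, which it can then discharge by citation, while your version is more self-contained and arguably more transparent about what the NP certificate is. For Question~1 you also supply an explicit SubsetSum reduction and invoke the Papadimitriou-style small-solution bound, where the paper simply appeals to~\cite{HU} for the standard NP-completeness; this is a matter of how much is spelled out rather than a genuine difference. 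One small point worth tightening in your write-up: when you ``force a non-zero coordinate to be at least $1$'' in $Ay=0$, make explicit that the NP machine also guesses the index $i$ of that coordinate (or simply guesses $y$ of polynomial bit-length and checks $y\neq 0$), since no single coordinate is canonically forced to be positive.
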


\begin{proof}
 The $\NP$-completeness of the first problem is standard, see \eg \cite {HU}. It can be reduced to the second problem by adding an equation
 $y-z=0$ where $y, z$ are fresh variables. A possible reduction of the second problem to satisfiability is as follows. 
Given an equation $Ax = b$ we create a system $Ax = b \; \& \; Ax' = b$,
where $x = (x_1, \ldots, x_n)$ and $x' = (x_1', \ldots , x_n')$ use disjoint sets of variables.
Then we add equations $x' = x+ y$ where $y$ uses fresh variables. This guarantees that $x' \geq x$.
Finally, we add an equation $x_1' + \dots + x_n' = x_1 + \dots + x_n + z + 1$. 
This guarantees that $x'> x$, in the sense that at least one  of the inequalities $x_i' \geq x_i$ is strict.
If the new system is satisfiable then $Ax = b$ has infinitely many solutions $x + k(x'-x)$ with $k \in \N$.
Conversely, if $Ax = b$ has infinitely many solutions then there must exist solutions $x$ and $x'$ with $x < x'$
due to Dickson's Lemma~\cite{dickson1913}, and they satisfy the created system.
\end{proof}

\section{Graph representation of all solutions}\label{sec:groas}
In this section we give an overview of the graph representation of all solutions and the way such a representation is generated
as well as a detailed description of the graph representation of all solution of word equation with constraints.
This description is devised so that it is a citable reference,
in particular, it is supposed to be usable without reading the actual construction and the proof of its correctness.
It will include all the necessary definitions, though.
The actual construction and the proof of correctness is given in Section~\ref{sec: compression steps}.

\subsection{Transforming the equation}
By an \emph{operator} we denote a function that transforms substitutions (for variables).
All our operators have simple description:
$\mysolution'(X)$ is usually obtained from \sol X by morphisms, appending/prepending constants, etc.
In particular, they have a~polynomial description.
We usually denote them by $\phi$ and their applications by $\phi[\mysolution]$.

Recall that the instance size is $n$, so in particular the input equation is of length at most $n$ and has at most $n$ variables.
\begin{definition}
A word equation $(U, V)$ with constraints is \emph{strictly proper} if
\begin{itemize}
	\item in total $U$ and $V$ have at most $cn^2$ of constants;
	\item in total $U$ and $V$ have at most $n$ occurrences of variables;
	\item there is a homomorphism $\morphism : \letters \mapsto A^+$ that is compatible with $\rho$,
	where $\letters$ is the alphabet of $(U, V)$.
\end{itemize}	
An equation is \emph{proper} if instead of the first condition it satisfies a weaker one
\begin{itemize}
	\item in total $U$ and $V$ have at most $2cn^2$ constants.
\end{itemize}
\end{definition}

A possible constant is $c = 27$ as we will see later.
The idea is that strictly proper equations satisfy the desired upper-bound
and proper equations are some intermediate equations needed during the computation,
so they can be a bit larger.

Concerning the existence of \morphism, note that we do not want to consider equations
containing letters that cannot represent strings in the input alphabet.
For the input equation we may assume $A=B$ and therefore we can take \morphism{} as the identity.
The input equation is strictly proper.

The main technical result of the paper states that:
\begin{lemma}
\label{lem: main}
Suppose that $(U_0,V_0)$ is a strictly proper equation with $|U_0|, |V_0| > 0$ and let it have a solution
$(\mysolution_0,\morphism_0)$.
Then there exists a sequence of proper equations $(U_0, V_0)$, $(U_1, V_1)$, \ldots, $(U_k, V_k)$, over alphabets
$\letters_0$, $\letters_1$, \ldots, $\letters_k$ and solutions $(\mysolution_0, \morphism_0)$, $(\mysolution_1, \morphism_1)$,
\ldots, $(\mysolution_k, \morphism_k)$ of those equations and
families of operators $\Phi_1$, $\Phi_2$, \ldots, $\Phi_k$ and their simple solutions
 such that
\begin{itemize}
	\item $k > 0$ and $(U_k,V_k)$ is strictly proper.
	\item There is $\phi_{i+1} \in \Phi_{i+1}$ 	and a solution $(\mysolution_{i+1}',\morphism_{i+1})$ of $(U_{i+1},V_{i+1})$ over $B_i \cup B_{i+1}$
	such that
	\begin{itemize}
		\item $\mysolution_i = \phi_{i+1}[\mysolution_{i+1}']$
		\item $\mysolution_{i+1}$ is a simplification of $\mysolution_{i+1}'$
		\item $\morphism_i(\mysolution_i(U_i)) = \morphism_{i+1}(\mysolution_{i+1}(U_{i+1})) = \morphism_{i+1}(\mysolution_{i+1}'(U_{i+1}))$.
	\end{itemize}
	Furthermore, $\weight(\mysolution_i,\morphism_i) > \weight(\mysolution_{i+1},\morphism_{i+1}) = \weight(\mysolution_{i+1}',\morphism_{i+1})$.
	\item If $(\mysolution_{i+1}', \morphism_{i+1}')$ is a solution of $(U_{i+1},V_{i+1})$ (over an arbitrary alphabet)
	and $\phi_{i+1} \in \Phi_{i+1}$ then $(\mysolution_i', \morphism_i')$ is a solution of $(U_i,V_i)$,
	where $\mysolution_i' = \phi_{i+1}[\mysolution_{i+1}']$ and $\morphism_i'$ is some homomorphism compatible with $\rho$.
	\item Each family $\Phi_i$ as well as operator $\phi_i \in \Phi_i$ have polynomial-size description.
\end{itemize}
Given $(U_0,V_0)$, all such sequences (for all possible solutions) can be produced in \PSPACE.
\end{lemma}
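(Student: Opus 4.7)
My plan is to instantiate the recompression scheme of Je\.z in the setting with involution and rational constraints. A single \emph{phase} consists of two types of local modifications of the equation: \emph{pair compression}, replacing every occurrence of a pair of distinct constants $ab$ (together with the mirror pair $\ov b\,\ov a$) by one fresh constant $c$; and \emph{block compression}, replacing every maximal run $a^\ell$ (together with $\ov a^\ell$) by a fresh constant $a_\ell$. For each such compression I would define an operator $\phi$ that undoes it on a substitution; these operators populate the families $\Phi_{i+1}$, and from a solution of the compressed equation they recover a solution of the one before. In the forward direction, running the phase on the given $(\mysolution_i,\morphism_i)$ produces $(\mysolution_{i+1},\morphism_{i+1})$, with $\morphism_{i+1}$ extended on fresh constants via $\morphism_{i+1}(a_\ell) = \morphism_i(a)^\ell$ and $\morphism_{i+1}(c) = \morphism_i(ab)$, so that $\morphism_i(\sol{U_i}) = \morphism_{i+1}(\mysolution_{i+1}(U_{i+1}))$ is automatic. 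The rational constraint $\rho$ is preserved by defining $\rho(c)$ and $\rho(a_\ell)$ through the images in the finite monoid $\Mn$, which carries the involution in the form recalled in Section~\ref{subsec: free monoid}.

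The real obstacle is that the pair or block one wishes to compress may straddle the boundary between a variable and a constant, or between two adjacent variables, because the solution may place the matching letters on the variable side. Before compressing, I would therefore \emph{pop} constants: for each variable $X$, nondeterministically guess the first and last constants of $\sol X$ and attach them to the equation on the appropriate sides of $X$ (and, mirrored, of $\ov X$), updating $\rho(X)$ accordingly using its finite image in $\Mn$. Popping steps are operators as well, and one phase $\phi_{i+1}\in\Phi_{i+1}$ is the composition of a popping round followed by the pair and block compressions. The involution forces some care: letters popped from $X$ must be popped simultaneously from $\ov X$, pairs $ab$ are compressed together with $\ov b\,\ov a$, and the overlapping cases $a=\ov a$, $a=\ov b$, $b=a$ need separate but routine handling.

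Strict weight decrease from~\prref{eq: weight} is built in: every phase either shortens $\morphism_i(\sol{U_i})$ by identifying a block or a pair with a single constant, or moves constants from a variable's image into the equation text while a subsequent compression strictly pays back the cost. The standard recompression analysis, using two colour classes for pair compression together with block compression, shows that after a constant number of phases the length of a minimal-weight solution shrinks by a constant factor. Starting from a strictly proper equation, the intermediate equations are only \emph{proper} (their constant count can temporarily inflate beyond $cn^2$ through pops, up to $2cn^2$), but within a bounded number of phases we return to strictly proper because the $n$ occurrences of variables contribute only $O(n)$ newly popped constants and compressions promptly collapse them. By \prref{lem: simplify the solution}, simplification preserves the weight and keeps us inside an alphabet of polynomial size.

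The \PSPACE{} enumeration then follows from the fact that every current equation has polynomial size, every $\Phi_{i+1}$ admits a polynomial-size description (the pop letters per variable and the identities of the compressed pairs/blocks), and strict weight decrease rules out cycling. A nondeterministic machine guesses one phase at a time, stores only the current (proper) equation together with its $\rho$-values, and outputs the chosen $\phi_{i+1}$; because weight is at most exponential, any sequence has exponentially bounded length and the machine fits in polynomial space. The hardest step I expect is the two-sided correspondence of solutions of consecutive equations: the forward direction is immediate from the construction of the operators together with $\morphism_{i+1}$ and the extended $\rho$, but the backward direction requires arguing that the concrete $\mysolution_i$ is reproduced by \emph{some} choice of pop-guesses and compressions via a solution $\mysolution_{i+1}'$ over $\letters_i \cup \letters_{i+1}$ whose simplification is $\mysolution_{i+1}$. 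This is the point at which the algorithm must be designed so that its nondeterminism covers exactly the boundary behaviour of an arbitrary solution, and it is where the compatibility between involution, rational constraints, and compression has to be checked in full detail.
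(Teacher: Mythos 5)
Your high-level plan matches the paper's: recompression phases consisting of popping from variables followed by local compressions, operator families recording how to undo each step, weight decrease via \prref{eq: weight} for termination, and a \PSPACE{} enumeration driven by the bounded equation size. However, there are two concrete gaps, and they are not at the level of omitted bookkeeping --- they are the technical core of the paper's proof.

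First, your plan keeps pair compression (compressing $ab$ with $\ov b\,\ov a$) and block compression (compressing runs $a^\ell$) as two separate operations and dismisses the involution-induced overlaps as ``separate but routine handling.'' But when $a=\ov a$ and $b=\ov b$ with $a\neq b$, the factors $ab$ and $\ov b\,\ov a=ba$ overlap, and the maximal unit you must compress is an arbitrarily long alternating word $(ab)^i$, $a(ba)^i$, $(ba)^ib$, or $(ba)^i$. These are neither pairs nor runs of a single letter, so neither of your two primitives applies; compressing just $ab$ into $c$ turns $ababab$ into $ccc$ with $c=\ov c$, reintroducing the problem one level up and breaking the size analysis. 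The paper resolves this by replacing both primitives with a single unified one, the compression of maximal $ab$-\emph{blocks} (Definition~\ref{definitionabblock}), for which Lemma~\ref{factoverlap} establishes the non-overlap property your plan assumes implicitly. Even when the input alphabet has no fixed points, such fixed points arise once $a\ov a$ is compressed, so this case cannot be sidestepped.

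Second, your popping step guesses only the first and last constant of each $\sol X$. That is insufficient to uncross an $ab$-block: if $\sol X$ begins with $babab\cdots$ in the case above, popping a single $b$ leaves $\sol X$ beginning with $abab\cdots$, still crossing, and iterating one letter at a time gives no polynomial bound on a single phase. The paper's $\algcr$ instead pops the \emph{entire} $ab$-prefix and $ab$-suffix $s_X,s'_X$ of $\sol X$ at once, keeping their lengths as parameters $x_X,y_X$. Crucially, it then records length equalities and inequalities between the resulting maximal $ab$-factors in a polynomial-size linear Diophantine system $D$, and handles $\rho$ of the parametric blocks via the idempotent power of $\rho(ab)$ in $\Mn$. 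This Diophantine encoding is exactly what makes each family $\Phi_{i+1}$ polynomial-size while still ranging over all solutions; the operator family definition in Section~\ref{susubsec: operators} is built around $D$. Your proposal claims polynomial-size families but offers no mechanism for symbolically quantifying over the lengths of the popped blocks, so it cannot produce such a description. Both ingredients --- the $ab$-block unification (Lemmas~\ref{lem: compression noncrossing} and~\ref{lem: compression crossing}) and the parametric Diophantine encoding --- are indispensable to the paper's proof of Lemma~\ref{lem: main} via Lemma~\ref{lem: main2} and Lemma~\ref{lem: size bound}, and they are missing from your argument.
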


\paragraph{Discussion}
The exact definition of allowed families of operators $\Phi$ is deferred to Section~\ref{subsec: graph representation},
for the time being let us only note that $\Phi$ has polynomial description
(which can be read from $(U_i,V_i)$ and $(U_{i+1},V_{i+1})$),
may be infinite and its elements can be efficiently listed,
(in particular, it can be tested, whether $\Phi$ is empty or not).

Concerning the difference between $\mysolution_{i+1}$ and $\mysolution_{i+1}'$:
while we know that $\mysolution_{i} = \phi_{i+1}[\mysolution_{i+1}']$ we cannot guarantee that $(\mysolution_{i+1},\morphism_{i+1})$ is simple,
so the claim of the Lemma~\ref{lem: main} does not apply to it $\mysolution_{i+1}'$.
However, when we take a simplification $(\mysolution_{i+1},\morphism_{i+1})$ of $\mysolution_{i+1}'$, the claim applies.
Moreover, $\mysolution_{i+1}$ and $\mysolution_{i+1}'$ represent the same solution
$\morphism_{i+1}(\mysolution_{i+1}(U_{i+1})) = \morphism_{i+1}(\mysolution_{i+1}'(U_{i+1}))$
of the original equation, so nothing is lost in the substitution.
Alternatively, we could impose the condition that the solution $(\mysolution_{i+1},\morphism_{i+1})$ is simple
however then we cannot assume that $\mysolution_{i} = \phi_{i+1}[\mysolution_{i+1}]$,
we can only guarantee that $\morphism_i(\mysolution_i(U_i)) = \morphism_{i+1}(\mysolution_{i+1}(U_{i+1}))$.
This makes details of many proofs more complicated, but this is a technical detail that should not bother the reader.

Getting back to the solutions, an equation in which both $U_i$ and $V_i$ have length $1$ has easy to describe solutions:
\begin{itemize}
	\item if $U_i, V_i$ are the same constant then the equation has exactly one solution,
	in which every variable is assigned $\epsilon$
	(recall our convention that a variable not present in the equation is assigned $\epsilon$);
	\item if $U_i, V_i$ are both variables, say $X$ and $Y$, then if $\rho(X) \neq \rho(Y)$ then there is not solution,
	otherwise any $\mysolution$ that assigns $\epsilon$ to other variables and $w$ to $X, Y$, where $\rho(w) = \rho(X)$,
	is a solution;
	\item if $U_i$ is a constant and $V_i$ a variable, say $a$ and $X$ then if $\rho(a) \neq \rho(X)$ then there is no solution,
	otherwise there is a unique solution, which assigns $a$ to $X$ and $\epsilon$ to all other variables.
\end{itemize}
In this way all solutions of the input equation $(U, V)$ are obtained by a path from $(U_0,V_0)$
to some satisfiable $(U_i,V_i)$ satisfying $|U_i| = |V_i| =1$ and
the solution of $(U,V)$ is a composition of operators from the families on the path applied
to the solution of $(U_i,V_i)$.
Note that there may be several ways to obtain the same solution, using different paths in the graph.

\subsection{Graph representation of all solutions}
\label{subsec: graph representation}
\subsubsection{Construction of the solution graph}
Using Lemma~\ref{lem: main} one can construct in \PSPACE{} a graph like representation of all solutions of a given word equation:
for the input equation $(U, V)$ we construct a directed graph $\mathcal{G}$ which has nodes labelled with proper equations.
Then for each strictly proper equation $(U_0,V_0)$ such that $|U_0| > 1$ or $|V_0| > 1$
we use Lemma~\ref{lem: main} to list all possible sequences for $(U_0,V_0)$.
For each such sequence $(U_0, V_0), (U_1, V_1), \ldots, (U_k, V_k)$ we put the edges
$(U_0,V_0) \to (U_1,V_1)$, $(U_1,V_1) \to (U_2,V_2)$, \ldots, $(U_{k-1},V_{k-1}) \to (U_k,V_k)$
and annotate the edges with the appropriate family of operators.
We lastly remove the nodes that are not reachable from the starting node and those that do not have a path to an ending node.

In this way we obtain a finite description of all solution of a word equation with regular constraints.
\begin{theorem}
There exists and can be effectively constructed a \PSPACE{} transducer that
given a word equation with regular constraints over a free monoid generates a finite graph representation of all its solutions.
\end{theorem}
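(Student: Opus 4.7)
The plan is to turn the sequence-generating procedure of Lemma~\ref{lem: main} into an on-the-fly enumerator for the solution graph $\cG$. Every proper equation has size polynomial in $n$: at most $2cn^2$ constants, at most $n$ variable occurrences, and an alphabet $\letters$ of size $\Ocomp(n^2)$ whose constants carry polynomial-size annotations in $\Mn$ via $\rho$. Hence each node of $\cG$ admits a polynomial-length binary encoding, although the total number of reachable proper equations can be singly exponential in $n$. The transducer therefore cannot materialise $\cG$ in memory, but it can emit each node and edge of $\cG$ one at a time while the work tape holds only a constant number of such polynomial-size encodings.

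First I would package Lemma~\ref{lem: main} as a nondeterministic subroutine $\mathrm{Next}(U,V)$ which, for a strictly proper input, guesses one of the sequences $(U_0,V_0),\dots,(U_k,V_k)$ and walks it while only keeping the current $(U_i,V_i)$ and its successor $(U_{i+1},V_{i+1})$ on the tape; the operator family $\Phi_{i+1}$ is used only through its polynomial description, whose explicit elements are never fully enumerated. By the PSPACE guarantee in Lemma~\ref{lem: main}, one step of $\mathrm{Next}$ uses polynomial space. The edges of $\cG$ are then exactly the adjacent pairs along all nondeterministic runs of $\mathrm{Next}$, labelled by the corresponding $\Phi_{i+1}$.

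Next, to produce $\cG$ I would iterate lexicographically through all polynomially-described candidate proper equations $(U,V)$. For each candidate the transducer performs two reachability checks: that $(U,V)$ is reachable from the input equation through iterated strictly-proper nodes linked by $\mathrm{Next}$, and that some trivial terminal equation with $|U_k|=|V_k|=1$ is reachable from $(U,V)$. Both tests are instances of reachability in an implicitly described graph of $2^{\poly(n)}$ nodes of polynomial description and polynomial-space transition relation, so by Savitch's theorem they lie in \PSPACE. A node is output iff it passes both tests, and an edge is output iff both its endpoints are. The finiteness of $\cG$ follows from the strict weight decrease guaranteed in Lemma~\ref{lem: main}, which bounds the length of any simple path and, combined with the polynomial bound on node size, bounds the reachable set.

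The step I expect to be the main obstacle is the careful bookkeeping needed so that the space bound survives the repeated invocation of $\mathrm{Next}$: one must never store the full family $\Phi_{i+1}$, the whole sequence from Lemma~\ref{lem: main}, or a witness path for the reachability query, but instead recompute them on demand from their polynomial descriptions, in the style of a recursive Savitch search. Once this is set up, correctness is immediate from Lemma~\ref{lem: main}: the soundness clause ensures every path in $\cG$ from $(U_0,V_0)$ to a trivial terminal node lifts, through the operators labelling its edges, to a genuine solution of the input equation; the existence clause ensures that every solution $(\mysolution_0,\morphism_0)$ arises from at least one such path, since iterating the lemma strictly decreases $\weight$ and therefore must terminate at a node with $|U_k|=|V_k|=1$. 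Thus the emitted graph is precisely a finite representation of the set of all solutions, and the whole procedure runs in polynomial space, proving the theorem.
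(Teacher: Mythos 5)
Your proposal is correct and matches the paper's construction: the paper builds exactly this graph by iterating Lemma~\ref{lem: main}, adding the intermediate proper equations and one-step edges from each sequence, and pruning nodes not lying on a path from the start node to a terminal node, leaving implicit the standard \PSPACE{} bookkeeping (polynomial-size node encodings, Savitch-style reachability, on-the-fly emission) that you spell out. One small correction: finiteness of $\cG$ already follows from the polynomial bound on node descriptions alone; the strict weight decrease in Lemma~\ref{lem: main} is not what makes the graph finite, but rather what guarantees that every solution eventually reaches a terminal node with $|U_k|=|V_k|=1$.
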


Using Proposition~\ref{prop:dghftom05} we obtain a similar claim for word equation with rational constraints
over a free group.

\begin{corollary}\label{cor:maingroupthm}
There exists and can be effectively constructed a \PSPACE{} transducer that
given a word equation with regular constraints over a free group generates a finite graph representation of all its solutions.
\end{corollary}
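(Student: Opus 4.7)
The plan is to chain the two tools already established: Proposition~\ref{prop:dghftom05} and the preceding Theorem. Given a word equation (or system) with rational constraints over a free group $F(\Gamma)$, I would first invoke Proposition~\ref{prop:dghftom05} to produce, in polynomial time, a word equation $\cS'$ with regular constraints over the free monoid with involution $A^*$, where $A = \Gamma \cup \oi\Gamma$. Then I would feed $\cS'$ into the \PSPACE{} transducer of the preceding Theorem, obtaining a finite graph $\cG$ whose paths describe all solutions of $\cS'$. Finally, I would reinterpret the output through the canonical \morph $\phi : A^* \to F(\Gamma)$: each monoid solution $\sig'$ read off a root-to-leaf path in $\cG$ is mapped to the group solution $\phi \circ \sig'$.

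The correctness of the composition rests entirely on the stronger (bijective) statement of Proposition~\ref{prop:dghftom05}: the assignment $\sig' \mapsto \phi \circ \sig'$ is a bijection between the set of all solutions of $\cS'$ and the set of all solutions of the original group system. Consequently, the graph $\cG$, which by the Theorem faithfully enumerates all monoid solutions of $\cS'$ without repetition or omission, automatically faithfully enumerates all group solutions of the input, again without repetition or omission. If one prefers the edge labels to describe operators acting directly on reduced group words rather than on words in $A^*$, it suffices to post-compose each operator along every edge with $\phi$; this is a polynomial-time decoration of $\cG$ and does not alter the space complexity.

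For the complexity bound, the first step runs in polynomial time and blows up the instance only quadratically, which is absorbed into the input size of the second step; the second step runs in \PSPACE{} by the preceding Theorem. Since \PSPACE{} is closed under composition with polynomial-time preprocessing (the intermediate equation $\cS'$ has size polynomial in the input and can be streamed on demand, so it need not be materialised in full), the whole procedure is a \PSPACE{} transducer that outputs $\cG$.

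The only point that requires genuine attention is the bijectivity clause of Proposition~\ref{prop:dghftom05}; without it, we would know that every group solution lifts to some monoid solution, but we could in principle count the same group solution many times through distinct monoid preimages, and the resulting graph would no longer be a clean representation of the group-solution set. This is precisely what Proposition~\ref{prop:dghftom05} rules out, so there is no residual obstacle.
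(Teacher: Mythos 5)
Your proposal is correct and takes exactly the route the paper intends: compose the polynomial-time reduction of Proposition~\ref{prop:dghftom05} with the \PSPACE{} transducer of the preceding Theorem, and rely on the bijectivity of $\sig' \mapsto \phi\circ\sig'$ to transfer the solution-set description from the monoid side to the group side. The paper gives only a one-line proof by reference to Proposition~\ref{prop:dghftom05}; your write-up supplies the details (bijectivity as the load-bearing clause, and closure of \PSPACE{} under polynomial-time preprocessing) that the paper leaves implicit.
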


\subsubsection{Families of operators}
\label{susubsec: operators}
Let us now describe the used family of operators.
Given an edge $(U, V) \to (U', V')$ the class $\Phi$ of operators is defined using:
\begin{itemize}
	\item A linear Diophantine system of polynomial size in parameters $\{x_X, y_X \}_{X \in \variables}$.
	\item A set $\{s_X, s_X' \}_{X \in \variables}$ of strings, length of string $s_X$ ($s_X'$) may depend on $x_X$
	($y_X$, respectively): it may use \emph{one} expression of the form $(ab)^{x_X}$ ($(ab)^{y_X}$, respectively)
	or $a^{x_X}$/$a^{y_X}$ when $a = b$.
	Each $s_X$ and $s_X'$ is of polynomial length (we treat $(ab)^{x_X}$ as having description of $\Ocomp(1)$ size).
	\item A set of $E_1, \ldots, E_k$ of strings which may use expressions $(ab)^{x_X}$ and $(ab)^{y_X}$
	(or $a^{x_X}$ and $a^{y_X}$), similarly to $s_X$ and $s'_X$,
	$k$ is of polynomial size and each $E_i$ has polynomial-size description.
	There are corresponding letters $c_{E_1}$, $c_{E_2}$, \ldots, $c_{E_k}$ that occur in $(U', V')$ but not in $(U, V)$.
\end{itemize}
Note that the Diophantine system may be empty and some of $\{s_X, s_X' \}_{X \in \variables}$ may be $\epsilon$
or not dependent on parameters.
On the other hand, each $E_i$ consists of at least two letters.

Particular operator $\phi \in \Phi$ corresponds to a solution $\{\ell_X, r_X \}_{X \in \variables}$.
It first replaces each letter $c_{E_i}$ with strings $E_i$ in which all $x_X$ and $y_X$ are replaced with numbers $\ell_X$ and $r_X$.
Then it prepends to \sol X the $s_X$ in which parameter $x_X$ is replaced with $\ell_X$, then appends $s_X'$
in which parameter $y_X$ is replaced with $r_X$.

\section{Compression step}
\label{sec: compression steps}

In this section we describe procedures that show the claim of Lemma~\ref{lem: main}.
In essence, for a word equation (with constraints) $(U,V)$ with a solution \mysolution{}
we want to \emph{compress} the word \sol{U} directly on the equation,
\ie without the knowledge of the actual solution.
In case of the free monoid (without involution)~\cite{wordequations},
the `compression' essentially is a replacement of all substrings $ab$ with a single constant $c$.
However, due to the involution (and possibility that $a = b$)
the compressions in case of free monoid with involution are more involved:
we replace the $ab$-\emph{blocks}, as defined later in this section, see Definition~\ref{definitionabblock}.
To do this, we sometimes need to modify the equation $(U,V)$.

The crucial observation is that a properly chosen sequence of such compression guarantees
that if the compressed equation is not too long than neither is the obtained equation
(formally, if the compressed equation is strictly proper then the resulting one is as well),
see Lemma~\ref{lem: size bound}.
Moreover, the compression steps decrease the weight of the corresponding solution,
which guarantees a termination of the whole process.

\subsection{Reducing the equation}

\subsubsection{Transforming solutions and inverse operators.}
As we want to describe the set of all solutions,
ideally there should be a one-to-one correspondence between the solutions before and after the application
of used subprocedures.
However, as those subprocedures are non-deterministic and the output depends on the non-deterministic choices,
the situation becomes a little more complicated.
What we want to guarantee is that no solution is `lost' in the process and no solution is `gained':
given a solution for some non-deterministic choices we transform the equation into another one, which has a `corresponding' solution
and we know a way to transform this solution back into the original equation.
Furthermore, when we transform back in this way any solution of the new equation, we obtain a solution of the original equation.

As already noticed, to ease the presentation, the solutions of the new equation may use constants outside the alphabet of the new equation.
To be more precise, they can `inherit' some constants from the previous solution and therefore use also the constants that occurred in the previous
equation.

\begin{definition}[Transforming the solution]
Given a (nondeterministic) procedure and a proper equation $(U, V)$ we say that this procedure
\emph{transforms} $(U, V)$ (which is proper) with its solution $(\mysolution,\morphism)$ 
to $(U', V')$ with $(\mysolution',\morphism')$
if 
\begin{itemize}
	\item
	there are some nondeterministic choices that lead to an equation $(U', V')$ (over the alphabet $\letters'$)
	and based on the nondeterministic choices and equation $(U, V)$ we can define a family of operators $\Phi$
	such that $\phi[\mysolution'] = \mysolution$
	for some solution $(\mysolution', \morphism')$ over the alphabet $\letters \cup \letters'$
	of the equation $(U', V')$ and some operator $\phi \in \Phi$.
	Furthermore, $\morphism(\sol U) = \morphism'(\mysolution'(U))$ and $\weight(\mysolution', \morphism') \leq \weight(\mysolution, \morphism)$
	and if $(U, V) \neq (U', V')$ then this inequality is in fact strict.
	\item For every equation $(U', V')$ that can be obtained from a proper equation $(U, V)$
	and any its solution $(\mysolution',\morphism')$
	(not necessarily simple) and for every operator $\phi \in \Phi$ the $(\phi[\mysolution'],\morphism)$
	is a solution of $(U, V)$ for any homomorphism $\morphism : B \mapsto A^+$ compatible with $\rho$,
	where $B$ is an alphabet of $\phi[\mysolution'](U)$. 
\end{itemize}
If this procedure transforms any solution $(\mysolution,\morphism)$ of any proper equation $(U, V)$
then we say that it \emph{transforms solutions}.
\end{definition}
Note that both $(U, V)$ and $\Phi$ depend on the nondeterministic choices, so it might
be that for different choices we can transform $(U, V)$ to $(U', V')$ (with a family $\Phi'$)
and to $(U'', V'')$ (with a family $\Phi''$).

We call $\Phi$ the \emph{corresponding family of inverse operators}.
In many cases, $\Phi$ consists of a single operator $\phi$,
in such a case we call it the \emph{corresponding inverse operator}
furthermore, in some cases $\phi$ does not depend on $(U, V)$.

Note that when $(U, V)$ with a $(\mysolution, \morphism)$ is transformed into $(U', V')$ with $(\mysolution', \morphism')$
then the simplification $\mysolution''$ of $\mysolution'$
(recall that a simplification replaces all constants $b \notin B'$ by $\morphism'(b)$ in all $\mysolution'(X)$)
is also a solution of $(U', V')$ and moreover $\morphism'(\mysolution''(U')) = \morphism(\sol U)$,
so it corresponds to the same original solution of the input equation as $(\sol U,\morphism)$,
see Lemma~\ref{lem: simplify the solution}.

Clearly, composition of two operations that (weakly) transform the equations also (weakly) transforms the equations
(although the description of the family of inverse operators may be more complex).

As a last comment, observe that when we take an arbitrary solution $(\mysolution', \morphism')$ and operator $\phi$
then we cannot guarantee that there is some $\morphism$ for which $\morphism(\phi[\mysolution'](U)) = \morphism(\mysolution'(U'))$:
imagine we can replace factor $ab\inv{a}$ with a single letter $c$ while $\morphism'(c) = a'b'$,
so there is no way to reasonably define $\morphism(a)$ and $\morphism(b)$. 
Thus we can take any $\morphism$ for $\phi[\mysolution']$, and we know that one exists by the assumption that $(U, V)$
is proper.

\subsubsection{$ab$-blocks.}
In an  earlier paper using the recompression technique~\cite{wordequations} there were two types of compression steps:
compression of pairs $ab$, where $a \neq b$ were two different constants,
and compression of maximal factor $a^\ell$ (\ie ones that cannot be extended to the right, nor left).
In both cases, such factors were replaced with a single fresh constant, say $c$.
While the actual replacement was performed only on the equation $(U,V)$
implicitly it was performed also on the solution $\sol U$ as well.

The advantage of such compression steps was that the replaced factors were non-overlapping,
in the sense that when we fixed a pair (or block) to be compressed,
each constant in a word $w$ belongs to at most one replaced factor.

We would like to use similar compression rules also for the case of monoids with involution,
however, one needs to take into the account that when $w$ is replaced with a constant $c$,
then also $\ov{w}$ should be replaced with $\ov{c}$.
The situation gets complicated, when some of constants in $w$ are fixed-points for the involution, \ie $\ov{a} = a$.
In the worst case, when $\ov{a} = a$ and $\ov{b} = b$ the occurrences of $ab$ and $\ov{ab} = ba$ are overlapping,
so the previous approach no longer directly applies. 
(Even if we start with a situation such that $a \neq \ov a$ for all  $a \in A$, as it is the case for free groups, fixed points 
in larger alphabets are produced during the algorithm.)

Intuitively, when we want to compress $ab$ into a single constant, also $\inv b \inv a$ needs to be replaced.
Furthermore, if factors $s$ and $s'$ are to be replaced and they are overlapping, we should replace their union with a single constant.
Lastly, the factors to be replaced naturally fall into \emph{types},
depending on whether the first constant of the factor is $a$ or $\inv b$ and the last $b$ or $\inv a$.

These intuitions lead to the following definition of $ab$-blocks (for a fixed pair of constants $ab$) and their types.
\begin{definition}\label{definitionabblock}
Depending on $a$ and $b$, $ab$-\emph{blocks} are
\begin{enumerate}
\item If $a=b$ then there are two types of $ab$-blocks: $a^i$ for $i\geq 2$ and $\inv a^i$ for $i\geq 2$.
\item If $a \neq b$, $\inv{a} \neq a$ and $\inv{b} \neq b$ then $ab$ and $\inv{ab} = \inv{b}\inv{a}$ are the two types of $ab$-blocks.
\item If $a \neq b$, $\inv{a}=a$ and $\inv{b} \ne b$ then $ab$, $\inv{ab} = \inv{b}a$ and $\inv{b}ab$ are the three types of $ab$-blocks.
\item If $a \neq b$, $\inv{a} \neq a$ and $\inv{b} = b$ then $ab$, $\inv{ab} = b\inv{a}$ and $ab\inv{a}$ are the three types of $ab$-blocks.
\item If $a \neq b$, $\inv{a} = a$ and $\inv{b} = b$ then the
$(ba)^i$, $a(ba)^i$, $(ba)^ib$ and $(ab)^i$ (where in each case $i \geq 1$)
 are the four types of $ab$-blocks.
\end{enumerate}
An occurrence of an $ab$-block in a word is an $ab$-\emph{factor}, it is \emph{maximal},
if it is not contained in any other $ab$-factor.
\end{definition}

Note that for the purpose of this definition when $a = \inv a$ we treat $a$ and $\inv a$ as the same letter,
even if for some syntactic reason we write $a$ and $\inv a$.

The following fact is a consequence of the definitions of maximal $ab$-blocks
and shows the correctness of the definition.
\begin{lemma}
\label{factoverlap}
For any word $w \in \letters^*$ and two constants $a, b\in \letters$, maximal $ab$-factors in $w$ do not overlap.
\end{lemma}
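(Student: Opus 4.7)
My plan is to prove the lemma by case analysis following the five cases of Definition~\ref{definitionabblock}. The unifying idea is that in each case the $ab$-blocks admit the following uniform characterization: they are exactly the maximal factors of length $\geq 2$ over the sub-alphabet $\{a, b, \inv a, \inv b\}$ whose consecutive letter pairs all belong to a fixed, small set $P$ of \emph{allowed pairs}, namely $P = \{aa,\inv a \inv a\}$ in case~1, $P = \{ab,\inv b\inv a\}$ in case~2, $P = \{ab,\inv b a\}$ in case~3, $P = \{ab, b\inv a\}$ in case~4, and $P = \{ab, ba\}$ in case~5. In each case one verifies by inspection of the small directed graph whose vertices are the relevant letters and whose edges are the pairs in $P$ that its maximal walks of length $\geq 2$ appearing in $w$ are precisely the block types enumerated in the definition (e.g.\ in case~3 the only walks are $ab$, $\inv b a$, and $\inv b a b$; in case~5 the walks are exactly the maximal alternations of $a$'s and $b$'s of length $\geq 2$).

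Granted this characterization, the lemma is almost immediate. Suppose two maximal $ab$-factors $f_1 = w[i_1..j_1]$ and $f_2 = w[i_2..j_2]$ overlap, WLOG with $i_1 \leq i_2 \leq j_1$. Every consecutive pair of letters appearing in $w[i_1..\max(j_1,j_2)]$ lies in $P$, because each such pair is contained either in $f_1$ (inherited from its $P$-chain structure) or in $f_2$; these two index ranges together cover all consecutive positions in the union. Hence $w[i_1..\max(j_1,j_2)]$ is itself a $P$-chain of length $\geq 2$ over the sub-alphabet, which by the characterization is an $ab$-factor of $w$. By maximality of both $f_1$ and $f_2$ in $w$ this factor must coincide with each of them, so $i_1 = i_2$ and $j_1 = j_2$ and the two factors are equal. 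For instance, in case~3 the overlap of $\inv b a$ at positions $i-1,i$ with $ab$ at positions $i,i+1$ merges to $\inv b a b$, which is the third listed block type, and so neither of the two shorter factors can be maximal unless they already coincide.

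The only somewhat delicate point is to handle the degenerate subcases arising from fixed points of the involution. For example, in case~1 with $\inv a = a$ the two block types $a^i$ and $\inv a^i$ describe the same strings, and in case~2 with $a = \inv b$ the strings $ab$ and $\inv b \inv a$ coincide. In all such situations the ``allowed pair'' characterization still applies verbatim and the above argument goes through unchanged; what one must check is merely that these coincidences do not enlarge the set of blocks beyond the listed types. I expect this finite, per-case verification to be the only minor obstacle in the proof; after it is dispatched the ``union of overlapping $P$-chains is a $P$-chain'' argument delivers the lemma uniformly.
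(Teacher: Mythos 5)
Your proof is correct and reaches the same key reduction as the paper's own argument (if two $ab$-factors overlap, their union is an $ab$-factor; maximality then forces them to coincide), but you organize the case analysis differently. The paper verifies the ``union of overlapping blocks is a block'' property directly in each of the five cases of Definition~\ref{definitionabblock}; you instead first observe that in each case the $ab$-blocks are exactly the words of length at least~$2$ over $\{a,b,\inv a,\inv b\}$ all of whose adjacent pairs lie in a small set $P$ of allowed pairs, and then derive the union property \emph{uniformly} from the trivial fact that the union of two overlapping $P$-chains is again a $P$-chain. The per-case work shifts from reasoning about arbitrary overlaps to a finite verification that the listed block types coincide with the $P$-chains, which is arguably cleaner; notably the paper's own treatment of case~5 (``consists solely of alternating constants $a$ and $b$'') already uses exactly this characterization, so your version simply extends that observation to cases~1--4. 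One small wording issue: when you say the $ab$-blocks ``are exactly the maximal factors of length $\geq 2$ \ldots whose consecutive letter pairs all belong to $P$,'' the word ``maximal'' should be dropped at the block-type level (every $P$-chain of length $\geq 2$ is an $ab$-block, not only the maximal ones); maximality only enters later, at the level of occurrences in $w$. This is purely expository and does not affect the correctness of the argument.
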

\begin{figure}
	\centering
		\includegraphics{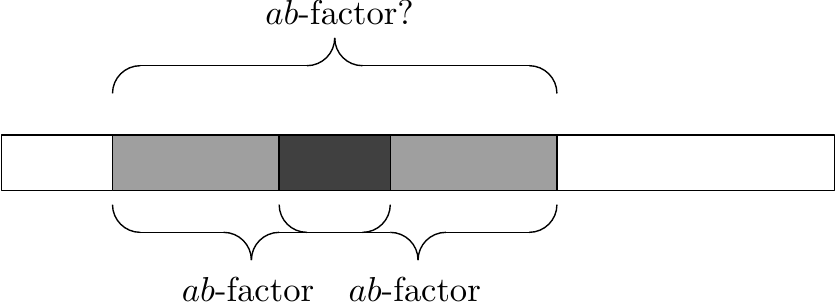}
\caption{To prove Lemma~\ref{factoverlap} we need to show that a union of two overlapping $ab$-factors is also an $ab$-factor.\label{fig:factors}}
\end{figure}

\begin{proof}
For the proof it is enough to show that when two $ab$-factors $s$ and $s'$ are overlapping then their union
(\ie the smallest factor that encompasses them both) is also an $ab$-factor,
see Figure~\ref{fig:factors}
Due to case distinction it follows by a case by case analysis according to Definition~\ref{definitionabblock}.

If $a = b$ and $\inv a \neq \inv a$ then as $ab$ and $\inv b \inv a$ have no common constants,
two overlapping $ab$-factors are both factors consisting of repetitions of the same constant
and so also their union is an $ab$-factor. If $\inv a = a$ then $ab = \inv b \inv a$ and so the same argument as before applies.

If $a \neq b$, $\inv{a} \neq a$ and $\inv{b} \neq b$ then two overlapping $ab$-factors need to be the same factor
(note that here we do not exclude the case $a = \inv b$).

If $a \neq b$, $\inv{a} = a$ and $\inv{b} \neq b$ then consider two different overlapping $ab$-factors.
As all constants in any $ab$-block are different, if $s$ and $s'$ are of the same type and overlapping then they are in fact the same factor.
If the factors $ab$ and $\inv b a$ overlap then their union is $\inv b a b$, which is also an $ab$-factor.
If factors $ab$ and $\inv b a b$ overlap, then the latter contains the former;
the same applies to the factors $\inv b a$ and $\inv b a b$.

If $a \neq b$, $\inv{a} \neq a$ and $\inv{b} = b$ then the analysis is symmetric to the one given above.

In the last case, when $a \neq b$, $\inv{a} = a$ and $\inv{b} = b$ observe that a factor is an $ab$-factor if and only if
it has length at least $2$ and consists solely of alternating constants $a$ and $b$.
Thus also a union of two overlapping $ab$-factors is an $ab$-factor.
\qed
\end{proof}

Given a set of $ab$-blocks we perform the compression by replacing maximal $ab$-factors from this set.
For consistency, we assume that such a set is involution closed.

\begin{definition}[$S$-reduction]
For a fixed $ab$ and an involution-closed set of $ab$-blocks $S$ the $S$-\emph{reduction} of the word $w$
is the word $w'$ in which all maximal factors $s \in S$ are replaced by a new constant $c_s$, where $\ov{c_s} = c_{\ov{s}}$.
The inverse operation is an $S$-\emph{expansion}.
\end{definition}

There are the following observations.
\begin{itemize}
	\item The $S$-expansion is a functions on $\letters^*$,
	using Lemma~\ref{factoverlap} we obtain that also $S$-reduction is a function on $\letters^*$ as well.
	\item The $S$-reduction introduces new constants to $\letters$, we extend $\rho$ to it in a natural way.
	\item We let $c = \ov c$ \IFF $s = \ov s$.
	In this way constants may become fixed point for the involution. 
	For example, $a \ov a$ is an $a \ov a$-block for $a \neq \ov a$.
	If  $a \ov a$ is compressed into $c$ then $c = \ov c$.
\item It might be that after $S$-reduction some constant $c$ in the solution is no longer in $\letters$
	(as it was removed from the equation).
	In such a case the corresponding solution will not be simple,
	this is described in more detail later on.
\end{itemize}

\subsubsection*{Performing the $S$-reduction}
The $S$-reduction is easy, if all maximal factors from  $S$ are wholly contained within the equation or within substitution for a~variable:
in such a case we perform the $S$-reduction separately on the equation and on each substitution for a variable (the latter is done implicitly).
It looks non-obvious, when part of some factor $s$ is within the substitution for the variable and part in the equation.
Let us formalise those notions.
\begin{definition}
For a word equation $(U,V)$ an $ab$-factor is {\em crossing} in a solution \mysolution{}
if it does not come from $U$ ($V$, respectively),
nor from any \sol X for an occurrence of a variable $X$;
$ab$ is \emph{crossing} in a solution \mysolution, if some $ab$-factor is crossing.
Otherwise $ab$ is \emph{non-crossing} in \mysolution.
\end{definition}
Note that as $\ov b \ov a$ is an $ab$-block, it might be that $ab$ is crossing because of a factor $\ov b \ov a$.

By guessing all $X\in \OO$ with $\sol X = \epsilon$ (and removing them) we can always assume that 
$\sol X \neq \epsilon$ for all $X$. In this case 
crossing $ab$'s can be alternatively characterized in a more operational manner.
\begin{lemma}
\label{lem: characterisation of crossing pairs}
Let $\sol X \neq \epsilon$ for all $X$.
Then $ab$ is crossing in \mysolution{} if and only if one of the following holds:
\begin{itemize}
	\item $aX$ or $\ov{aX} = \ov X \ov a$, for an unknown $X$, occurs in $U$ or $V$ and \sol X begins with $b$
	(so \sol{\ov X} ends with $\ov b$) \emph{or}
	\item $Xb$ or $\ov{Xb} = \ov{b} \, \ov{X}$, for an unknown $X$, occurs in $U$ or $V$ and \sol X ends with $a$
	(so \sol{\ov X} begins with $\ov a$) \emph{or}
	\item $XY$ or $\ov {XY} = \ov Y\, \ov X$, for unknowns $X, Y$, occurs in $U$ or $V$ and \sol X ends with $a$ while \sol Y 
	begins with $b$ (so \sol{\ov Y} ends with $\ov b$ and \sol{\ov X} begins with $\ov a$). 
\end{itemize}
\end{lemma}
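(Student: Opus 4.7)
The plan is to prove both directions by a direct case analysis of the adjacencies inside $ab$-blocks, together with the bookkeeping imposed by the involution.

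A preliminary observation, obtained by inspecting Definition~\ref{definitionabblock}, is that in each of its five cases any two constants that are adjacent inside an $ab$-block form either the pair $ab$ or the pair $\inv b \inv a$; when $a=b$, or when one (or both) of $a,b$ is a fixed point of the involution, these two pairs may coincide, but no other pair of adjacent constants ever occurs inside an $ab$-block. This single fact governs the whole argument.

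For the forward direction, assume $ab$ is crossing and choose a crossing $ab$-factor $s$ occurring in $\sol U$ or $\sol V$. By the definition of crossing, $s$ spans a boundary between two different \emph{sources}, where a source is either the equation itself (a constant at a fixed position of $U$ or $V$) or a single occurrence of a variable. Look at the two characters $x_1 x_2$ of $s$ sitting immediately on either side of such a boundary; by the preliminary observation $x_1 x_2 \in \{ab,\, \inv b \inv a\}$. Splitting by the type of the boundary yields three mutually exclusive configurations: a constant next to a variable, a variable next to a constant, or two adjacent variables. In the $ab$ subcase each configuration produces exactly one of the patterns $aX$, $Xb$, $XY$ with the stated conditions on $\sol X$ and $\sol Y$. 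In the $\inv b \inv a$ subcase one uses $\sol{\ov X}=\ov{\sol X}$ together with the fact that $\ov X$ is again a variable to rewrite the witnessed occurrence as a pattern of shape $\ov X \ov a$, $\ov b \ov X$, or $\ov Y \ov X$; these are precisely the alternative patterns listed in items (1)--(3), so we land in one of the three conditions.

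For the backward direction, any one of the three conditions explicitly produces two adjacent positions $p_1, p_2$ in $\sol U$ or $\sol V$ carrying the characters $ab$ or $\inv b \inv a$, with $p_1$ and $p_2$ coming from different sources (a constant of the equation and a variable, or two different variable occurrences). Since $ab$ (if $a\ne b$) or $aa$ (if $a=b$) is itself an $ab$-block, the two-letter factor $p_1 p_2$ is always contained in some $ab$-factor; its maximal extension is still an $ab$-factor and necessarily straddles the boundary between $p_1$ and $p_2$, hence lies neither inside $U$ (or $V$) alone nor inside a single $\sol X$. Thus $ab$ is crossing. The only delicate point in the whole argument is the translation between the $\inv b \inv a$ crossings and the $\ov{\,\cdot\,}$-variants of the three patterns; this is purely formal and, thanks to the standing hypothesis that the involution on $\variables$ has no fixed point, does not create any genuine obstacle.
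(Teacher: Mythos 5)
Your proposal is correct and takes essentially the same approach as the paper: locate a source boundary inside a crossing $ab$-factor, note that the two letters straddling it must form $ab$ or $\ov{b}\,\ov{a}$, and classify the three boundary types; the paper compresses all of this into the phrase ``case inspection,'' while you make the adjacency observation explicit and carry out the involution bookkeeping. (One minor remark: the fixed-point-freeness of the involution on $\variables$ that you invoke at the end is not actually needed anywhere in the argument.)
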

\begin{proof}
So suppose that $ab$ is crossing, which means that there is some $ab$-factor that is crossing.
By definition it means that it does not come from one occurrence of a variable, nor from equation.
Thus one of its constants comes from a variable and the other from the equation or from a different variable
(note that as the solution is non-empty, the considered constants and variables are neighbouring in the word).
Case inspection implies that one of the conditions listed in the lemma describes this situation.

So suppose that $ab$ satisfies one of the conditions in the lemma. Then clearly $ab$ is crossing:
for instance, if $aX$ occurs in the equation and \sol X begins with $b$ then the $ab$-factor formed by this $a$
and the first $b$ in \sol X is crossing, other cases are shown in the same way.
\qed
\end{proof}

Since a crossing $ab$ can be associated with an occurrence of a variable $X$,
it follows that the number of crossing $ab$s is linear in the number of occurrences of variables.
\begin{lemma}
\label{lem: different crossing}
Let $(U,V)$ be a proper equation and \mysolution{} its solution.
Then there are at most $4n$ different crossing words in $\mysolution$.
\end{lemma}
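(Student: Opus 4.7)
The plan is to trace every crossing ordered pair $ab$ back to a boundary between a variable occurrence and its neighbour in $(U,V)$ via Lemma~\ref{lem: characterisation of crossing pairs}, and then count those boundaries. First I would reduce to the case in which $\sol{X} \neq \epsilon$ for every variable $X$ occurring in $(U,V)$, since by the characterisation an empty substitution witnesses no crossing. Recall that, as $(U,V)$ is proper, the combined number of variable occurrences in $U$ and $V$ is at most $n$.

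For each occurrence of a variable $Y$ in the equation, its two sides in the equation word are either at the border of $U$ or $V$, adjacent to a constant, or adjacent to another variable. By Lemma~\ref{lem: characterisation of crossing pairs}, every crossing ordered pair $ab$ is witnessed by one of the configurations $aY$, $Yb$, $YZ$ (or one of their involutions), and each such configuration sits at exactly one boundary of the above kind. A constant-variable boundary uses a single side of one occurrence and generates one ordered pair; a variable-variable boundary consumes two sides (one from each of the two adjacent occurrences) but generates only one ordered pair, shared between them. Hence the number of ordered pairs produced directly by these three configurations is bounded by the total number of variable-sides, which is at most $2n$.

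The final step is to invoke involution closure. Because, in Definition~\ref{definitionabblock}, the set of $ab$-blocks coincides with the set of $\ov{b}\ov{a}$-blocks, an ordered pair $ab$ is crossing if and only if its mirror $\ov{b}\ov{a}$ is crossing. Thus the set of crossing ordered pairs is closed under the involution $(a,b)\mapsto(\ov{b},\ov{a})$, so each of the at most $2n$ directly generated pairs can contribute at most one further pair, namely its mirror; the total is therefore bounded by $4n$.

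The main obstacle I foresee is the bookkeeping around variable occurrences at the border of $U$ or $V$ (which contribute no boundary on one side) and around variable-variable boundaries (which are shared between two sides, so must not be double-counted); both situations in fact only decrease the count, but the write-up has to make this explicit. A minor subtlety is the self-mirror case $a = \ov{b}$, where the involution mirror coincides with the original pair and contributes nothing new; this too can only help the bound.
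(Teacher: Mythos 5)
Your proof is correct and follows essentially the same route as the paper's: after reducing to the non-empty case, both arguments invoke Lemma~\ref{lem: characterisation of crossing pairs} to associate each crossing pair with one of the at most $2n$ (occurrence, first-or-last-constant) positions, and both then double the count to $4n$ to account for the pair $\ov{b}\,\ov{a}$ being crossing whenever $ab$ is. Your handling of variable-variable boundaries and the self-mirror case $a=\ov{b}$ just tightens the same bookkeeping without changing the argument.
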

The only property of a proper equation is that it has at most $n$ occurrences of variables.
\begin{proof}
Firstly observe that if $\sol X = \epsilon$ for any variable, then we can remove all $X$es from the equation
and this does not influence whether $ab$ is crossing or not.
Thus we can assume that \mysolution{} is non-empty, in the sense that $\sol X \neq \epsilon$ for every variable present in the equation.
So in the remaining part we may assume that the assumptions of Lemma~\ref{lem: characterisation of crossing pairs}
are satisfied.

By Lemma~\ref{lem: characterisation of crossing pairs} when $ab$ is crossing,
then one can associate $ab$ (or $\inv b \inv a$) with an occurrence of a variable and its first or last constant.
There are at most $n$ occurrences of variables, so $2n$ occurrences with distinguished first or last constant
and we have also two options of associating ($ab$ or $\inv b \inv a$), so $4n$ possibilities in total,
which yields the claim.
\qed
\end{proof}

\subsubsection*{Reduction for non-crossing $ab$.}
When $ab$ is non-crossing in the solution \mysolution{}
we can make the compression for all $ab$-factors that occur in the equation $(U,V)$
on \sol U by replacing each $ab$-factor in $U$ and $V$.
The correctness follows from the fact that each maximal occurrence of $ab$-block in \sol U and \sol V
comes either wholly from $U$ ($V$, respectively) or from \sol X.
The former are replaced by our procedure and the latter are replaced implicitly, by changing the solution.
Thus it can be shown that the solutions of the new and old equation are in one-to-one correspondence,
\ie are transformed by the procedure.

\begin{algorithm}[h]
  \caption{$\algncr(U,V, ab)$ Reduction for a non-crossing $ab$ \label{alg:pc}}
  \begin{algorithmic}[1]
		\State $S \gets $ all maximal $ab$-factors in $U$ and $V$
  	\For{$s \in S$}
			\State let $c_s$ be a fresh constant
			\If{$s = \ov{s}$}
				\State let $\ov{c_s}$ denote $c_s$
			\Else
				\State let $\ov{c_s}$ be a fresh constant
			\EndIf
			\State replace each maximal $ab$-factor $s$ ($\ov{s}$) in $U$ and $V$ by $c_s$ ($\ov{c_s}$, respectively)
			\State set $\rho(c_s) \gets \rho(s)$ and $\rho(\ov{c_s}) \gets \rho(\ov{s})$
		\EndFor
		\State \Return $(U', V')$
 \end{algorithmic}
\end{algorithm}

To show that $\algncr(U,V, ab)$ transforms the solutions (for a non-crossing $ab$)
or weakly transforms the solutions (in the general case),
for a solution $(\mysolution, \morphism)$ we should define a corresponding solution $(\mysolution', \morphism')$
of the obtained $(U', V')$ as well as an inverse operator $\phi$.
Intuitively, they are defined as follows (let as in $\algncr(U,V, ab)$
the $S$ be the set of all maximal $ab$-blocks in $(U, V)$
and let $\algncr(U,V,ab)$ replace $s \in S$ by $c_s$):
\begin{itemize}
	\item $\mysolution'(X)$ is obtained by replacing each $s \in S$ by $c_s$
	\item $\morphism'$ is $\morphism$ extended to new constants by setting
	$\morphism'(c_s) = \morphism(s)$
	\item the operator $\phi_{\{c_s \to s\}_{s \in S}}$ in each \sol X replaces each $c_s$ by $s$, for all $s \in S$.
(Note that $\phi_{\{c_s \to s\}_{s \in S}}$ is the $S$-expansion.)
\end{itemize}

Note that the defined operator is in the class defined in Section~\ref{subsec: graph representation}:
all $s_X, s_X'$ are $\epsilon$ while $E_1, \ldots, E_k$ are exactly the elements of $S$.

\begin{lemma}
\label{lem: compression noncrossing}
Let  $ab$ is non-crossing in a solution \mysolution{} and let $\algncr(U,V,ab)$ compute a set of $ab$-blocks $S$ in $(U, V)$ and replace $s \in S$ by $c_s$.
Then $\algncr(U,V,ab)$ 
transforms $(U,V)$ with $(\mysolution,\morphism)$ to $(U',V')$ with $(\mysolution',\morphism')$,
where $\morphism '$ is defined as above and 
$\phi_{\{c_s \to s\}_{s \in S}}$ is the inverse operator.
\end{lemma}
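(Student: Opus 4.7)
The plan is to verify both clauses of the definition of ``transforms''. The family of inverse operators consists of the single operator $\phi = \phi_{\{c_s \to s\}_{s \in S}}$ (no Diophantine parameters are needed, all $s_X, s_X'$ are empty, and the strings $E_1, \ldots, E_k$ of the formalism from Section~\ref{subsec: graph representation} are precisely the elements of $S$). I define $\mysolution'(X)$ as the word obtained from $\mysolution(X)$ by replacing each maximal occurrence of every $s \in S$ by the constant $c_s$, and extend $\morphism$ to $\morphism'$ on the new constants via $\morphism'(c_s) := \morphism(s)$.

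First I would verify that $(\mysolution', \morphism')$ is indeed a solution of $(U', V')$. The central observation, combining Lemma~\ref{factoverlap} with the non-crossing hypothesis via Lemma~\ref{lem: characterisation of crossing pairs}, is that every maximal $ab$-factor in $\mysolution(U)$ lies entirely inside constants contributed by $U$ or entirely inside some $\mysolution(X)$, and similarly for $V$. Consequently, applying the $S$-reduction to $\mysolution(U)$ factors cleanly: it amounts to first reducing inside each $\mysolution(X)$, producing $\mysolution'(X)$, and then reducing the constants of $U$, producing $U'$. Thus the $S$-reduction of $\mysolution(U)$ equals $\mysolution'(U')$, and symmetrically $\mysolution'(V')$ is the $S$-reduction of $\mysolution(V)$; since $\mysolution(U) = \mysolution(V)$, we conclude $\mysolution'(U') = \mysolution'(V')$. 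Involution compatibility $\mysolution'(\overline X) = \overline{\mysolution'(X)}$ follows because $S$ is involution-closed and the algorithm sets $\overline{c_s} = c_{\overline s}$; the $\rho$-constraint is preserved because $\rho(c_s) := \rho(s)$, so $\rho(\mysolution'(X)) = \rho(\mysolution(X)) = \rho(X)$.

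Next I would check the remaining conditions of the first clause. The equality $\phi[\mysolution'] = \mysolution$ is immediate, since $\phi$ is exactly the $S$-expansion inverse to the reduction that defined $\mysolution'$. The equality $\morphism'(\mysolution'(U')) = \morphism(\mysolution(U))$ follows because substituting $\morphism'(c_s) = \morphism(s)$ in $\mysolution'(U')$ simultaneously undoes the replacement on the equation part and on each $\mysolution'(X)$. For the weight it is crucial that $|\morphism'(\mysolution'(X))| = |\morphism(\mysolution(X))|$ for every $X$, so the sum $\sum_X |UV|_X |\morphism(\sig(X))|$ is preserved (the number of variable occurrences is unchanged); meanwhile $|U'| + |V'| \le |U| + |V|$, with strict inequality exactly when $S \neq \emptyset$, i.e.\ exactly when $(U,V) \neq (U',V')$, giving the strict weight decrease required.

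Finally, for the second clause I would check soundness of $\phi$ on an \emph{arbitrary} solution $(\mysolution'', \morphism'')$ of $(U', V')$: the argument simply reverses the previous one. The word $\phi[\mysolution''](U)$ is obtained from $\mysolution''(U')$ by simultaneously expanding each $c_s$ to $s$ in the equation constants and in the images $\mysolution''(X)$, so the equality $\mysolution''(U') = \mysolution''(V')$ lifts to $\phi[\mysolution''](U) = \phi[\mysolution''](V)$; the involution and $\rho$-conditions transfer because each $c_s$ and its expansion agree under both $\invol$ and $\rho$ by construction. The main bookkeeping obstacle I anticipate is organising the five cases of Definition~\ref{definitionabblock}, in particular ensuring that the identification $c_s = \overline{c_s}$ occurs exactly when $s = \overline s$, so that $\mysolution'$ and $\mysolution''$ satisfy the involution axiom; this is handled uniformly by the conditional assignment of $\overline{c_s}$ in $\algncr$, so a single case-free argument suffices.
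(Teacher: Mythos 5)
Your proposal is correct and follows essentially the same route as the paper's proof: you define $\mysolution'$ and $\morphism'$ identically, use the non-crossing hypothesis (together with Lemma~\ref{factoverlap}) to make the $S$-reduction distribute cleanly over $\mysolution(U)$, and handle the reverse direction, the involution/$\rho$-compatibility, and the weight bookkeeping in the same way. The only cosmetic difference is that you invoke Lemma~\ref{lem: characterisation of crossing pairs} at the factorisation step where the paper appeals directly to the definition of non-crossing.
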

\begin{proof}
We define a new solution $\mysolution'$ by replacing each maximal factor $s \in S$ in any \sol X by $c_s$.
Note that in this way $\rho(\sol X) = \rho(\mysolution'(X))$, as for each $s$ we define $\rho(c_s) \gets \rho(s)$.
As all constants $\{c_s \to s\}_{s \in S}$ are fresh, this means that $\mysolution = \phi_{\{c_s \to s\}_{s \in S}}[\mysolution']$,
as claimed.
This is a solution of $(U', V')$: consider any maximal $ab$-factor $s$ in \sol U (or \sol V):
\begin{itemize}
	\item If it came from the equation then it was replaced by $\algncr(U,V,ab)$.
	\item If it came from a substitution for a variable and $s$
	\begin{itemize}
		\item is in $S$ then it was replaced implicitly in the definition of $\mysolution'$;
		\item is not in $S$ then it is left as it was.
	\end{itemize}
	\item It cannot be crossing, as this contradicts the assumption.
\end{itemize}
Thus, $\mysolution'(U')$ is obtained from \sol U by replacing each maximal $ab$-factor $s \in S$ by $c_s$,
in particular, $\mysolution'(U') = \mysolution'(V')$.

We define $\morphism'$ simply by extending $\morphism$ to new letter $c_s$ in a natural way:
$\morphism(c_s) = \morphism(s)$ for each $s \in S$;
note that such defined $\morphism'$ is compatible, since $\rho(c_s) = \rho(s)$ and $h$ is known to be compatible.
Furthermore $\morphism'(\mysolution'(U')) = \morphism(\mysolution(U))$,
as $\mysolution'(U')$ is obtained from $\mysolution(U)$ by replacing each maximal factor $s\in S$ by $c_s$
and by definition $\morphism'(c_s) = \morphism(s)$ and on all other letters they are equal.

Now, if $(\mysolution'', \morphism'')$ is any solution of $(U', V')$ then
$\phi_{\{c_s \to s\}_{s \in S}}[\mysolution''](U) = \phi_{\{c_s \to s\}_{s \in S}}[\mysolution''](V)$:
observe that $\phi_{\{c_s \to s\}_{s \in S}}[\mysolution''](U)$ is obtained from $\mysolution''(U')$ by replacing each $c_s$ by $s$,
as the same applies to $\phi_{\{c_s \to s\}_{s \in S}}[\mysolution''](V)$, we obtain that indeed $(\mysolution'',\morphism'')$
is a solution, for any $\morphism''$ for letters present in $\mysolution''(U)$.
We now show that there is at least one such a homomorphism:
we know that there is such a homomorphism for the alphabet of $(U, V)$ (as it is a proper equation)
and for other letters we can use the homomorphism $\morphism''$, by the form of $\phi_{\{c_s \to s\}_{s \in S}}$
there are no other letters.

Concerning the weight, observe first that $\morphism'(\mysolution'(X)) = \morphism(\mysolution(X))$:
\begin{itemize}
	\item if $c_s$ replaced $s$ then $\morphism'(c_s) = \morphism(s)$;
	\item for every preserved constant $a$ it holds that $\morphism'(a) = \morphism(a)$.
\end{itemize}
Clearly we have $|U'| + |V'| \leq |U| + |V|$, so the weight does not increase.
Furthermore, if $(U, V) \neq (U', V')$ then at least one factor was replaced in the equation
and so $|U'| + |V'| < |U| + |V|$ and so the weight decreases.
\qed
\end{proof}

\subsection{Reduction for crossing $ab$.}
Since we already know how to compress a non-crossing $ab$,
a natural way to deal with a crossing $ab$ is to `uncross' it and then compress using \algncr.
To this end we pop from the variables the whole parts of maximal $ab$-blocks which cause this block to be crossing.
Afterwards all maximal $ab$-blocks are noncrossing and so they can be compressed using $\algncr$

\subsubsection{Idea}
As an example consider an equation $abaXaXaXa = aXabYbYbY$, let $a = \inv a$ an $b = \inv b$ so that the $ab$-blocks are non-trivial.
For simplicity, let us for now ignore the constraints.
Also, let us focus on the solutions of the form $X \in b(ab)^{\ell_X}$ and $Y = (ab)^{\ell_Y}a$;
clearly, $ab$ is crossing in this solution.
So we `pop' from $X$ the $b(ab)^{\ell_X}$ and $(ab)^{\ell_Y}a$ from $Y$ (and remove those variables).
After the popping this equation is turned into $(ab)^{3 \ell_X+4}a = (ab)^{\ell_X + 3 \ell_Y + 4}a$, for which $ab$ is noncrossing.
Thus solutions of the original equation (of the predescribed form $X = b(ab)^{\ell_X}$ and $Y = (ab)^{\ell_Y}a$)
correspond to the solutions of the Diophantine equation:
$3 \ell_X + 4 = \ell_X  + 3 \ell_Y + 4$.
This points out another idea of the popping: when we pop the whole part of block that is crossing,
we do not immediately guess its length, instead we treat the length (here: $2\ell_X +1$ or $2 r_X +1$) as a \emph{parameter},
identify $ab$-blocks of the same length and only afterwards verify, whether our guesses were correct.
The verification is formalised as a linear system of Diophantine equations (here: $3 \ell_X + 4 = \ell_X  + 3 \ell_Y + 4$) in parameters (here: $\ell_X$ and $r_X$).
We can check solvability (and compute a minimal solution) in \NP{}
(so in particular in \PSPACE), see \eg \cite{HU}. (For or a more accurate estimation of constants see \cite{dgh05IC}).
Each of solutions of the Diophantine system corresponds to one ``real'' set of lengths of $ab$-blocks popped from variables.
Now we replace equation $(ab)^{3 \ell_X+4}a = (ab)^{\ell_X + 3 \ell_Y + 4}a$ with $ca = ca$, which has a unique solution
$\sol X = \sol Y = \epsilon$.
Of course there is no single inverse operator, instead, they should take into the account the system $3 \ell_X+4 = \ell_X + 3 \ell_Y + 4$.
And it is so, for each solution $(\ell_X, r_X)$ of this system there is one inverse operator,
which first replaces $c$ with $(ab)^{\ell_X + 3 \ell_Y + 4}$ and then appends $b(ab)^{\ell_X}$ to the substitution for $X$
and $(ab)^{r_X}a$ to the substitution for $Y$.

There are some details that were ignored in this example:
during popping we need to also guess the types of the popped blocks and whether the variable should be removed
(as now it represents $\epsilon$) or not.
Furthermore, we also need to calculate the transition of the popped $ab$-block,
which depends on the actual length (\ie on particular $\ell_X$, $\ell_Y$, etc.).
However, this $ab$ block is long because of repeated $ab$.
Now, when we look at $\rho(ab)$, $\rho(ab)^2$, \ldots then starting from some (at most exponential) value 
it becomes periodic, the period is also at most exponential.
Thus $\rho(ab)^{\ell_X} = \rho(ab)^{\ell}$ for some $\ell$ which is at most exponential.
This can be written as an Diophantine equation and added to the constructed linear Diophantine system
which has polynomial size if coefficients are written in binary.

\subsubsection{Detailed description}
A full description is available also as a psuedocode, see Algorithm~\ref{alg:prefixi}.
The proof of correctness is provided in Lemma~\ref{lem: compression crossing}.

\paragraph{Idempotent power}
In the preprocessing, when the $ab$-blocks can be nontrivial
(\ie when $a = b$ \emph{or} $a\neq b$ and $a = \inv a$ and $b = \inv b$)
we guess (some) idempotent power $p$ of $\rho(ab)$ (when $a \neq b$) or $\rho(a)$ (when $a = b$, in the following
we consider only the former case) in $\Mn$,
\ie $p$ such that $\rho(ab)^{2p} = \rho(ab)^p$.
It is easy to show that there is such $p \leq |\Mn| \leq 2^{4m^2}$,
so we can restrict the guess so that the binary notation of $p$ is of polynomial size.
Note that we can verify the guess by computing $\rho(ab)^{2p}$ and $\rho(ab)^p$ in time $\poly(\log p, m)$
(the powers are computed be iterated squaring of the matrices).
And so we can indeed verify that $p$ is an idempotent power of $\rho(ab)$.
Note that as $a =\inv a$ and $b = \inv b$ in this case, $p$ is also an idempotent power of $ba = \inv{ab}$.

\paragraph{Popping and transitions}
Now for every variable $X$ we guess whether \sol X begins (and ends) with an $ab$-factor
or a single-letter suffix (prefix, respectively) of an $ab$-factor;
to simplify the notation, the $ab$-\emph{prefix} of \sol X is the maximal prefix of \sol X that is also a suffix of some $ab$-factor;
define the $ab$-\emph{suffix} in a symmetric way.
Note that an $ab$prefix of \sol X may be empty, may consist of a single-letter (\ie $a$ or $\inv b$) or have more letters
in which case it is also an $ab$-factor itself.
Thus, for each $X$ we guess its $ab$-prefix $s_X$ and left-pop it from $X$, \ie we replace $X$ with $s_XX$
(at the same time we need to right-pop the $\inv{s_X}$ from $\inv X$, \ie replace $\inv X$ with $\inv X \inv{s_X}$,
note that this is the $ab$-suffix of $\inv X$).
Consider $s_X$, suppose that it is nontrivial, \ie more than $1$ letter was popped.
When $a = b$ then $s_X = a^{x_X}$ or $s_X = \inv{a}^{x_X}$ for some $x_X \geq 2$.
Similarly, when $a = \inv a, b = \inv b$ and $a \neq b$ then $s_X \in \{ (ab)^{x_X}, (ab)^{x_X}b, b(ab)^{x_X}, (ab)^{x_X} \}$,
for some $x_X \geq 1$.
In other cases, $s_X$ does not include any parameter $x_X$.
Similar observation can be made for $s'_X$, which uses parameter $y_X$.
We treat $x_X, y_X$ as \emph{parameters} denoting integers whose values are to be established later on
(we do not use name \emph{variables}, as this is reserved for variables representing words).
The information about the $x_X, y_X$ is encoded in $s_X, s'_X$: we simply write them in one of the forms given above
(note that $x_{\inv X} = y_X$ and $y_{\inv X} = x_X$).
Eventually, we fix the values of $x_X$ and $y_X$, say to $\ell_X$ and $r_X$.
Then $s_X[\ell_X]$ denotes a string $s_X$ in which we substituted a number $\ell_X$ for parameter $x_X$ and so $s_X[\ell_X]$
is a string of a well-defined length.
If $s_X$ does not depend on $\ell_X$ then also $s_X[\ell_X]$ is a string that does not depend on $\ell_X$,
still we use this notion to streamline the presentation.

We now fix the transitions of the popped blocks; if $x_X$ and $y_X$ are defined, the transition depends on them.
As an example, consider a transition $\rho(ab)^{x_X}$.
Recall that $p$ is the idempotent power for $\rho(ab)$ (and so also of $\rho(\inv b \inv a)$),
which means that $\rho(ab)^{kp + \ell'} = \rho(ab)^{p + \ell'}$ (when $k \geq 1$), where $0 \leq \ell' < p$.
Thus for $\rho(ab) ^{x_X}$ we guess whether $x_X \geq 2p$.
If so, we guess $0 \leq \ell'_X < p$ and write equations $kp + \ell'_X = x_X$, $k \geq 1$,
then $\rho(ab)^{x_X} = \rho(ab)^{p+\ell'_X}$, and both $p$ and $\ell'_X$ are known,
so we can calculate $\rho(ab)^{p + \ell'}$ .
If $x_X \leq 2p$ then we guess its value (at most $2p$) and compute $\rho(ab)^{x_X}$.
The same is done for $y_X$.

\paragraph{Identical blocks}
As we know the types of $ab$-factors in the equation (note that they do not depend on particular values of $\{x_X, y_X \}_{X \in \variables}$),
we can calculate the maximal $ab$-factors in the equation (as well as their types), even though $x_X$ and $y_X$ are not yet known.
Denote those maximal $ab$-factors by $E_1, \ldots, E_\ell$, note that they may use $(ab)^{x_X}$ or $(ab)^{y_X}$,
similarly as $\{s_X, s_X'\}_{X \in \variables}$,
we do not impose the condition that one $E_i$ uses at most one such an expression.
As in case of $\{s_X, s_X'\{_{X \in \variables}$, by $E_i[\{\ell_X, r_X \}_{X \in \variables}]$ we denote $E_i$
in which parameters $x_X$ and $y_X$ were replaced with numbers $\ell_X$ and $r_X$, for each variable $X$.
Concerning their lengths, denote by $e_i$ the length fo $E_i$.
since the popped factors have lengths that are linear in $x_X$ or $y_X$ for some $X$
the lengths $e_1, e_2, \ldots, e_\ell$ are also linear in $\{x_X, y_X \}_{X \in \variables}$.
It is easy to see (Lemma~\ref{lem: compression crossing}) that $\ell$ is polynomial in the size of the equation
and so are the descriptions of each $e_i$.
For the future reference, by $e_i[\{\ell_X, r_X \}_{X \in \variables}]$
we denote the evaluation of expression $e_i$ when $x_X$ is substituted by $\ell_X$
and $y_X$ is substituted by $r_X$, for each $X \in \variables$.
This corresponds to the length for some particular values of parameters $\{x_X, y_X \}_{X \in \variables}$.

Consider all maximal $ab$-factors of the same type, we guess the order between their lengths,
\ie we guess which of them are equal and what is the order between groups of expressions denoting equal lengths.
We write the corresponding conditions into the system of equations;
formally we divide these expressions into groups $\mathcal E _1$, $\mathcal E _2$, \ldots, $\mathcal E _k$,
one group contains only factors of the same type and its elements correspond to factors of the same length.
For each group $\mathcal E = \{E_{i_1}, E_{i_2}, \ldots, E_{i_\ell} \} $ we add equations
$e_{i_1} = e_{i_2}$, $e_{i_2} = e_{i_3}$, \ldots, $e_{i_{\ell-1}} = e_{i_\ell}$, which ensure that indeed those factors are of the same length.
Then for all groups  $\mathcal E _1$, $\mathcal E _2$, \ldots, $\mathcal E _k$ of factors of the same type we guess
the relation between the lengths of factors between the groups, \ie. for each two groups $\mathcal E_i$ and $\mathcal E_j$
we choose elements from the group, say $E_i$ and $E_j$, and add the appropriate of the inequalities $e_i < e_j$ or $e_i > e_j$ to the system.
Note that we can rule out the possibility that $e_i = e_j$, as we put $e_i$ and $e_j$ in different groups

\paragraph{Verification}
As the constructed Diophantine system $D$ is of polynomial size, it can be non-deterministically verified in polynomial time,
in particular, this can be done in \PSPACE, see \eg \cite{HU}.
If the verification fails, we terminate.

\paragraph{Replacement}
When the system $D$ is successfully verified, we replace all $ab$-factors in one group by a new letter,
\ie blocks $\{E_{i_1}, E_{i_2}, \ldots, E_{i_k} \}$ are replaced with a letter$a_{e_{i_1}}$
(the choice of $e_{i_1}$ is arbitrary), obtaining the new equation.

\begin{algorithm}[h]
  \caption{$\algcr(U,V,ab)$ Compression of $ab$-blocks for a crossing $ab$ \label{alg:prefixi}}
  \begin{algorithmic}[1]
		\State $p \gets$ idempotent power of $\rho(ab)$ in $\Mn$
		\Comment{Guess and verify when needed. The same as for $\inv{ab}$}
		\For{$\{X, \inv X\}\in \variables$} \Comment{Consider $X$ and its involution at the same time}
			\State guess $ab$-prefix of $s_X$ of \sol X \Comment{May depend on a parameter $x_X$}
				\State add constraint $x_X \geq 1$ (or $x_X \geq 2$), when applicable \Comment{Depending on $s_X$ and whether $a = b$}\par
				\Comment{$x_X \geq 1$ for $a\neq b$, $\inv a = a$, $\inv b = b$ and $x_X \geq 2$ for $a = b$}
				\If{$x_X < 2p$} \Comment{Guess when applicable}
					\State guess $\ell_X$, where $\ell_X < 2p$ \Comment{value of $x_X$}
					\State add $x_X = \ell_X$ to $D$, calculate $\rho_x \gets \rho(s_X[\ell_X])$
				\Else
					\State guess $\ell_X'$, where $0 \leq \ell_X' < p$ \Comment{value of $x_X$ mod $p$}
					\State add $\{x_X = k \cdot p + \ell_X', k > 0\}$ to $D$, calculate $\rho_s \gets \rho(s_X[\ell_X' + p])$
				\EndIf
				\State guess $\rho_X$ such that $\rho(X) = \rho_s \rho_X$
				\State replace each $X$ with $s_XX$, set $\rho(X) \gets \rho_X$
				\If{$\sol X = \epsilon$ and $\rho(X) = \rho(\epsilon)$} \Comment{Guess}
					\State remove $X$ from the equation
				\EndIf
			\State Perform symmetric actions on the end of $X$ \Comment{With parameter $y_X$}
		\EndFor
		\State let $E_1, E_2, \ldots, E_\ell$ be the maximal $ab$-factors in $(U,V)$ and
		$e_1, e_2, \ldots, e_\ell$ their lengths
  	\State partition $\{E_1, E_2, \ldots, E_\ell\}$ into groups $\{\mathcal E_1, \ldots, \mathcal E_k\}$, 
		\Comment{Guess the partition}
		\par
		\Comment{Each group has $ab$-factors of the same type}
  		
  	\For{each group $\mathcal E_{i_j} = \{ E_{i_1}, E_{i_2}, \ldots, E_{i_\ell} \}$}
				\State add equations $\{e_{i_1} = e_{i_2}, e_{i_1} = e_{i_2}, \ldots, e_{i_{\ell-1}} = e_{i_\ell}\}$ to $D$
  	\EndFor
  	\For{different groups $\mathcal E_{i}$ and $\mathcal E_{j}$ of $ab$-factors of the same type}
				\State take any $e_i \in \mathcal E_i$, $e_j \in \mathcal E_j$, add one of inequalities
				$\{e_i < e_j\}$ or $\{e_i < e_j\}$ to $D$
  	\EndFor
		\State verify system $D$ \Comment{In NP}
		\For{each part $\mathcal E_i = \{E_{i_1}, E_{i_2}, \ldots, E_{i_{\ell}}\}$}
			\State let $c_{e_{i_1}}$ be an unused constant
			\State replace blocks $E_{i_1}, E_{i_2}, \ldots, E_{i_{\ell}}$ by $c_{e_{i_1}}$
		\EndFor
  \end{algorithmic}
\end{algorithm}

\paragraph{Family of inverse operators}
The corresponding family of inverse operators is defined in terms of system $D$,
the popped prefixes and suffixes $\{s_X, s'_X \}_{X \in \variables}$
and the maximal blocks $E_1, E_2, \ldots, E_k$ (replaced with letters $c_{e_1}, c_{e_2}, \ldots, c_{e_k}$),
call this class $\Phi_{D, \{s_X, s'_X \}_{X \in \variables}, E_1, \ldots, E_k}$.
For each solution $\{\ell_X, r_X \}_{X \in \variables}$ of $D$ the family
contains an operator $\phi_{\{\ell_X,r_X\}_{X \in \variables}}$.
The action of such an operator (on $X$) are as follows: it first replaces each 
letter $c_{e_i}$ with $ab$-block $E_i[\{\ell_X,r_X\}_{X \in \variables}]$
(so of length $e_i[\{\ell_X,r_X\}_{X \in \variables}]$).
Afterwards, we append/prepend blocks $s_X[\ell_X]$ and $s'_X[r_X]$ to the substitution for $X$.

Note that this family of operators is of the form promised in Section~\ref{susubsec: operators}.

\paragraph{Inverse operator for a particular solution}
For a solution $(\mysolution, \morphism)$ consider the run of $\algcr(U,V,ab)$
in which the non-deterministic choices are done according to \mysolution:
\ie we guess $\{s_X, s'_X\}_{X \in \variables}$ such that $s_X[\ell_X]$ is the $ab$-prefix of $\sol X$
and $s_X'[r_X]$ is the $ab$-suffix of \sol X for appropriate values $\{ \ell_X, r_X\}_{X \in \variables}$
(when $\sol X = s_X[\ell_X]$ we guess $s_X' = \epsilon$).
We partition the arithmetic expressions according to \mysolution,
\ie $E_i$ and $E_j$ (of the same type) are in one group if and only if
$E_i[\{\ell_X,r_X\}_{X \in \variables}] = E_j[\{\ell_X,r_X\}_{X \in \variables}]$
(which in particular implies that their lengths
$e_i[\{\ell_X,r_X\}_{X \in \variables}]$ and $E_j[\{\ell_X,r_X\}_{X \in \variables}]$ are equal).
Additionally, for two different groups $\mathcal E _i$ and $\mathcal E_j$ of expressions of the same type
we add an equation $e_i < e_j$ for some $e_i \in \mathcal E_i$ and $e_j \in \mathcal E_j$ if and only if 
$e_i[\{\ell_X,r_X\}_{X \in \variables}] < e_j[\{\ell_X,r_X\}_{X \in \variables}]$
(and in the other case we add the converse inequality $e_i > e_j$).

Then $\{\ell_X,r_X\}_{X \in \variables}$ is a solution of a constructed system $D$ and
$\algcr(U,V,ab)$ transforms $(U, V)$ with $(\mysolution, \morphism)$ and 
$\phi_{\{\ell_X,r_X\}_{X \in \variables}} \in \Phi_{D, \{s_X, s'_X \}_{X \in \variables}, E_1, \ldots, E_k}$
is the corresponding inverse operator.
Concerning homomorphism $\morphism'$, we extend $\morphism$ to new constants by setting
$\morphism'(c_{e_i}) = \morphism (e_i[\{\ell_X,r_X\}_{X \in \variables}])$.

It remains to formally state and prove the above intuitions.

\begin{lemma}
\label{lem: compression crossing}
$\algcr(U,V,ab)$ transforms solutions.
Let $D$ be the system returned by $\algcr(U,V,ab)$ for the corresponding non-deterministic choices,
and let $X$ left-popped $s_X$ and right-popped $s'_X$
and let expressions in groups $\mathcal E_1, \ldots, \mathcal E_k$ be replaced with letters $c_{e_1}$, \ldots, $c_{e_k}$.
Then the family $\Phi_{D, \{s_X,s'_X\}_{X \in \variables}, E_1, \ldots, E_k}$ is the corresponding family of operators.
\end{lemma}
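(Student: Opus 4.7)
The plan is to verify separately the two clauses in the definition of ``transforms solutions,'' using the correspondence already sketched in the paragraph titled ``Inverse operator for a particular solution.'' The forward direction starts with an arbitrary solution $(\mysolution, \morphism)$ of the proper equation $(U,V)$. For each variable $X$, I read off the $ab$-prefix and $ab$-suffix of $\sol X$ and match them to one of the parametric forms $s_X, s'_X$ listed in the algorithm, setting $\ell_X, r_X$ to the true parameter values. Guess the idempotent power $p$ of $\rho(ab)$ as any sufficiently large power, so that the congruence equations added to $D$ hold for $\{\ell_X, r_X\}$ by construction. Once popping has been performed, compute the maximal $ab$-factors $E_1, \ldots, E_\ell$ in $(U,V)$ after popping, and group them by the actual values $E_i[\{\ell_X, r_X\}]$; the equalities inside each group and the strict inequalities between groups are then satisfied. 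Replace each group by a fresh letter $c_{e_i}$ to obtain $(U', V')$, and define $\mysolution'$ by stripping the popped prefix/suffix from each $\sol X$ and replacing each $ab$-factor $E_i[\{\ell_X,r_X\}]$ occurring inside the stripped $\sol X$ by $c_{e_i}$; set $\morphism'(c_{e_i}) = \morphism(E_i[\{\ell_X,r_X\}])$. Then $\mysolution'(U')$ is obtained from $\sol U$ by replacing precisely the same factors, which gives $\mysolution'(U') = \mysolution'(V')$ and $\morphism'(\mysolution'(U')) = \morphism(\sol U)$. By construction $\phi_{\{\ell_X,r_X\}}[\mysolution'] = \mysolution$, so the first clause holds.

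For the backward direction, fix any solution $\{\ell_X, r_X\}$ of $D$, the corresponding $\phi = \phi_{\{\ell_X, r_X\}} \in \Phi_{D, \{s_X, s'_X\}, E_1, \ldots, E_k}$, and any solution $(\mysolution'', \morphism'')$ of $(U', V')$ over an arbitrary alphabet. The operator $\phi$ first replaces every $c_{e_i}$ inside each $\mysolution''(X)$ by $E_i[\{\ell_X, r_X\}]$ and then prepends $s_X[\ell_X]$ and appends $s'_X[r_X]$; call the result $\mysolution'''$. I have to check that $\mysolution'''(U) = \mysolution'''(V)$ and that $\rho(\mysolution'''(X)) = \rho(X)$. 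The word equality is immediate because $(U, V)$ is recovered from $(U', V')$ by undoing exactly the replacements $E_i \mapsto c_{e_i}$ and the popping $X \mapsto s_X X s'_X$, and $\phi$ inverts both operations. The constraint equality follows because $\rho(c_{e_i})$ was set to $\rho(E_i[\{\ell_X,r_X\}])$ using the idempotent-power identities encoded in $D$, and because the $\rho_X$ guessed during popping satisfies $\rho(X) = \rho(s_X[\ell_X]) \rho_X \rho(s'_X[r_X])$ by the algorithm. A compatible $\morphism$ on the alphabet of $\mysolution'''(U)$ exists because $(U, V)$ is proper and $\morphism''$ already handles the constants not in the alphabet of $(U,V)$.

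The main obstacle is the bookkeeping around the involution and the matrix transitions. Whenever both $X$ and $\inv X$ appear in the equation, popping $s_X$ on the left of $X$ forces popping $\inv{s_X}$ on the right of $\inv X$; by treating $\{X, \inv X\}$ as a pair in the loop, the identifications $x_{\inv X} = y_X$ and the type constraints are consistent, and the parameters are counted only once in $D$. The transition check is the delicate piece: the relation $\rho((ab)^{x_X}) = \rho((ab)^{p + \ell'_X})$ for $x_X \geq 2p$ must hold simultaneously with $\rho(X) = \rho_s \rho_X$ and with $\rho(\inv X) = \ov{\rho(X)}$, and this is precisely what the guessed $\rho_X$ and the $\PSPACE$ verification of $D$ (invoking $\NP$-hardness of Diophantine feasibility from Proposition~\ref{prop:lDs}) enforce. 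Finally, the weight inequality $\weight(\mysolution', \morphism') < \weight(\mysolution, \morphism)$ is routine: each popped maximal $ab$-factor of length $\geq 2$ is replaced on the equation side by a single letter and removed from each $\mysolution'(X)$, so $\sum_X |UV|_X \abs{\morphism'(\mysolution'(X))}$ strictly decreases whenever any nontrivial compression has occurred, while $\morphism'(\mysolution'(U)) = \morphism(\sol U)$ is preserved by the construction of $\morphism'$, so the simplification step of Lemma~\ref{lem: simplify the solution} does not alter the weight.
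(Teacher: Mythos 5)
Your overall plan matches the paper's: reconstruct the nondeterministic choices from the given solution, verify that each added Diophantine constraint is satisfied, decompose through an intermediate equation (popped but not yet compressed), and then appeal to the non‑crossing compression lemma. However, there is a genuine gap at the very point where the decomposition pays off. You assert that ``$\mysolution'(U')$ is obtained from $\sol U$ by replacing precisely the same factors,'' and that the backward direction is ``immediate because $(U,V)$ is recovered by undoing exactly the replacements.'' Both of these rest on the fact that, once the $ab$-prefix $s_X[\ell_X]$ and $ab$-suffix $s'_X[r_X]$ have been popped from every variable, the constant $ab$ is \emph{non-crossing} in the popped equation $(U_1,V_1)$ with the stripped solution $\mysolution_1$. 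This is not automatic: you must rule out, say, an occurrence $aX$ in $(U_1,V_1)$ where $\mysolution_1(X)$ still begins with $b$, which would create a crossing $ab$-factor that neither the equation-side replacement nor the implicit solution-side replacement handles. The paper proves this using Lemma~\ref{lem: characterisation of crossing pairs} together with the maximality property of $ab$-factors from Lemma~\ref{factoverlap}: if $s_X[\ell_X]$ followed by the first letter of $\mysolution_1(X)$ still formed an $ab$-factor, then by Lemma~\ref{factoverlap} their union would be an $ab$-factor, contradicting the choice of $s_X[\ell_X]$ as the maximal $ab$-prefix of $\sol X$. Without establishing non-crossing, you cannot conclude that the replaced factors are exactly those in $S$, so neither $\mysolution'(U')=\mysolution'(V')$ nor $\phi[\mysolution']=\mysolution$ follows.

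A smaller issue: guessing the idempotent power $p$ ``as any sufficiently large power'' is too loose for the $\PSPACE$ claim; the paper needs $p$ to be of polynomially bounded binary size and to be verified, since $p$ enters the coefficients of $D$. Finally, your weight argument only treats the compression step; the case where nothing is compressed but variables are popped (so $(U,V) \ne (U',V')$ yet no $ab$-factor is replaced inside any $\mysolution'(X)$) still requires strict decrease, which in the paper comes from comparing the contribution of the popped prefix/suffix on the two sides of the weight formula. Spelling out the non-crossing argument and the weight bookkeeping would close the gaps.
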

\begin{proof}
Let us focus on a proper equation $(U, V)$.
As a first step, we shall show that indeed all maximal blocks have lengths that are arithmetic expressions in
$\{x_X, y_X \}_{X \in \variables}$, there are polynomially (in $|U| + |V|$) many such lengths
and that each of them of them is also of polynomial size.
Consider, what letters can be included in a maximal $ab$-factor.
As the equation is proper, before any popping there are $\Ocomp(|U| + |V|)$ letters in the equation.
There are at most $2(|U| + |V|)$ popped factors (two for each occurrence of a variable)
and each of the length is at most $3 + 2x_X$ (or $3 + 2y_X$).
Hence, the total sum of lengths is $\Ocomp(|U| + |V|)$ plus $2 \sum_{X \in \variables}  (x_X + y_X)$,
as claimed.
Now, every nonempty popped $s_X$ (and $s'_X$) goes into exactly one maximal $ab$-factor, so indeed the lengths
are expressions linear in $\{x_X,y_X\}_{X \in \variables}$.

We now show that if $(U, V)$ has a solution $(\mysolution, \morphism)$ then for appropriate non-deterministic
choices we transform it into $(U', V')$ with $(\mysolution', \morphism')$ and the inverse operator is in the defined family.
So consider such a solution.
As already noted, consider the non-deterministic guesses of $\algcr(U,V,ab)$ that are consistent with $(\mysolution, \morphism)$,
\ie for each variable $X$ let its $ab$-prefix and $ab$-suffix be $s_X[\ell_X]$ and $s'_X[r_X]$
(note that it may be that $s_X$ does not depend on the parameter $x_X$, or $s'_X$ on $y_X$,
it may be that one of them is $\epsilon$; additionally, when $\sol X = s_X[\ell_X]$, we take $s_X' = \epsilon$).
Let $\algcr(U,V,ab)$ guess those $s_X$ and $s'_X$.
Let also $\algcr(U,V,ab)$ remove $X$ from the equation only when this is needed,
\ie $\sol X = s_X[\ell_X]s'_X[r_X]$.

Consider the equation $(U_1, V_1)$ obtained from the equation calculated so far by $\algcr(U,V,ab)$ 
by substituting $\{\ell_X, r_X\}_{X \in \variables}$ for $\{x_X, y_X\}_{X \in \variables}$.
Then it has a solution $(\mysolution_1, \morphism)$, where $\sol X = s_X[\ell_X] \mysolution_1(X) s'_X[r_X]$.
Moreover, $\sol U = \mysolution_1(U_1)$.
This is easy to see: $s_X[\ell_X]$ and $s'_X[r_X]$ are the $ab$-prefix and $ab$-suffix of $\sol X$ (by their definition)
and we replace $X$ by $s_X[\ell_X] X s'_X[r_X]$ (or $s_X[\ell_X] s'_X[r_X]$).

Let $E_1, \ldots, E_\ell$ be the maximal $ab$-factors calculated by the $\algcr(U,V,ab)$.
Then in $(U_1, V_1)$ the maximal $ab$-factors are
$E_1[\{\ell_X, r_X\}_{X \in \variables}], \ldots, E_\ell[\{\ell_X, r_X\}_{X \in \variables}]$ and have lengths
$e_1[\{\ell_X, r_X\}_{X \in \variables}]$, \ldots, $e_\ell[\{\ell_X, r_X\}_{X \in \variables}]$:
the $s_X$ and $s'_X$ were chosen so that they are of the type of the $ab$-prefix and $ab$-suffix of \sol X
and $s_X[\ell_X]$ and $s'_X[r_X]$ are the prefix and suffix of $\sol X$.

Lastly, the $ab$ is non-crossing in $(U_1, V_1)$ in $\mysolution_1$: suppose that it is not.
As we assumed that we removed $X$ when $\sol X = s_X[\ell_X]s'_X[r_X]$ then this means that $\mysolution_1$
is non-empty and so we can apply Lemma~\ref{lem: characterisation of crossing pairs}.
As the cases listed in the lemma are symmetric, suppose that $aX$ occurs in $(U_1, V_1)$ and $\mysolution_1(X)$
begins with $b$.
If $s_X = \epsilon$ then this is a contradiction, as then $\sol X$ also begins with $b$ and so we guessed 
the $ab$-prefix of \sol X incorrectly.
Thus $s_X \neq \epsilon$ and so also $s_X[\ell_X] \neq \epsilon$.
If $s_X[\ell_X]$ consists of at least two letters then it is an $ab$-factor and it overlaps with the $ab$-factor
consisting of the last letter of $s_X[\ell_X]$ and the following $b$ and so by Lemma~\ref{factoverlap}
the $s_X[\ell_X]b$ is also an $ab$-factor, which contradicts the choice of $s_X$.
If $s_X$ is a single letter, \ie $a$, then we clearly guessed incorrectly:
$s_X[\ell_X]\mysolution_1(X)$ begins with $ab$ and so also \sol X begins with $ab$,
thus we should have popped an $ab$-factor from it and not a single $a$.
The other cases are shown in the same way.

At this moment $\algcr(U,V,ab)$ also calculates the transition $\rho_s$ as well as adds some equations to the system.
Let it make the following choices:
if $\ell_X < 2p$ then let it guess that $x_X < 2p$ and guess $\ell_X$ as the value for $x_X$.
Then $\ell_X$ satisfies the added equations $x_X < 2p$ and $x_X = \ell_X$.
We calculate the transition $\rho_s = \rho(s_X[\ell_X])$ and there is a transition $\rho_X = \rho(\mysolution_{1}(X))$
such that $\rho_s \rho_X = \rho(X)$;
we make the corresponding nondeterministic choices.
If $\ell_X \geq 2p$ then let $\algcr(U,V,ab)$ guess this. The added inequality $x_X \geq 2p$ is satisfied by $\ell_X$.
Additionally, we guess $\ell'_X = \ell_X \mod p$, then the added equations $x_X = k p + \ell'_X$ and $k \geq 1$ are satisfiable by $\ell_X$ and some $k$ (which is irrelevant later on).
Moreover, as $p$ is an idempotent power for $\rho(ab)$ (or $\rho(a)$, when $a = b$;
for the simplicity of presentation in the following we consider only the former case),
we know that $\rho(s_X[\ell_X]) = \rho(s_X[\ell'_X + p])$, so $\rho_s = \rho(s_x[\ell_X])$ and it is also correctly calculated.
Hence we can guess $\rho_X$ to be the transition for $\rho(\mysolution_1(X))$ and then
$\rho(X) = \rho_s \rho_X$.

Consider the equations and inequalities on $e_1, \ldots, e_k$ added to $D$.
An equality $e_i = e_j$ is added if and only if $E_i[\{\ell_X, r_X\}_{X \in \variables}] = E_j[\{\ell_X, r_X\}_{X \in \variables}]$
(which implies that $e_i[\{\ell_X, r_X\}_{X \in \variables}] = e_j[\{\ell_X, r_X\}_{X \in \variables}]$)
and we consider the choices in which inequality $e_i < e_j$
is added only when the corresponding blocks are of the same type and
$e_i[\{\ell_X, r_X\}_{X \in \variables}] < e_j[\{\ell_X, r_X\}_{X \in \variables}]$.
Thus $\{\ell_X, r_X\}_{X \in \variables}$ satisfy those equations and inequalities as well.

Let us now investigate the replacement of $ab$-factors by $\algcr(U,V,ab)$.
Recall that we take the non-deterministic choices in which it assigns $E_i$ and $E_j$
into the same group if and only if $E_i[\{\ell_X, r_X\}_{X \in \variables}] = E_j[\{\ell_X, r_X\}_{X \in \variables}]$
and they represent factors of the same type. Then the corresponding $ab$-factors in $(U_1, V_1)$ are equal.
Thus the action of $\algcr(U,V,ab)$ are equivalent to $\algncr(U_1,V_1,ab)$ (up to naming of the new letters),
recall that we already shown that $ab$ is non-crossing in $\mysolution_1$.
Lemma~\ref{lem: compression noncrossing} guarantees that when $ab$ is non-crossing in $\mysolution$
then $\algncr(U_1,V_1,ab)$ transforms the solution $\mysolution$ and the inverse operator replaces letters
$c_{e_i}$ with the corresponding blocks of length
$e_i[\{\ell_X, r_X\}_{X \in \variables}]$.
So let $(U_1, V_1)$ with $(\mysolution_1, \morphism_1)$ be transformed to $(U', V')$ with $(\mysolution', \morphism')$,
as guaranteed Lemma~\ref{lem: compression noncrossing}.
By the same lemma we know what is the inverse operator that transforms $(U', V')$ with $(\mysolution', \morphism')$
to $(U_1, V_1)$ with $(\mysolution_1, \morphism_1)$.
From previous considerations we also know what is the inverse operator that transforms $(U_1, V_1)$ with $(\mysolution_1, \morphism_1)$
to $(U, V)$ with $(\mysolution, \morphism)$.
It is easy to see that their composition is exactly
$\phi_{\{\ell_X, r_X\}_{X \in \variables}}$ from $\Phi_{D, \{s_X, s'_X\}_{X \in \variables}, E_1, \ldots, E_k}$.
As $\{\ell_X, r_X\}_{X \in \variables}$ is a solution of $D$, this shows the the appropriate inverse operator indeed is in
$\Phi_{D, \{s_X, s'_X \}_{X \in \variables}, E_1, \ldots, E_k}$.

Concerning the weight, note that Lemma~\ref{lem: compression noncrossing}
shows that $\weight(\mysolution',\morphism') \leq \weight(\mysolution_1,\morphism_1)$
and the inequality is strict if $(U', V') \neq (U_1, V_1)$.
Similarly, since $\mysolution_1(X) = s_X[\ell_X]\sol X s'_X[r_X]$ and each $X$ was replaced with $s_X[\ell_X] X s'_X[r_X]$,
each popped $s_X[\ell_X]$ introduces $|s_X[\ell_X]|$ to $\weight(\mysolution',\morphism')$,
while in $(\mysolution_1, \morphism_1)$ it introduced $2 |\morphism_1(s_X[\ell_X])|$,
the same applies to $s_X'[r_X]$.
Thus $\weight(\mysolution_1,\morphism_1) \leq \weight(\mysolution,\morphism)$ and if any of $s_X, s_X'$ is non-empty,
the inequality is strict.
In the end, $\weight(\mysolution',\morphism') \leq \weight(\mysolution,\morphism)$
and the equality happens only when no factor was replaced and nothing was popped, \ie when $(U, V) = (U', V')$, as claimed.

We now move to the next part of the proof.
Assume that $(U, V)$ is turned into the equation $(U', V')$ that has a solution $(\mysolution', \morphism')$
and system $D$ was created on the way;
let also $s_X$ and $s'_X$ be popped to the left and right from $X$ (any of those may be $\epsilon$),
finally, let blocks from partition parts $\mathcal E_1$, $\mathcal E_2$, \ldots, $\mathcal E_k$ be replaced with letters $c_{e_1}$, $c_{e_2}$, \ldots, $c_{e_k}$.
We are to show that for any operator $\phi \in \Phi_{D, \{s_X, s'_X \}_{X \in \variables}, E_1, \ldots, E_k}$
the $(\phi[\mysolution], \morphism )$ is a solution of $(U, V)$, for any homomorphism $\morphism$ for $\phi[\mysolution](U)$
compatible with $\rho$ (and that there is such a homomorphism $\morphism$).

First observe that $\phi$ corresponds to some solution $\{\ell_X,r_X\}_{X \in \variables}$ of $D$.

Consider the equation obtained by $\algcr(U,V,ab)$ after popping letters but before replacement of $ab$-factors,
\ie the one using parameters $\{ x_X, y_X\}_{X \in \variables}$.
Consider the equation obtained by substituting $\{ \ell_X, r_X\}_{X \in \variables}$ for those parameters,
\ie each $s_X$ is replaced with $s_X[\ell_X]$ and each $s'_X$ by $s'_X[r_X]$.
Denote this equation by $(U_1, V_1)$.
If $(\mysolution_1, \morphism)$ is a solution of $(U_1, V_1)$ then $(\mysolution, \morphism)$
is a solution of $(U, V)$, where $\sol X$ is obtained from $\mysolution(X)$
by appending $s_X[\ell_X]$ and prepending $s'_X[r_X]$ to $\mysolution_1(X)$:
 \begin{itemize}
	 \item Since $\sol X = s_X[\ell_X]\mysolution_1(X) s'_X(r_X)$ and $(U_1, V_1)$ was obtained by replacing $X$ with $s_X[\ell_X] X s'_X[r_X]$
	(or $s_X[\ell_X] s'_X[r_X]$ and then $\mysolution_1(X) = \epsilon$), we get that $\sol U = \mysolution_1(U_1)$ and similarly 
	$\sol V = \mysolution_1(V_1)$.
	\item For the constraints: $\rho(\mysolution_1(X_1)) = \rho_X$ calculated by $\algcr(U,V,ab)$
	and satisfying the condition $\rho(X) = \rho_{s_X} \rho_X \rho_{s'_X}$.
	As $\{\ell_X, r_X \}$ is a solution of $D$ then $\algcr(U,V,ab)$ correctly calculated
	$\rho_{s_X} = \rho(s_X[\ell_X])$ and $\rho_{s_X'} = \rho(s'_X[r_X])$.
	\item For the involution, note that we assume that $\inv{\mysolution_1(X)} = \mysolution_1(\inv X)$
	and $\inv {s_X} = s'_{\inv X}$ and so we get that $\inv{\sol X} = \sol{\inv X}$.
 \end{itemize}

Note that the inverse operator transforming the solutions of $(U_1, V_1)$ to solutions of $(U, V)$
is a `second part' of the inverse operator $\phi$.

Now consider an equation obtained from $(U', V')$ by replacing every letter $a_{e_i}$ by an $ab$-factor
$E_i[\{\ell_X,r_X\}_{X \in \variables}]$.
Change the solution of $\mysolution'$ in the same way, obtaining $\mysolution_1$.
Concerning the constraints, note that we define $\rho(c_{e_i})$ so that $\rho(c_{e_i}) = \rho(e_i[\{ \ell_X, r_X\}_{X \in \variables}])$,
assuming that we properly calculate the latter.
As $\{ \ell_X, r_X\}_{X \in \variables}$ is a solution of $D$ and we calculated the idempotent power $p$ correctly,
this is the case: when $\ell_X < 2p$ we make the calculations explicit and otherwise 
we have that $\rho(ab)^{\ell_X} = \rho(ab)^{p + \ell_X'}$, where $\ell_X \mod p = \ell_X'$, which is calculated by the algorithm.
Note that this change is the first part of the operation performed by $\phi$.

Observe that $\mysolution_1(U_1) = \mysolution_1(V_1)$, as each letter $a_{e_i}$ was replaced in the same way in the
equation and in the solution, so $\mysolution_1$ is a solution of $(U_1, V_1)$, as claimed.
As equation is proper, we take any homomorphism $\morphism_1$,
so it is left to show that at least one such a homomorphism exists:
for letters that are present in $(U, V)$ note that a compatible $h$ can be defined for them, as we assumed that $(U, V)$ is proper.
Letters in \sol U that are not in $(U, V)$ were taken from $(U' ,V')$ and so a compatible $h'$ for such letters is known to exist.
\qed
\end{proof}

\subsubsection*{Main transformation.}
The main procedure $\algonephase(U, V)$ first lists all $ab$s that are either factors in $(U, V)$ or are crossing in \mysolution{}
(note that the latter need to be guessed).
While any of them is non-crossing and present in the equation, we compress this factor (and remove it from the list).
When none factor in the list is non-crossing, we guess the crossing $ab$s
(note that we always include all the remaining factors in the list).
Then for each of those factors we compress it using \algcr. 
\begin{algorithm}[H]
  \caption{$\algonephase(U, V)$ \label{alg: onepahsepopall}}
  \begin{algorithmic}[1]
		\State $P \gets $ list of explicit or crossing $ab$'s in $U$, $V$ \Comment{At most $|U| + |V| + 4n$}
		\While{there is a non-crossing $ab \in P$ such that $ab$ is a factor in $U$ or $V$}
				\State $\algncr(U, V, ab)$ 
				\State remove $ab$ from $P$
		\EndWhile
		\State $P' \gets $ crossing $ab$'s \Comment{Done by guessing first and last constants of each \sol X, $|P'| \leq 4n$}
		\par
		\Comment{$P'$ contains all factors from $P$ still occurring in the equation}
		\For{$ab \in P'$}
			\State $\algcr(U, V,ab)$
		\EndFor
		\State \Return $(U, V)$
 \end{algorithmic}
\end{algorithm}

The crucial property of \algonephase{} is that it uses equation of bounded size,
as stated in the following lemma. Note that this bound \emph{does not} depend on the non-deterministic choices of \algonephase.

\begin{lemma}
	\label{lem: size bound}
Suppose that $(U,V)$ is a strictly proper equation.
Then during \algonephase{} the $(U,V)$ is a proper equation and after it is strictly proper.
\end{lemma}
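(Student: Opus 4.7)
The plan is to track the three conditions defining (strict) properness separately through the run of \algonephase. The existence of a compatible homomorphism \morphism{} is automatic: every fresh constant $c_s$ introduced (either in \algncr or in the replacement phase of \algcr) is given $\rho(c_s) = \rho(s)$, and one can extend \morphism{} by $\morphism(c_s) = \morphism(s) \in A^+$. The number of variable occurrences in $UV$ is monotonically non-increasing, since \algncr does not touch variables and \algcr only prepends/appends symbolic blocks $s_X$, $s'_X$ to occurrences (never duplicating them) and may only delete an occurrence when $\sol X$ is guessed to become empty. Thus the bound of at most $n$ variable occurrences is preserved at every intermediate step and at the end.

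Next I would control the number of constants. The first while-loop of \algonephase invokes only \algncr, which by Lemma~\ref{lem: compression noncrossing} replaces every maximal $ab$-factor in the equation by a single letter and therefore can only decrease the constant count. So after the loop we still have at most $cn^2$ constants, which is within both properness bounds. For the subsequent for-loop over crossing $ab$'s, the main work is to analyse one \algcr call. Such a call first pops $s_X$ and $s'_X$ from each occurrence of $X$; by Definition~\ref{definitionabblock} each of these has a bounded symbolic description of the form $(ab)^{x_X}$, $a^{x_X}$, $b(ab)^{x_X}$, $(ab)^{x_X}b$, or similar, which we count as $O(1)$ constants. Since there are at most $n$ variable occurrences, popping contributes at most $O(n)$ constants. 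The subsequent merging of maximal $ab$-factors into single letters only decreases the count.

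Combining this with the bound $|P'| \leq 4n$ from Lemma~\ref{lem: different crossing}, the intermediate constant count is bounded by $cn^2 + O(n^2)$, which stays within $2cn^2$ for $c$ large enough (e.g. $c = 27$), so the equation is proper throughout. For the final strictly proper bound, the key structural observation is that after the entire algorithm terminates, no $ab \in P \cup P'$ has a maximal $ab$-factor of length $\geq 2$ in the equation: if $ab$ was non-crossing it was compressed in the first loop, while if $ab$ was crossing the pops turned all residual $ab$-factors into maximal ones of the equation, which were then collapsed to single constants. Hence between any two consecutive variable occurrences (and at the endpoints), only boundedly many constants can survive, and combined with the at most $n$ occurrences this yields the $cn^2$ bound.

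The main obstacle is the last step: one needs a tight accounting of how many constants can remain between consecutive variable occurrences after all crossing $ab$'s are eliminated, and then to verify that the concrete constant $c = 27$ absorbs both this residual contribution and the $O(n^2)$ intermediate increase coming from pops. I would handle this by a case analysis over the types of $ab$-blocks in Definition~\ref{definitionabblock}, bounding per boundary the number of letters that can survive compression and then summing over the $O(n)$ boundaries created by variable occurrences. The rest of the argument is essentially bookkeeping based on Lemmas~\ref{lem: compression noncrossing} and~\ref{lem: compression crossing}.
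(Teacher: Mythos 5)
Your handling of the compatible homomorphism and of the variable-occurrence count is fine and matches the paper's implicit reasoning; your bound for the \emph{intermediate} equations (initial $cn^2$ constants plus $O(n)$ popped constants per crossing pair, times $O(n)$ crossing pairs, staying under $2cn^2$) is also essentially the paper's calculation.

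The genuine gap is in the final step, and the approach you propose for it cannot work. You claim that ``between any two consecutive variable occurrences (and at the endpoints), only boundedly many constants can survive,'' and that summing this over the $O(n)$ gaps yields the $cn^2$ bound. That per-gap absolute bound is false: a strictly proper equation can consist of a single gap holding all $27n^2$ initial constants (say all distinct letters $a_1 a_2 \cdots a_{27n^2}$), and each non-crossing compression step replaces a length-$2$ factor with a single fresh constant, which halves the count at best. The surviving gap still has $\Theta(n^2)$ constants, not $O(1)$ or $O(n)$. The correct argument is \emph{multiplicative}, not additive per gap: among the $k \leq 27n^2$ \emph{initially present} constants, at least $\tfrac{k - n - 2}{3}$ are absorbed into replaced factors (the key combinatorial fact being that no two consecutive surviving original constants can both avoid being part of a replaced factor, since their pair is in $P$ and would be compressed in either the non-crossing or crossing phase), so at most roughly $\tfrac{2k}{3}$ descendants of original constants remain. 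Adding the at most $8n^2$ fresh constants from popping gives $\tfrac{2}{3}\cdot 27n^2 + 8n^2 + O(n) \leq 27n^2$, which is where the choice $c = 27$ actually comes from. Your related structural claim — that no $ab \in P \cup P'$ has a maximal $ab$-factor of length $\geq 2$ at the end — is also not quite right as stated, since variable removals inside \algcr{} can recreate $ab$ adjacencies after the $ab$ phase has already run; the paper avoids this by phrasing the argument in terms of initially present constants and whether they ever became part of \emph{some} replaced factor, rather than about the final equation's $ab$-factor structure. You would need to replace your per-gap accounting by this global ``one-in-three original constants disappears'' argument to close the proof.
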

\begin{proof}
Consider, how many constants are popped into the equation during \algonephase.
For a fixed $ab$, \algcr{} may introduce long $ab$-blocks at sides of each variable, but then they are immediately replaced with one constant,
so we can count them as one constant (and in the meantime each such popped prefix and suffix is represented by at most four constants). 
Thus, $2n$ constants are popped in this way.
There are at most $4n$ crossing factors, see Lemma~\ref{lem: different crossing}, so in total $8n^2$ constants are introduced to the equation.

Consider constants initially present in the equation.
We show that for two such consecutive constants at least one is in a factor replaced during \algonephase.
Suppose otherwise and let $ab$ be those consecutive constants, clearly this factor is in $P$ computed by \algonephase.
If \algonephase{} compressed $ab$ during the compression of non-crossing factors,
then as we assumed that none of those $a$, $b$ was replaced,
this factor $ab$ was present and so it was compressed, contradiction.
So $ab$ was not compressed during the compression of non-crossing factors.
As it still was a factor when we began compression of crossing factors, it was considered as a factor to be replaced
and so we either replaced it or one of its constants was replaced,
which shows the claim.

\begin{figure}
	\centering
		\includegraphics{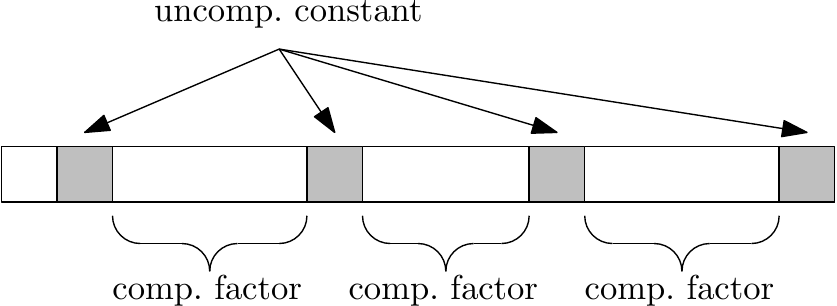}
	\caption{Each uncompressed letter is followed by a compressed factor}
	\label{fig:letters}
\end{figure}

Now, consider any sequence, say of length $k$, of constants initially present in the equation, see Fig.~\ref{fig:letters}.
We estimate, how many of its constants were removed during \algonephase.
Each constant (except perhaps the last one) that was not part of a replaced factor
can be associated with the replaced factor to its right.
As a factor is replaced with a single constant, this means that at least $\frac{k-1}{3}$ initially present constants were removed
(note that if a factor includes a constant popped from a variable then it looses all its initial constants,
as we count the constant that replaced it as the one popped from a variable).

Let $k_1$, $k_2$, \ldots, $k_\ell$ be the (maximal) sequences of constants initially present in the equation,
define $k = \sum_{i=1}^\ell k_i$ and observe that as there are two sides of the equation and each of (at most $n$)
variables terminates a sequence of constants, we have $\ell \leq n+2$.
Then at least
\begin{align*}
\sum_{i=1}^\ell \frac{k_i-1}{3}
	&=
\frac{k - \ell}{3}\\
	&\geq
\frac{k - n - 2}{3}
\end{align*}
constants initially present in the equation were removed.
On the other hand, as there were $8n^2$ new constants introduced, we conclude that the number of constants in the equation is at most
\begin{equation*}
k - \frac{k - n - 2}{3} + 8n^2 = \frac{2k}{3} + \frac{24n^2}{3} + \frac{n}{3} + \frac{2}{3} .
\end{equation*}
As initially the the equation had at most $27n^2$ constants, this yields that afterwards again it has at most $27n^2$ of them.

Lastly, note that every equation on the way has at most $35n^2$ constants: initially there are $27n^2$ of them and we pop at most $8n^2$ in total.
\qed
\end{proof}

\subsection{Proof of Lemma~\ref{lem: main} and generation of the graphs representation of all solutions}
We are now ready to give the proof of Lemma~\ref{lem: main}.
To this end we first reformulate it in the language of transformation of solutions.

\begin{lemma}[A modernised statement of Lemma~\ref{lem: main}]
\label{lem: main2}
Suppose that $(U_0,V_0)$ is a strictly proper equation with $|U_0|, |V_0| > 0$ and a simple solution $(\mysolution_0,\morphism_0)$.
Consider a run of \algonephase{} on $(U_0, V_0)$.
Then for some nondeterministic choices the obtained sequence of equations
$(U_0, V_0)$, $(U_1, V_1)$, \ldots, $(U_k, V_k)$ with corresponding families of inverse operators $\Phi_1$, $\Phi_2$, \ldots, $\Phi_k$
have simple solutions $(\mysolution_0,\morphism_0)$, $(\mysolution_1,\morphism_1)$, \ldots, $(\mysolution_k,\morphism_k)$
such that
\begin{itemize}
	\item $0 < k = \Ocomp(n^2)$;
	\item all $(U_0, V_0)$, $(U_1, V_1)$, \ldots, $(U_k, V_k)$ are proper and $(U_k,V_k)$ is strictly proper;
	\item $(U_i, V_i)$ with $(\mysolution_i, \morphism_i)$ is transformed to $(U_{i+1}, V_{i+1})$ with $(\mysolution_{i+1}', \morphism_{i+1})$,
	$\Phi_{i+1}$ is the corresponding family of inverse operators and $(\mysolution_{i+1},\morphism_{i+1})$ is a simplification of
	$(\mysolution_{i+1}', \morphism_{i+1})$.
\end{itemize}
\end{lemma}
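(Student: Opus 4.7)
The plan is to run $\algonephase(U_0,V_0)$ once with non-deterministic choices synchronised with the given solution, and read off the sequence $(U_0,V_0),(U_1,V_1),\dots,(U_k,V_k)$ as the successive snapshots of the equation between individual calls to \algncr{} or \algcr. Each such call is already a transformation in the sense of the definition, and what remains is to verify the global properties claimed in the lemma.

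First I would bound $k$. By construction \algonephase{} performs at most $|P|+|P'|$ compression calls; Lemma~\ref{lem: different crossing} gives $|P'|\le 4n$, and strict properness of $(U_0,V_0)$ together with $|P|\le|U_0|+|V_0|+4n$ yields $k=\Ocomp(n^2)$. Positivity $k>0$ comes from the hypothesis $|U_0|,|V_0|>0$ combined with the fact that at least one pair in $P\cup P'$ will trigger an actual compression in any non-base-case instance (the only case in which the graph construction invokes Lemma~\ref{lem: main2}). Next I would cite Lemma~\ref{lem: size bound} directly to conclude that every intermediate $(U_i,V_i)$ is proper and that $(U_k,V_k)$ is strictly proper, which discharges the two size-related bullets.

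The core of the argument is the induction that propagates the solution and produces the inverse families $\Phi_{i+1}$. Assuming $(\mysolution_i,\morphism_i)$ is a simple solution of $(U_i,V_i)$, the $(i+1)$-th call is on some pair $ab$. If it is $\algncr(U_i,V_i,ab)$, then by the selection in the first loop $ab$ is non-crossing in $\mysolution_i$, and Lemma~\ref{lem: compression noncrossing} supplies the transformation together with the singleton inverse family $\Phi_{i+1}$. If it is $\algcr(U_i,V_i,ab)$, I pick the non-deterministic choices of \algcr{} (idempotent power $p$; popped $ab$-prefixes/suffixes $s_X,s_X'$; partition of the maximal $ab$-factors $E_1,\dots,E_\ell$; transitions $\rho_X$) to match $(\mysolution_i,\morphism_i)$; Lemma~\ref{lem: compression crossing} then transforms the equation and the solution into some $(U_{i+1},V_{i+1})$ with a (possibly non-simple) solution $(\mysolution_{i+1}',\morphism_{i+1})$ and delivers the inverse family $\Phi_{D,\{s_X,s_X'\}_{X\in\variables},E_1,\dots,E_\ell}$ of the form stipulated in Section~\ref{susubsec: operators}. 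In either branch I define $(\mysolution_{i+1},\morphism_{i+1})$ to be the simplification of $(\mysolution_{i+1}',\morphism_{i+1})$; Lemma~\ref{lem: simplify the solution} confirms that this is a solution of $(U_{i+1},V_{i+1})$ of equal weight and with the same $\morphism_{i+1}$-image, and it is simple by construction, so the induction continues.

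The main obstacle I anticipate is bookkeeping across the simplification step: after replacing out-of-alphabet constants inside the $\sol X$, I must verify that the next \algncr{} or \algcr{} call still operates correctly, i.e.\ that the non-crossing status of each pair processed by the first loop and the guessed prefix/suffix types used by the second loop remain valid under the simplified solution. This is handled by Lemma~\ref{lem: characterisation of crossing pairs}: crossing is determined purely by the adjacency structure of $(U_i,V_i)$ and by the first and last constants of each $\sol X$, which simplification does not alter provided the alphabet was properly extended at the previous step. Once this consistency is observed, everything else is routine composition of the guarantees already supplied by Lemmas~\ref{lem: compression noncrossing}, \ref{lem: compression crossing}, \ref{lem: simplify the solution} and~\ref{lem: size bound}.
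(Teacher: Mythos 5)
Your proposal follows essentially the same route as the paper's own proof: run \algonephase{} once with non-deterministic choices synchronised with the given solution, invoke Lemma~\ref{lem: compression noncrossing} and Lemma~\ref{lem: compression crossing} for the individual compression calls, cite Lemma~\ref{lem: size bound} for the properness bounds, and use Lemma~\ref{lem: simplify the solution} to pass from the raw $\mysolution_{i+1}'$ to the simple $\mysolution_{i+1}$. The $k=\Ocomp(n^2)$ count via $|P|+|P'|$ and the $k>0$ argument also match the paper.

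One claim in your final paragraph is not correct, although it happens to be harmless. You assert that simplification does not alter the first and last constants of $\sol X$. That is false in general: if $\mysolution_{i+1}'(X)$ begins with a constant $d\notin A$ all of whose occurrences in the equation were compressed away (so $d\notin \letters_{i+1}$), simplification replaces that leading $d$ by $\morphism_{i+1}(d)$, a word of length at least two, and the first constant of $\mysolution_{i+1}(X)$ changes. No genuine gap results, however, because you do not in fact need any invariant about first/last constants surviving simplification. The hypotheses of Lemmas~\ref{lem: compression noncrossing} and~\ref{lem: compression crossing} are stated with respect to whatever solution is currently held, and \algonephase{} guesses the crossing status (respectively the popped $ab$-prefix/suffix types) freshly for $\mysolution_i$ at the time of each call; the while loop's dynamic test ``is $ab$ non-crossing and still a factor'' already accounts for a pair becoming crossing after earlier compressions and simplifications. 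So the consistency you worry about is maintained by construction, not by the (wrong) preservation claim you invoke; the paper's proof simply lets each guess match the current $\mysolution_i$ and never needs the stronger statement.
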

\begin{proof}
Concerning the nondeterministic choices: firstly, let \algonephase{} correctly guess the set of crossing factors.
Then for each considered factor $ab$ from $P$ (say we have equation $(U_i, V_i)$ with the solution $(\mysolution_i, \morphism_i)$)
let it correctly guess, whether it is crossing or not in $\mysolution_i$.
Then by Lemma~\ref{lem: compression noncrossing},
$\algncr(U,V,ab)$ transforms  $(U_i, V_i)$ with $(\mysolution_i, \morphism_i)$ to
$(U_{i+1}, V_{i+1})$ with $(\mysolution_{i+1}', \morphism_{i+1})$, set $\mysolution_{i+1}$ as the simplification of $\mysolution_{i+1}'$.
The same lemma guarantees that $\weight(\mysolution_i,\morphism_i) > \weight(\mysolution_{i+1}',\morphism_{i+1})$,
if we replaced at least one factor in $(U_i, V_i)$.

Now, let \algonephase{} correctly guess that no pair in $P$ is non-crossing
(for the current equation $(U_\ell, V_\ell)$ with $(\mysolution_\ell, \morphism_\ell)$).
Let it also correctly guess the set of crossing factors.
Lastly, let during each call for $\algcr(U,V,ab)$ let it make the correct non-deterministic choices,
in the sense of Lemma~\ref{lem: compression crossing}.
Then this Lemma guarantees that the equation $(U_i, V_i)$ with $(\mysolution_i, \morphism_i)$ is transformed by $\algcr(U,V,ab)$
into $(U_{i+1}, V_{i+1})$ with $(\mysolution_{i+1}', \morphism_{i+1})$, set $\mysolution_{i+1}$ as the simplification of $\mysolution_{i+1}$.
The same lemma guarantees that $\weight(\mysolution_i,\morphism_i) \geq \weight(\mysolution_{i+1}',\morphism_{i+1})$
and the inequality is strict if $(U_i, V_i) \neq (U_{i+1}, V_{i+1}$.

By Lemma~\ref{lem: size bound} we know that the equation computed during $\algonephase(U, V)$
are proper and the last one of them is strictly proper, so this shows the bound on the size of $(U_i, V_i)$ and on $(U_k, V_k)$.

Concerning size of $k$: one equation is created for one compressed pair. There are $\Ocomp(n^2)$ such factors in $P$
and $\Ocomp(n)$ in $P'$.
To show that $k > 1$, it is enough to show that at least one pair is compressed.
Assume otherwise.
As at least one of $U$, $V$ is of length $2$ or more, there is some $ab \in P$.
If $ab$ is compressed as an element of $P$, we are done, otherwise it goes to $P'$.
Without loss of generality, let $ab$ be the first considered pair from $P'$.
Since it is in $P'$, it is crossing and so we fist uncross and then comprss it.
\qed
\end{proof}

\section{Running time for satisfiability}
For word equations over the free monoid (without the regular constraints)
the known algorithms~\cite{pr98icalp,wordequations} (non-deterministically)
verify the satisfiability in time polynomial in $n$ and $\log N$, where $N$ is the length of the length-minimal solution.
In particular, it is the common belief that $N$ is at most exponential in $n$,
and should this be so, those algorithms would yield that WordEquation is in \NP.
While our algorithm works in polynomial space, so far a similar bound on its running time is not known.

When no constraints are allowed in the equations the proof for the free monoid follows the lines similar to Lemma~\ref{lem: size bound}:
for a length-minimal solution \mysolution{} when $s$ is a factor in \sol U
then either $s$ is a factor of $U$ or it has a crossing occurrence in \mysolution{}
(as otherwise we could remove all factors $s$ from the solution, obtaining a shorter solution, which contradicts the length-minimality).
Thus $\algonephase(U, V)$ tries to compress each two-letter factor in \sol U
and so the same argument as in Lemma~\ref{lem: size bound} yields that the length of the length-minimal solution decreases after
$\algonephase(U, V)$ by a constant factor, so there are only $\log N$ applications of $\algonephase(U, V)$.

However, the regular constraints make such an argument harder:
when we cross out a factor $s$ from \sol X, the $\rho(\sol X)$ changes, which is not allowed.
However, this can be walked around: instead of crossing $s$ out we replace it with a single constant that has the same transition as $s$.
To this end we \emph{extend} the original alphabet: we add to the original alphabet $A$ constants $a_P$ for each $P \in \rho(A^+)$,
where $\rho(A^+)$ denotes the image of $A^+$ by $\rho$, \ie $\set{\rho(w)}{w \in A^+}$.
This set can be big, so we do not store it explicitly, instead we have a subprocedure that tests whether $P \in \rho(A^+)$.

There is another technical note: as we often apply simplification
we do not really know what happens with a length-minimal solution.
However, for a tuned definition of length-minimal solutions,
which takes into the account also the weight of the solution as a secondary factor, 
each length-minimal solution in some sense cannot be simplified.

\subsection{$\rho$-closed alphabets}
We begin with the precise definition of the $\rho$-closure of the alphabet and 
then show that a word equation with constraints is satisfiable over $A$ if and only if it
is satisfiable over the $\rho$-closure of $A$.

Given a finite alphabet $A$ together with a homomorphism $\rho$ from $A$ to $\Mn$
we say that an alphabet $A$ is $\rho$-\emph{closed} if $\rho(A) = \rho(A^+)$,
\ie for each word $w \in A^+$ there exists a constant $a$ such that $\rho(w) = \rho(a)$.

Usually, an alphabet is not $\rho$-closed, however, we can naturally extend with `missing' constants:
for an alphabet $A$ define a $\rho$-\emph{closure} $\cl_\rho(A)$ of $A$:
\begin{equation*}
\cl_\rho(A) = A \cup \set{a_P}{\text{there is } w \in A^+ \text{ such that } \rho(w) = P} \enspace ,
\end{equation*}
where each $a_P$ is a fresh constant not in $A$, $\inv{a_P} = a_{P^T}$ and $a_P \neq a_{P'}$ when $P \neq P'$.
It is easy to see that $\cl_\rho(A)$ is $\rho$-closed.
Whenever clear from the context, we will drop $\rho$ in the notation and talk about closure and $\cl$.

Viewing the equation over $A$ as an equation over $\cl(A)$ does not change the satisfiability.

\begin{lemma}
\label{lem: A iff cl(A)}
Suppose that we are given a word equation $(U, V)$ with regular constraints (defined using a homomorphism $\rho$)
over a free monoid generated by $\rho$-closed alphabet $A$.
Then $(U, V)$ has a solution over $A$ if and only if it has a solution when treated as an equation over the alphabet of constants $\cl(A)$.
\end{lemma}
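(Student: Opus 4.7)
The plan is to prove the equivalence by treating each direction separately. The forward direction is immediate from the inclusion $A \sse \cl(A)$: any solution $\sigma : \OO \to A^*$ can be read as a solution $\sigma : \OO \to \cl(A)^*$ without modification, and all three conditions---the equality $\sigma(U)=\sigma(V)$, the transition constraints $\rho(\sigma(X))=\rho(X)$, and the involution constraints $\overline{\sigma(X)}=\sigma(\overline{X})$---transfer verbatim.

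For the backward direction, given a solution $\sigma : \OO \to \cl(A)^*$, my strategy is to exhibit a homomorphism $h : \cl(A)^* \to A^*$ satisfying (i) $h(a)=a$ for every $a \in A$, (ii) $\rho(h(c))=\rho(c)$ for every letter $c \in \cl(A)$, and (iii) $h(\overline{c})=\overline{h(c)}$ for every $c \in \cl(A)$. Once such an $h$ is available, setting $\sigma'(X) := h(\sigma(X))$ produces a solution over $A$: applying $h$ to $\sigma(U)=\sigma(V)$ gives $\sigma'(U)=\sigma'(V)$ in $A^*$; property (ii) extends multiplicatively so that $\rho(\sigma'(X))=\rho(\sigma(X))=\rho(X)$; and property (iii) lifts to words, giving $\overline{\sigma'(X)}=h(\overline{\sigma(X)})=h(\sigma(\overline{X}))=\sigma'(\overline{X})$.

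Constructing $h$ is where the $\rho$-closedness of $A$ enters. For each fresh letter $a_P \in \cl(A)\setminus A$ we must choose a word $w_P := h(a_P) \in A^+$ with $\rho(w_P)=P$. Since $\rho(A)=\rho(A^+)$, there is always some $a \in A$ with $\rho(a)=P$, so a candidate $w_P$ always exists. To enforce condition (iii), which in letter form reads $w_{P^T}=\overline{w_P}$, I group the new letters by the involution orbits on $\rho(A^+)$: for each orbit $\{P,P^T\}$ with $P\neq P^T$, pick $w_P$ freely and set $w_{P^T}:=\overline{w_P}$, noting that $\rho(\overline{w_P})=\overline{\rho(w_P)}=P^T$, so this is consistent with (ii).

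The delicate case, and the main obstacle, is a self-dual transition $P=P^T$, where $\overline{a_P}=a_P$ forces $h(a_P)$ to be a palindrome in $A^+$ with $\rho$-image equal to $P$. Naive candidates can fail: a single letter $a$ with $\rho(a)=P$ need not satisfy $\overline{a}=a$, and the standard palindrome $a\overline{a}$ has transition $P\cdot P^T=P^2$ rather than $P$. The hard part of the proof is therefore to produce, for every self-dual $P$ in $\rho(A^+)$, a palindromic witness $w_P$; this requires a finer use of the semigroup structure of $\rho(A^+)$ together with its involution beyond the single-letter closure property. Once such palindromes are supplied in every self-dual case, $h$ is completely defined, the recipe $\sigma' = h\circ\sigma$ produces the required $A$-solution, and the backward direction is complete.
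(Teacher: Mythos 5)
Your strategy coincides with the paper's in both directions: the forward direction is the trivial inclusion $A \subseteq \cl(A)$, and the backward direction picks for each $P \in \rho(A^+)$ a word $w_P \in A^+$ with $\rho(w_P) = P$ and replaces each occurrence of $a_P$ by $w_P$. You are also right that, for this replacement to be a homomorphism compatible with the involution (which is what one needs to preserve $\overline{\sigma(X)} = \sigma(\overline{X})$), the family must satisfy $\overline{w_P} = w_{\overline{P}}$; the paper imposes exactly this condition, writing ``choose in a way so that $\overline{w_P} = w_{P^T}$'', without further comment.

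The step you flag as ``the hard part'' is a genuine gap, and your proposal does not close it. When $P = \overline{P}$ the letter $a_P$ is a fixed point of the involution on $\cl(A)$, so $w_P$ must be a palindrome of $(A^*,\overline{\phantom{x}})$ with $\rho(w_P) = P$. Neither $\rho$-closedness of $A$ nor the mere existence of some preimage of $P$ gives this: $\rho$-closedness supplies a single letter $a$ with $\rho(a) = P$, but $\overline{a}=a$ may fail; the palindrome $a\overline{a}$ has image $P\overline{P}=P^2$, not $P$. More seriously, the $\rho$-images of palindromes over $A$ all lie in the set $\set{M\overline{M}}{M\in\rho(A^*)} \cup \set{M\rho(c)\overline{M}}{M\in\rho(A^*),\ c\in A,\ \overline{c}=c}$, which need not contain every self-dual element of $\rho(A^+)$ (already in a small abelian example such as $\Z/4\Z$ with $\overline{x}=-x$ one sees that every palindrome maps to $0$ while $2$ is self-dual). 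So this is a real obstruction that requires an argument, or a modification of the construction of $\cl(A)$ (for instance, splitting each self-dual $P$ into two fresh letters $a_P, a'_P$ with $\overline{a_P}=a'_P$ so that the closure stays fixed-point-free), rather than a routine verification. Your proposal is therefore incomplete in precisely the spot where the paper's own one-line proof is also silent; you have correctly located the delicate point, but it still needs to be resolved before the backward direction is proved.
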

Note that the set of all solution of the equation is of course different for $A$ and $\cl(A)$,
though in this section we are interested only in the satisfiability.
\begin{proof}
If $\mysolution$ is a solution over $A$ then it is of course a solution over $\cl(A)$.

On the other hand, when $\mysolution$ is a solution over $\cl(A)$ then we can create a solution over $A$:
for each $P \in \rho(A^+)$ choose a word $w_P$ such that $\rho(w_P) = P$,
moreover choose in a way so that $\inv{w_P} = w_{P^T}$, and replace every $a_P$ in \sol X by $w_P$.
Since constants $a_P$ do not occur in the equation, it is routine to check that the obtained substitution is a solution
(and since $\rho(w_P) = \rho(a_P)$, that all constraints are satisfied).
\qed
\end{proof}

\subsubsection*{Oracles for $\cl(A)$}
Note that the size of $\cl(A)$ may be much larger than $|A|$ (in fact, exponential in the input size).
Thus we cannot store it explicitly, instead, whenever a constant from $\cl(A) \setminus A$ is introduced to the instance,
we verify, whether it is indeed in $\cl(A)$, \ie whether the corresponding transition matrix $P$ is in $\rho(A^+)$.
In general, such check can be performed in \PSPACE{} (and in fact it is \PSPACE-complete in some cases),
but it can be performed more efficiently, when we know an upper-bound on $|\cl(A)|$.

\begin{lemma}
\label{lem: is P in closure}
It can be verified in \PSPACE, whether $P \in \rho(A)$.
Alternatively, this can  be verified in $\poly(|\rho(A^+)|, n)$ time.
\end{lemma}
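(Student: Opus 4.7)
The plan is to observe that testing $P \in \rho(A^+)$ amounts to a reachability problem in the finite monoid $\Mn$ generated (as a semigroup) by $\rho(A)$, and to attack the two bounds independently: the PSPACE bound by a nondeterministic guess‑and‑verify walk through $\Mn$, and the polynomial‑time bound by an explicit breadth‑first enumeration of $\rho(A^+)$.

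For the PSPACE bound I would use Savitch's theorem and give an NPSPACE algorithm. The algorithm maintains a single matrix $M \in \Mn$, initialised to some $\rho(a_0)$ for a nondeterministically chosen $a_0 \in A$, together with a binary counter $t$ bounded by $|\Mn| \le 2^{4m^2}$. At each step it guesses a constant $a \in A$, updates $M \leftarrow M \cdot \rho(a)$, increments $t$, and accepts as soon as $M = P$. Rejection occurs when $t$ exceeds the bound. Correctness rests on the standard pumping observation: if $P = \rho(w)$ for some $w \in A^+$, then by removing cycles along the sequence of prefix transitions we may assume $|w| \le |\Mn|$. Both $M$ and $t$ use $O(m^2)$ bits, and each matrix multiplication is in $\poly(m)$, so the whole procedure runs in polynomial space; Savitch converts it into a deterministic PSPACE procedure.

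For the $\poly(|\rho(A^+)|, n)$ bound I would compute $\rho(A^+)$ explicitly by closure. Initialise $S \leftarrow \rho(A)$ and iterate: for every $M \in S$ and every $a \in A$, add $M \cdot \rho(a)$ to $S$; stop when a full pass adds nothing. This is exactly a breadth‑first exploration of the right Cayley graph of $\rho(A^+)$ with generators $\rho(A)$, so it terminates after at most $|\rho(A^+)|$ additions. Each pass costs $O(|S|\cdot |A|)$ matrix multiplications, each in $\poly(m)$, hence the total running time is $\poly(|\rho(A^+)|, |A|, m) = \poly(|\rho(A^+)|, n)$. Once $S = \rho(A^+)$ has been built, testing $P \in S$ is a single lookup.

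The only delicate point is the length bound $|\Mn|$ used in the PSPACE procedure: one has to verify that the counter really fits in polynomial space, which follows from $|\Mn| \le 2^{4m^2}$ since elements of $\Mn$ are $2m \times 2m$ Boolean matrices of the prescribed block form. Everything else is routine semigroup reachability and matrix arithmetic, so I do not expect a serious obstacle.
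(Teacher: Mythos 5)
Your proposal is correct and follows essentially the same route as the paper: bound the length of a shortest witness word by $|\Mn|\le 2^{4m^2}$ via the pumping observation, then nondeterministically guess it letter by letter (maintaining only the current product in $\Mn$), and obtain the polynomial bound by explicitly enumerating $\rho(A^+)$ by closure. Your write-up is merely more explicit about the counter, the incremental multiplication, and the appeal to Savitch's theorem, but the underlying argument is identical.
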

\begin{proof}
The proof is standard.

Let $w_P = a_1 a_2 \cdots a_k$ be the shortest (non-empty) word such that $\rho(w_P) = P$.
As $\Mn$ has at most $2^{4m^2}$ elements, we have that $k \leq 2^{4m^2}$:
if it were longer then $\rho(a_1\cdots a_i) = \rho(a_1\cdots a_j)$ for some $i < j$ and thus
$\rho(a_1 a_2 \cdots a_k) = \rho(a_1 a_2 \cdots a_i a_{j+1} \cdots a_k)$, which cannot happen, as this word is shorter than $w_P$.

Thus in \PSPACE{} we can non-deterministically guess the constants $a_1$, $a_2$, \ldots, $a_k$ and verify that indeed $\rho(a_1 a_2 \cdots a_k) = P$.
Alternatively, we can deterministically list all elements of $\rho(A^+)$ in $\poly(|\rho(A^+)|, n)$ time.
\qed
\end{proof}

\subsubsection{Length-minimal solution}
We now give a proper definition of a length minimal solution:
First, we compare the solutions $(\mysolution_1,\morphism_1)$ and $(\mysolution_2,\morphism_2)$
by $|\mysolution_1(U)|$ and $|\mysolution_2(U)|$ and if those are equal,
by $\weight(\mysolution_1,\morphism_1)$ and $\weight(\mysolution_2,\morphism_2)$.

\begin{definition}[Length-minimal solution]
\label{def: length-minimal}
A solution $(\mysolution_1,\morphism_1)$ (of an equation $(U,V)$)
is \emph{length-minimal} if for every other solution $(\mysolution_2,\morphism_2)$ of this equation either
\begin{itemize}
	\item $|\mysolution_1(U)| < |\mysolution_2(U)|$ \emph{or}
	\item $|\mysolution_1(U)| = |\mysolution_2(U)|$ and $\weight(\mysolution_1,\morphism_1) \leq \weight(\mysolution_2,\morphism_2)$.	
\end{itemize}
\end{definition}

Note that our definition refines the usual one, in the sense that
if a solution is length-minimal according to Definition~\ref{def: length-minimal},
it is also length-minimal in the traditional sense, but not the other way around.
Furthermore, for the input equation the $\morphism_1$ is constant on all constants in the solution,
so our refined notion coincides with the traditional one.

\subsection{Equations over $\rho$-closed alphabet}
We can show that the length of the length-minimal solution shortens by a constant fraction in each run of \algonephase.

\begin{lemma}
\label{lem: length-minimal drops}
Let the original alphabet of the problem be a $\rho$-closed $A$.
Suppose that a strictly proper equation $(U,V)$ over an alphabet of constants $\letters \supseteq A$
has a length-minimal solution $(\mysolution, \morphism)$.
Then $(\mysolution, \morphism)$ is simple and for some non-deterministic choices \algonephase{} transforms $(U,V)$
with a $(\mysolution, \morphism)$ into a strictly proper $(U',V')$ with a simple solution $(\mysolution', \morphism')$
such that $|\mysolution'(X)| \leq \frac{2\sol X+1}{3}$ for each variable $X$.
\end{lemma}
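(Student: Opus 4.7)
\medskip

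The plan has three parts: establish simplicity, track the compression against a length-minimal solution, and count. First, to show that $(\mysolution,\morphism)$ is simple, I would argue that a non-simple length-minimal solution contradicts Definition~\ref{def: length-minimal}. Suppose some $\sol X$ contains a constant $c \in \letters' \setminus \letters$; then $c \notin A$, so by convention $|\morphism(c)| \geq 2$ and $\morphism(c) \in A^+$. Since $A$ is $\rho$-closed, there is $a \in A$ with $\rho(a) = \rho(\morphism(c)) = \rho(c)$, and we can choose these replacements involution-compatibly by using $a_P$ and $a_{P^T}$ for $c$ and $\inv c$. Let $\mysolution^*$ be obtained by replacing every occurrence of such $c$ (and $\inv c$) in every $\sol Y$ with the chosen $a$ (resp.\ $\inv a$), and extend $\morphism$ by $\morphism^*(a) := a$. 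Then $\mysolution^*$ is a valid solution with $|\mysolution^*(U)| = |\sol U|$ but strictly smaller weight (each replacement strictly decreases $|\morphism^*(\mysolution^*(Y))|$ from $|\morphism(c)| \geq 2$ to $1$), contradicting length-minimality.

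Next, I would take the nondeterministic choices of $\algonephase$ supplied by Lemma~\ref{lem: main2} for this (simple) length-minimal $(\mysolution,\morphism)$: correct guesses of crossing/non-crossing status, of popped $ab$-prefixes/suffixes, of idempotent powers, and of the partition of arithmetic expressions. These transform $(U,V)$ with $(\mysolution,\morphism)$ to $(U',V')$ with a solution $(\mysolution',\morphism')$. Lemma~\ref{lem: size bound} guarantees $(U',V')$ is strictly proper. Moreover, inspection of $\algncr$ and $\algcr$ shows that when $\mysolution$ is simple, the new $\mysolution'$ uses only constants from $\letters' = \letters \cup \{c_{e_i}\}_i$, so $\mysolution'$ is also simple (no simplification step needed).

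The crux is the bound $|\mysolution'(X)| \leq \frac{2|\sol X|+1}{3}$. The key claim is that every pair of adjacent constants $ab$ appearing inside $\sol X$ lies in the list $P \cup P'$ processed by $\algonephase$ and hence is compressed. Suppose for contradiction that $ab$ is neither a factor of $(U,V)$ nor crossing in $\mysolution$. Then every maximal $ab$-factor in $\sol U$ lies entirely inside some $\sol Y$, so I can replace each such factor $s$ by a single constant $a_{\rho(s)} \in A$ (available by the $\rho$-closure of $A$), extending $\morphism$ by $\morphism^*(a_{\rho(s)}) := s$; the resulting $\mysolution^*$ is a valid solution with $|\mysolution^*(U)| < |\sol U|$, again contradicting length-minimality. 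Granted the claim, I bound $|\mysolution'(X)|$ by mimicking the proof of Lemma~\ref{lem: size bound}: after accounting for $\algcr$'s popping (which only removes letters from $\sol X$, so tightens the bound), if $u$ is the number of letters of $\sol X$ that survive untouched in $\mysolution'(X)$ and $b$ the number of compressed blocks, no two untouched letters are adjacent, so $b \geq u-1$; since each block has length $\geq 2$, $k - u \geq 2b$, giving $u \leq (k+2)/3$ and hence $|\mysolution'(X)| = u + b \leq (k+u)/2 \leq (2k+1)/3$.

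The delicate point, and main obstacle, is ensuring the shorter alternative solution $\mysolution^*$ constructed in the length-minimality arguments genuinely lives over $\letters$ (so that the length/weight comparison against the length-minimal $\mysolution$ is meaningful); this is exactly where the hypothesis that $A$ is $\rho$-closed is indispensable, both for the simplicity argument (getting $a \in A$ with the right transition) and for compressing a hypothetical non-crossing $ab$-factor to a single letter of $A$. A secondary subtlety is bookkeeping for $\algcr$: the popped $ab$-prefix/suffix of $\sol X$ is removed from $\mysolution'(X)$ but becomes part of a larger $ab$-factor compressed in $(U',V')$, so popped letters simply do not contribute to $|\mysolution'(X)|$, which only strengthens the counting bound and requires no separate treatment.
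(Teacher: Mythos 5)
Your high-level plan matches the paper's: show simplicity, take the nondeterministic choices from Lemma~\ref{lem: main2}, argue that no two adjacent constants survive uncompressed, and count. The arithmetic at the end is fine. But two of your intermediate claims are incorrect as stated, and these are exactly the places where the paper's proof does the real work.

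First, your claim that $\mysolution'$ is automatically simple because ``the new $\mysolution'$ uses only constants from $\letters' = \letters \cup \{c_{e_i}\}_i$'' confuses two different alphabets. The alphabet of $(U',V')$ is, by the paper's definition, the smallest alphabet containing $A$ and the constants actually occurring in $U'V'\ov{U'V'}$; it can be a \emph{strict} subset of $\letters \cup \{c_{e_i}\}_i$, because a constant $b \in \letters \setminus A$ can have all its equation-occurrences absorbed into compressed factors while it still survives inside some $\mysolution'(X)$. That is precisely a non-simple solution. The paper does not get simplicity for free: it proves, by contradiction, that \emph{all} intermediate $\mysolution_i'$ are simple, by pushing a hypothetical extra constant $b$ back through the composition $\phi = \phi_1 \circ \cdots \circ \phi_i$ to produce a solution of $(U_0,V_0)$ that is strictly shorter than $\mysolution_0$ (because the inverse operators only append/prepend fixed strings and replace fixed letters by fixed longer words), contradicting length-minimality.

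Second, and more seriously, your key claim that ``every pair of adjacent constants $ab$ appearing inside $\sol X$ lies in $P\cup P'$ and hence is compressed'' establishes the length-minimality obstruction only at step~$0$, and membership in $P$ does not by itself force compression. During the while loop an element $ab$ of $P$ is discarded once it ceases to be a factor of the \emph{current} equation $(U_i,V_i)$; it can then persist inside variable substitutions. Later, it lands in $P'$ only if it is crossing at the moment $P'$ is formed. What the paper actually shows is that if $ab$ is a factor of $\mysolution_i(U_i)$ at \emph{any} step $i$, and is neither a factor of $(U_i,V_i)$ nor crossing for $\mysolution_i$, then replacing all such $ab$-factors by $a_{\rho(s)} \in A$ and pulling the result back through $\phi_1 \circ \cdots \circ \phi_i$ contradicts the length-minimality of $\mysolution_0$. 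This intermediate-step version is what is applied at $(U_\ell,V_\ell)$ (the moment $P'$ is constructed) to conclude that any pair of adjacent uncompressed constants in $\mysolution_k(X)$ would have to be in $P'$, and hence compressed. Your argument only covers $i=0$ and therefore does not rule out a pair that becomes purely ``internal'' (to a variable substitution) after the first compression step but before $P'$ is formed. Both gaps share the same missing idea: the ability to transport a length-minimality contradiction from an intermediate solution $\mysolution_i$ back to $\mysolution_0$ via the composed inverse operators, using the fact that those operators act uniformly on substitutions.
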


Note that by definition a proper equation over $\letters$ has a homomorphism $\morphism : B \mapsto A^+$
that is compatible with $\rho$, \ie $\rho(b) = \rho(\morphism(b))$,
in particular $\rho(B) \subseteq \rho(A^+)$.

\begin{proof}
Consider a length-minimal solution $(\mysolution, \morphism)$ and an application of \algonephase{} on $(U, V)$.
According to Lemma~\ref{lem: main2} for appropriate non-deterministic choices made by \algonephase{}
we obtain a sequence of equations $(U, V) = (U_0, V_0)$, $(U_1, V_1)$, \ldots, $(U_k, V_k)$,
operators $\phi_1$, $\phi_2$, \ldots, $\phi_k$ and solutions $(\mysolution,\morphism) = (\mysolution_0,\morphism_0)$, \ldots, $(\mysolution_k, \morphism_k)$
such that
	$\mysolution_i = \phi_{i+1}[\mysolution_{i+1}']$
	and $(\mysolution_{i+1}, \morphism_{i+1})$ is a simplification of $(\mysolution_{i+1}, \morphism_{i+1})$.
Suppose first that $(\mysolution_0, \morphism_0)$ is not simple.
Then it uses a letter $b$ that is not present in $(U_0, V_0)$ and $\morphism_0(b) \in A^{\geqslant 2}$, let $P = \rho(b)$.
Consider a substitution $\mysolution_0'$ obtained from $\mysolution_0$ by replacing each $b$ by $a_P$
(and each $\inv b$ by $a_{P^T} = \inv {a_P}$).
As $B$ does not occur in $(U_0, V_0)$, $\mysolution_0'(U_0) = \mysolution_0'(V_0)$,
moreover $\rho(\mysolution_0'(X)) = \rho(\mysolution_0(X))$.
Since the alphabet of $\mysolution_0'$ is a subset of the alphabet of $\mysolution_0$,
we conclude that $(\mysolution_0',\morphism_0)$ is a solution of $(U_0, V_0)$.
Clearly $|\mysolution_0'(U_0)| = |\mysolution_0(U_0)|$.
Additionally, $|\mysolution_0'(X)| \leq |\mysolution_0(X)| $ and the inequality is strict if $\mysolution_0(X)$ contains $b$ or $\inv b$.
As for some $X$ the $\mysolution_0(X)$ indeed contains $b$ or $\inv b$, we conclude that $\weight(\mysolution_0',\morphism_0) < \weight(\mysolution_0',\morphism_0)$,
which contradicts the length-minimality of $(\mysolution_0, \morphism_0)$.

In a similar fashion we want to show that all $(\mysolution_1',\morphism_1)$, \ldots, $(\mysolution_k', \morphism_k)$ are simple.
For the sake of contradiction assume that this is not the case and 
take the smallest $i$ for which $(\mysolution_i', \morphism_i)$ is not simple.
Then in particular $(\mysolution_i, \morphism_i) \neq (\mysolution_i', \morphism_i)$
and $(\mysolution_i', \morphism_i)$ uses a constant $b$ that does not occur in the alphabet of $(U_i, V_i)$, let $P = \rho(b)$.
By discussion between the lemma and the proof, there is $a_P \in A$ such that $\rho(a_P) = P$.
Create $(\mysolution_i'', \morphism_i)$ by replacing each $b$ and $\inv b$ in any \sol X by $a_P$ and $\inv{a_P}$.
As in the case of $(\mysolution_0',\morphism_0)$ it is easy to verify that
$(\mysolution_i'',\morphism_i)$ is a solution of $(U_i, V_i)$.
Now, by definition of $\mysolution_i$ and $\phi_1, \ldots, \phi_i$
$$
\mysolution_0 = \phi_{1} \circ \phi_2 \circ \cdots \circ \phi_i[\mysolution_i'].
$$
and denote by $\phi$ the $\phi_{1} \circ \phi_2 \circ \cdots \circ \phi_i$.
Consider an action of any of $\phi_{1}$, $\phi_2$, \ldots, $\phi_i$ on some substitution.
It may append and prepend letters to \sol X (independently of $X$ and of the substitution)
and it may replace some letters (outside of $A$) by longer factors, again independently of $X$ and of the substitution.
Thus also their composition $\phi$ has this property.
Consider now 
$$
\mysolution_0'' = \phi[\mysolution_i''] .
$$
We intend to show that $(\mysolution_0'',\morphism_0)$ is a solution of $(U_0, V_0)$
and that it contradicts the length-minimality of $(\mysolution_0, \morphism_0)$.
We need to show that $\morphism_0$ is defined on any letter assigned by $\mysolution_0''$ outside $A$.
But if this was the case, the same letter would be used also by $\mysolution_0$:
if this letter was used by $\mysolution_i''$ and not replaced by $\phi$ then the same applies to $\mysolution_i'$.
If it was appended or prepended, then the same letter is appended or prepended to $\mysolution_i'$.
It is left to show that $(\mysolution_0, \morphism_0)$ is not length-minimal.

Firstly, we show that $|\mysolution_0(U_0)| \geq |\mysolution_0''(U_0)|$.
Consider that $\phi[\mysolution_i']$ and $\phi[\mysolution_i'']$ and their action of $X$.
Then $\phi$ prepends and appends the same strings to $\mysolution_i(X)$ and $\mysolution_i''(X)$,
additionally, it replaces letters (outside $A$) in $\mysolution_i(X)$ and $\mysolution_i''(X)$ by the same strings.
As $\mysolution_i''(X)$ is obtained from $\mysolution_i(X)$ by replacing $b$ and $\inv b$ by $a_P, \inv{a_P} \in A$,
so $\mysolution_0''(X)$ and $\mysolution_0(X)$ differ only in strings that replace $b$ and $\inv b$:
in the former those are $a_P$ and $\inv{a_P}$ while in the latter those are some strings (of lengths at least $2$).
Thus $|\mysolution_0''(X)| \leq |\mysolution_0(X)|$ and the inequality is strict when $b$ or $\inv b$ is in $\mysolution_i'(X)$.
As this constant occurs in at least one $\mysolution_i'(X)$, we conclude that $\mysolution_0$ is not length-minimal.

In exactly the same way we can show that if for some $i$ the $ab$ is a factor in $\mysolution_i(U_i)$ then some $ab$-factor occurs in $(U_i, V_i)$
or is crossing for $\mysolution_i$: otherwise we could replace each $ab$-factor $s$ in any \sol X with $a_{\rho(s)}$,
the obtained substitution $\mysolution_i''$ (together with $\morphism_i$)
is a solution and $\phi_{1} \circ \phi_2 \circ \cdots \circ \phi_i[\mysolution_i'']$ (together with $\morphism_0$) is a solution of $(U_0, V_0)$
and this solution contradicts the length-minimality of $\mysolution_0 = \phi_{1} \circ \phi_2 \circ \cdots \circ \phi_i[\mysolution_i]$.

We move to the main part of the proof.
Firstly, assume that \algonephase{} correctly guesses the set of crossing factors at the very beginning.
Let $(U_\ell, V_\ell)$ be the equation obtained when \algonephase{} (correctly) decides that no pair from $P$ is non-crossing
and afterwards it correctly lists crossing factors (and begins to compress them).
Consider $\mysolution_k(X)$. As $\mysolution_0(X) = \phi_{1} \circ \phi_2 \circ \cdots \circ \phi_k[\mysolution_k]$
and each of the operator can append constants, prepend constants and replace constants by (strictly) longer words,
we know that each constant in $\mysolution_k(X)$  corresponds either to a single (``uncompressed'') constant from
$\mysolution_0(U_0)$ or to a longer word inside $\mysolution_0(U_0)$ (compressed into this constant).
We claim that in $\mysolution_k(X)$ there are no two consecutive constants that are uncompressed.
Using this we can easily show the claim: each uncompressed constant in $\mysolution_k(U_k)$, except perhaps the last,
is followed by a constant representing at least two constants in the initial word.

Suppose that $ab$ is a factor in $\mysolution_k(U_k)$ and that they both are uncompressed.
Thus the corresponding $ab$ is present in $\mysolution_\ell(U_\ell)$
and so be earlier claim $ab$ either occurs in $(U_\ell, V_\ell)$ or is a crossing factor,
in either case it will be in $P'$.
Then \algonephase{} performs the $ab$ compression, contradiction.
\qed
\end{proof}

\subsubsection{Running time for equations over groups}
As a consequence, we can verify the  satisfiability of a word equation in free groups (without rational constraints)
in (nondeterministic) time $\npoly(\log N, n)$, where $N$ is the size of the length minimal solution.

\begin{theorem}
The satisfiability of word equation over free group (without rational constraints) can be verified in \PSPACE{}
and at the same time $\npoly(\log N , n )$ time, where $N$ is the size of the length-minimal solution 
of this equation.
\end{theorem}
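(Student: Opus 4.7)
The plan is to combine three ingredients already proved in the paper: Proposition~\ref{prop:dghftom05} (polynomial-time reduction from free groups to free monoids with involution and regular constraints), Lemma~\ref{lem: length-minimal drops} (geometric shrinking of the length-minimal solution under one run of \algonephase), and Lemma~\ref{lem: size bound} (polynomial space throughout the transformation). The oracle from Lemma~\ref{lem: is P in closure} handles the implicit use of the $\rho$-closed alphabet.

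First I would apply Proposition~\ref{prop:dghftom05} to obtain, in polynomial time, a strictly proper monoid equation $(U_0, V_0)$ over $A = \Gam \cup \oi \Gam$ with regular constraints defined by some $\rho$. The bijection $\sig \mapsto \phi \circ \sig$ between monoid and group solutions lets us identify a length-minimal group solution of reduced length $N$ with a length-minimal monoid solution of comparable length $O(N)$, by taking reduced-word representatives (for each $XYZ = 1$ block the decomposition $X = P\ov Q$, $Y = Q\ov R$, $Z = R\ov P$ satisfies $|P| + |Q| + |R| = (|X|+|Y|+|Z|)/2$ when $P,Q,R$ are reduced). By Lemma~\ref{lem: A iff cl(A)} we may pass to the $\rho$-closure $\cl(A)$ without changing satisfiability, testing membership in $\cl(A)$ on demand via Lemma~\ref{lem: is P in closure}.

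Next I would iterate \algonephase{}: starting from $(U_0, V_0)$, we produce a sequence $(U_0, V_0), (U_1, V_1), \ldots$ by non-deterministically running \algonephase{} at each step. By Lemma~\ref{lem: length-minimal drops}, for suitable non-deterministic choices the length-minimal solution $(\mysolution_{i+1}, \morphism_{i+1})$ of $(U_{i+1}, V_{i+1})$ satisfies $|\mysolution_{i+1}(X)| \le (2|\mysolution_i(X)|+1)/3$ for every variable $X$. Summing over occurrences, this yields $|\mysolution_{i+1}(U_{i+1})| \le \tfrac{2}{3}|\mysolution_i(U_i)| + O(n^2)$, where the additive $O(n^2)$ absorbs $|U_{i+1}|$, which is polynomial by Lemma~\ref{lem: size bound}. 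Solving this geometric recurrence, after $k = O(\log N + \log n)$ iterations we reach $|\mysolution_k(U_k)| = O(n^2)$. At that point we may non-deterministically guess the (polynomially bounded) length-minimal solution directly and verify it, or in the extreme case $|U_k| = |V_k| = 1$ use the trivial case analysis from Section~\ref{sec:groas}.

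For the complexity, Lemma~\ref{lem: size bound} guarantees that every $(U_i, V_i)$ stays of polynomial size, so the algorithm operates in polynomial space; non-determinism is eliminated by Savitch's theorem, giving the \PSPACE{} upper bound. The non-deterministic running time is $k \cdot \poly(n) = \npoly(\log N, n)$, as claimed. The main obstacle is the bookkeeping needed to thread a single computation path through $\log N$ successive applications of \algonephase{}: Lemma~\ref{lem: length-minimal drops} only delivers the geometric decay when the non-deterministic choices are consistent with the current length-minimal solution, so one has to argue that such choices exist simultaneously at every level and that the length-minimality and simplicity hypotheses propagate correctly from $(\mysolution_i, \morphism_i)$ to $(\mysolution_{i+1}, \morphism_{i+1})$ across the entire sequence; this amounts to exhibiting a single root-to-leaf path in the solution graph of Section~\ref{sec:groas} along which the refined length-minimality of Definition~\ref{def: length-minimal} is preserved.
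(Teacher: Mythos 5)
Your approach tracks the paper's argument closely: reduce to a monoid equation with involution via Proposition~\ref{prop:dghftom05}, pass to the $\rho$-closure $\cl(A)$ via Lemma~\ref{lem: A iff cl(A)}, and iterate \algonephase{} using Lemma~\ref{lem: length-minimal drops} to shrink the length-minimal solution geometrically until everything is polynomially bounded. The only structural deviation is the accounting: you track the aggregate $|\mysolution_i(U_i)|$ and bail out by guessing once it is $\Ocomp(n^2)$, whereas the paper tracks per-variable lengths until all variables vanish and then performs $\Ocomp(n^2)$ further weight-decreasing rounds. Both are sound, and your explicit worry about threading length-minimality across iterations is a real subtlety the paper glosses over; it is resolved by always switching to the length-minimal solution of the \emph{current} equation, which only makes the tracked quantity smaller than what Lemma~\ref{lem: length-minimal drops} produces.

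The one genuine gap is the running time of the $\cl(A)$ oracle. You invoke Lemma~\ref{lem: is P in closure}, but that lemma offers two bounds: unconditionally \PSPACE{}, or $\poly(|\rho(A^+)|, n)$ time. To obtain the stated $\npoly(\log N , n)$ bound you must use the second, and that requires $|\rho(A^+)|$ to be polynomial in $n$ --- which is exactly why the theorem is restricted to equations \emph{without} rational constraints. Your proof never argues this. After Proposition~\ref{prop:dghftom05} the only constraint present is the freely-reduced-word condition ``no factor $a\inv a$''; the paper explicitly constructs its NFA with polynomially many states (initial, sink, and one state per pair of first/last constants) and observes that the transition image $\rho(A^+)$ therefore has only polynomially many elements: the matrix of a reduced word is determined by its first and last constant, and all non-reduced words map to the single sink element. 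You need to add this observation, or more generally a bound $|\rho(A^+)| = \poly(n)$ for the specific $\rho$ arising from the group reduction. Without it the oracle is only known to lie in \PSPACE{} and the claimed nondeterministic time bound does not follow; with it, each call to the oracle and hence each round of \algonephase{} is $\poly(n)$, completing the argument.
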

\begin{proof}
We reduce the problem in a free group to the corresponding one in a free semigroup, see \prref{prop:dghftom05}.
In this way we introduce regular constraint, and these are the only constraints in the problem.
This constraint says that $a\inv a$ cannot be a factor of $X$, for any $a$.
The NFA for this condition has $|\Gamma|^2 + 2$ states:
\begin{itemize}
	\item sink (all transitions to itself)
	\item initial state 
	\item a state $(a,b)$, where $a$ is the first constant of the word and $b$ the last.
\end{itemize}
The transitions are obvious.
It is easy to see that $\Mn$ have $\Ocomp(|\Gamma|)$ elements.
Thus the subprocedure for checking whether $P \in \rho(A^+)$ can be implemented in $\poly(n)$,
see Lemma~\ref{lem: is P in closure}.

Concerning the problem in the free monoid, we first extend the alphabet $A$ to $\cl(A)$.
By Lemma~\ref{lem: A iff cl(A)} those problems are equisatisfiable.
By Lemma~\ref{lem: length-minimal drops} the length of the substitution for a variable drops by a constant fraction after each application
of \algonephase{} (for appropriate non-deterministic choices),
so there are only $\Ocomp(\log N)$ application of this procedure till all variables are removed.
As there are no variables in the equation and the weight (\ie length of the equation) decreases after each application of \algonephase,
afterwards there are only $\Ocomp(n^2)$ such applications.
Clearly each such an application takes time polynomial in $n$ (as all equation are proper by Lemma~\ref{lem: size bound}).
\qed
\end{proof}

\section{Applications}
Using the results above we obtain the following theorem: 
\begin{theorem}
It can be decided in \PSPACE{} whether the input system with rational constraints has a finite number of solutions.
\end{theorem}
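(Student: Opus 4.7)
The plan is to exploit the graph representation of all solutions $\mathcal{G}$ furnished by Corollary~\ref{cor:maingroupthm}. Although $\mathcal{G}$ has exponentially many vertices, each vertex is a proper equation of polynomial-size description, so $\mathcal{G}$ can be navigated on the fly in \PSPACE{} without ever being stored explicitly. Each edge $e$ carries a family $\Phi_e$ of inverse operators, parametrised by the non-negative-integer solutions of a polynomial-size linear Diophantine system $D_e$ in the sense of Section~\ref{susubsec: operators}; each sink is a strictly proper equation $(U_k,V_k)$ with $|U_k|=|V_k|=1$ whose base solutions are explicitly enumerated in Section~\ref{sec:groas}.

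The main claim I would establish is that the input system has finitely many solutions if and only if all three of the following hold:
\begin{itemize}
\item \emph{(Acyc)} the restriction of $\mathcal{G}$ to vertices lying on a source-to-sink path is acyclic;
\item \emph{(Fin-Dio)} for every edge $e$ on a source-to-sink path, the set of solutions of $D_e$ over $\N$ is finite;
\item \emph{(Fin-Base)} for every reachable sink of the form $X=Y$ with $\rho(X)=\rho(Y)$, the regular language $\rho^{-1}(\rho(X))$ is finite.
\end{itemize}
Each of these is decidable in \PSPACE. For (Acyc), one non-deterministically guesses a vertex on the putative cycle and, using Savitch's theorem, checks that it is reachable from the source, can reach a sink, and lies on a non-trivial closed walk. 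For (Fin-Dio), one guesses an offending edge together with a source-to-sink path through it and invokes Question~2 of Proposition~\ref{prop:lDs}, which gives an \NP{} test for infinitude and hence fits inside \PSPACE. (Fin-Base) amounts to enumerating reachable sinks in \PSPACE{} and testing finiteness of the induced regular language in polynomial time from the NFA associated with $\rho$.

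The easy direction of the characterisation is a counting argument: in an acyclic finite graph there are finitely many source-to-sink paths, along each such path every edge carries finitely many operators, and every sink supplies finitely many base solutions, so composition along Lemma~\ref{lem: main} exhausts the solution set with a finite collection. The hard part will be the converse, where I must rule out the possibility that infinitely many triples \emph{(path, operator-choice, base-solution)} collapse to only finitely many distinct solutions of the input equation. My intended tool is length tracking: by the structure of the operators in Section~\ref{susubsec: operators}, any operator can only prepend some $s_X[\ell_X]$, append some $s_X'[r_X]$, or expand fresh constants $c_{E_i}$ into $E_i[\{\ell_X,r_X\}]$, all of lengths depending linearly with non-negative coefficients on the parameters, so composing operators along a path can only preserve or amplify length differences. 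Consequently, failing (Fin-Base) directly produces arbitrarily long base solutions and hence arbitrarily long final solutions; failing (Fin-Dio) yields operators whose intermediate images have unboundedly many distinct lengths, which propagate to the source; and failing (Acyc) allows arbitrarily many cycle traversals, each of which strictly increases the weight of the associated intermediate solution by the strict inequality in Lemma~\ref{lem: main} and therefore strictly increases the length of the induced source solution. Since words of any bounded length form a finite set, each violation produces infinitely many distinct solutions, and the equivalence (and thus the \PSPACE{} decision procedure) follows.
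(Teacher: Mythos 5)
Your three conditions (Acyc), (Fin-Dio), (Fin-Base) and the corresponding \PSPACE{} tests are exactly the ones used in the paper's proof, and the counting argument for the easy direction (all three hold $\Rightarrow$ finitely many solutions) is likewise the same; what you add is a length/weight-tracking sketch of the converse implication, which the paper asserts without spelling out. The one inaccuracy is the claim that finiteness of $\rho^{-1}(\rho(X))$ can be tested in polynomial time: an automaton recognising that language has its state set embedded in $\Mn$, hence may be exponential in $m$, so the test is really a \PSPACE{} guess-and-verify pumping check (e.g.\ via Lemma~\ref{lem: is P in closure}) — which is still all the theorem needs.
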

\begin{proof} To find out whether the equation has infinite number of 
solutions it is enough to find a path from a start node of the graph
to a final node which either
\begin{itemize}
	\item contains a loop \emph{or}
	\item one of the edges of the path is labeled by a linear system of equations having infinite number of solutions \emph{or}
	\item the final node has infinite number of solutions, which means that it is of the form $(X, Y)$, $\rho(X) = \rho(Y)$
	and there are infinitely many words $w$ such that $\rho(w) = \rho(X)$.
\end{itemize}
The first condition is a simple reachability in a graph, which can be performed in \NPSPACE,
as the description of the nodes and edges are of polynomial size.
The second condition can be verified in \NP, see Proposition~\ref{prop:lDs}.
The last condition can be easily verified in \PSPACE. Since \NPSPACE{} contains \NP{} and is equal to \PSPACE,
the search of such a path can be done in \PSPACE.

Now if none of those conditions is satisfied, the graph representation of all solutions is a finite DAG,
for each edge the family of inverse operators is finite and each final node has finitely many solutions,
which implies that there are only finitely many solutions in total.
\qed
\end{proof}

\newcommand{\Ch}{Ch}
\newcommand{\Yu}{Yu}


\begin{thebibliography}{10}

\bibitem{ben69}
M.~Benois.
\newblock Parties rationelles du groupe libre.
\newblock {\em C. R. Acad. Sci. Paris, S{\'e}r. A}, 269:1188--1190, 1969.

\bibitem{DahmaniGui10}
F.~Dahmani and V.~Guirardel.
\newblock Foliations for solving equations in groups: free, virtually free and
  hyperbolic groups.
\newblock {\em J. of Topology}, 3:343--404, 2010.

\bibitem{dickson1913}
L.~E. Dickson.
\newblock Finiteness of the odd perfect and primitive abundant numbers with
  {$n$} distinct prime factors.
\newblock {\em American Journal of Mathematics}, 35(4):413--422, 1913.

\bibitem{dgh05IC}
V.~Diekert, C.~Guti{\'e}rrez, and {\Ch}.~Hagenah.
\newblock The existential theory of equations with rational constraints in free
  groups is {PSPACE}-complete.
\newblock {\em Information and Computation}, 202:105--140, 2005.
\newblock Conference version in STACS 2001, LNCS 2010, 170--182, 2004.

\bibitem{DiekertLohrey08}
V.~Diekert and M.~Lohrey.
\newblock Word equations over graph products.
\newblock {\em IJAC}, 18(3):493--533, 2008.

\bibitem{dmm99tcs}
V.~Diekert, {\Yu}.~Matiyasevich, and A.~Muscholl.
\newblock Solving word equations modulo partial commutations.
\newblock {\em Theoretical Computer Science}, 224:215--235, 1999.
\newblock Special issue of LFCS'97.

\bibitem{dm06}
V.~Diekert and A.~Muscholl.
\newblock Solvability of equations in free partially commutative groups is
  decidable.
\newblock {\em International Journal of Algebra and Computation},
  16:1047--1070, 2006.
\newblock Journal version of ICALP 2001, 543--554, LNCS 2076.

\bibitem{dur95}
V.~G. Durnev.
\newblock Undecidability of the positive $\forall\exists^{3}$-theory of a free
  semi-group.
\newblock {\em Sibirsky Matematicheskie Jurnal}, 36(5):1067--1080, 1995.
\newblock In Russian; English translation: {\em Sib. Math. J., 36\/}(5),
  917--929, 1995.

\bibitem{eil74}
S.~Eilenberg.
\newblock {\em Automata, Languages, and Machines}, volume~A.
\newblock Academic Press, New York and London, 1974.

\bibitem{gut98focs}
C.~Guti{\'e}rrez.
\newblock Satisfiability of word equations with constants is in exponential
  space.
\newblock In {\em Proc. 39th Ann.~Symp. on Foundations of Computer Science
  (FOCS'98), Los Alamitos (California)}, pages 112--119. IEEE Computer Society
  Press, 1998.

\bibitem{gut2000stoc}
C.~Guti{\'e}rrez.
\newblock Satisfiability of equations in free groups is in {PSPACE}.
\newblock In {\em Proceedings 32nd Annual ACM Symposium on Theory of Computing,
  STOC'2000}, pages 21--27. ACM Press, 2000.

\bibitem{HU}
J.~E. {Hopcroft} and J.~D. {Ulman}.
\newblock {\em Introduction to Automata Theory, Languages and Computation}.
\newblock Addison-Wesley, 1979.

\bibitem{IlPl}
L.~Ilie and W.~Plandowski.
\newblock Two-variable word equations.
\newblock {\em Theoretical Informatics and Applications}, 34:467--501, 2000.

\bibitem{wordequations}
A.~Je\.z.
\newblock {Recompression: a simple and powerful technique for word equations}.
\newblock In N.~Portier and T.~Wilke, editors, {\em STACS}, volume~20 of {\em
  LIPIcs}, pages 233--244, Dagstuhl, Germany, 2013. Schloss
  Dagstuhl--Leibniz-Zentrum fuer Informatik.

\bibitem{KMII98}
O.~{Kharlampovich} and A.~{Myasnikov}.
\newblock Irreducible affine varieties over a free group. {II}: {S}ystems in
  triangular quasi-quadratic form and description of residually free groups.
\newblock {\em J. of Algebra}, 200(2):517--570, 1998.

\bibitem{KMIV06}
O.~{Kharlampovich} and A.~{Myasnikov}.
\newblock Elementary theory of free non-abelian groups.
\newblock {\em J. of Algebra}, 302:451--552, 2006.

\bibitem{kp96}
A.~Ko{\'s}cielski and L.~Pacholski.
\newblock Complexity of {M}akanin's algorithm.
\newblock {\em Journal of the Association for Computing Machinery},
  43(4):670--684, 1996.

\bibitem{mak77}
G.~S. Makanin.
\newblock The problem of solvability of equations in a free semigroup.
\newblock {\em Math. Sbornik}, 103:147--236, 1977.
\newblock English transl. in Math. USSR Sbornik 32 (1977).

\bibitem{mak82}
G.~S. Makanin.
\newblock Equations in a free group.
\newblock {\em Izv. Akad. Nauk SSR}, Ser. Math. 46:1199--1273, 1983.
\newblock English transl. in Math. USSR Izv. 21 (1983).

\bibitem{mak84}
G.~S. Makanin.
\newblock Decidability of the universal and positive theories of a free group.
\newblock {\em Izv. Akad. Nauk SSSR}, Ser. Mat. 48:735--749, 1984.
\newblock In Russian; English translation in: {\em Math.~USSR Izvestija, 25},
  75--88, 1985.

\bibitem{mat93}
{\Yu}.~Matiyasevich.
\newblock {\em Hilbert's Tenth Problem}.
\newblock MIT Press, Cambridge, Massachusetts, 1993.

\bibitem{mat97lfcs}
{\Yu}.~Matiyasevich.
\newblock Some decision problems for traces.
\newblock In S.~Adian and A.~Nerode, editors, {\em Proceedings of the {4th}
  International Symposium on Logical Foundations of Computer Science (LFCS'97),
  Yaroslavl, Russia, July 6--12, 1997}, volume 1234 of {\em Lecture Notes in
  Computer Science}, pages 248--257, Heidelberg, 1997. Springer-Verlag.
\newblock Invited lecture.

\bibitem{Pl2}
W.~Plandowski.
\newblock Satisfiability of word equations is in {NEXPTIME}.
\newblock In {\em Proceedings of the Symposium on the Theory of Computing
  STOC'99}, pages 721--725. ACM Press, 1999.

\bibitem{pla04jacm}
W.~Plandowski.
\newblock Satisfiability of word equations with constants is in {PSPACE}.
\newblock {\em Journal of the Association for Computing Machinery},
  51:483--496, 2004.

\bibitem{Pl5}
W.~Plandowski.
\newblock An efficient algorithm for solving word equations.
\newblock In {\em Proceedings of the 38th Annual Symposium on Theory of
  Computing STOC'06}, pages 467--476. ACM Press, 2006.

\bibitem{Pl6}
W.~Plandowski.
\newblock personal communication, 2014.

\bibitem{pr98icalp}
W.~Plandowski and W.~Rytter.
\newblock Application of {L}empel-{Z}iv encodings to the solution of word
  equations.
\newblock In K.~G. Larsen et~al., editors, {\em Proc. 25th International
  Colloquium Automata, Languages and Programming (ICALP'98), Aalborg (Denmark),
  1998}, volume 1443 of {\em Lecture Notes in Computer Science}, pages
  731--742, Heidelberg, 1998. Springer-Verlag.

\bibitem{raz87}
A.~A. Razborov.
\newblock {\em On Systems of Equations in Free Groups}.
\newblock PhD thesis, Steklov Institute of Mathematics, 1987.
\newblock In Russian.

\bibitem{raz93}
A.~A. Razborov.
\newblock On systems of equations in free groups.
\newblock In {\em Combinatorial and Geometric Group Theory}, pages 269--283.
  Cambridge University Press, 1994.

\bibitem{rs95}
E.~Rips and Z.~Sela.
\newblock Canonical representatives and equations in hyperbolic groups.
\newblock {\em Inventiones Mathematicae}, 120:489--512, 1995.

\bibitem{sch91}
K.~U. Schulz.
\newblock {M}akanin's algorithm for word equations --- {T}wo improvements and a
  generalization.
\newblock In K.~U. Schulz, editor, {\em Word Equations and Related Topics},
  volume 572 of {\em Lecture Notes in Computer Science}, pages 85--150,
  Heidelberg, 1991. Springer-Verlag.

\end{thebibliography}
\end{document}